\title{The complexity of soundness in workflow nets}
\author{Michael Blondin}
\affiliation{Universit\'{e} de Sherbrooke, Canada}
\email{michael.blondin@usherbrooke.ca}
\author{Filip Mazowiecki}
\affiliation{Max Planck Institute for Software Systems, Germany}
\email{filipm@mpi-sws.org}
\author{Philip Offtermatt}
\affiliation{Max Planck Institute for Software Systems, Germany}
\affiliation{Universit\'{e} de Sherbrooke, Canada}
\email{philip.offtermatt@usherbrooke.ca}
\keywords{Workflow nets, Petri nets, soundness, generalised soundness, structural soundness, complexity}
\begin{document}

\begin{abstract}
  Workflow nets are a popular variant of Petri nets that allow for algorithmic formal
  analysis of business processes. The central decision problems
  concerning workflow nets deal with soundness, where the initial and final configurations are specified.
  Intuitively, soundness
  states that from every reachable configuration one can reach
  the final configuration. We settle the widely open complexity of
  the three main variants of soundness: classical, structural and
  generalised soundness. The first two are EXPSPACE-complete, and,
  surprisingly, the latter is PSPACE-complete, thus computationally
  simpler.
\end{abstract}

\maketitle

\section{Introduction}
\label{sec:introduction}
\emph{Workflow nets} are a formalism that allows for the modeling of
business processes.
Specifically, they allow to formally represent workflow procedures in Workflow Management Systems (WFMSs)
(see \eg \cite[Section~4]{van1998application}, where Figure~6 shows a workflow net for the processing of complaints; and \cite[Section~3]{AL97} for details on modeling procedures).
Such a mathematical
representation enables the algorithmic formal analysis of their
behaviour. This is particularly relevant for large organisations that
seek to manage the workflow of complex business processes. Such
challenges have received, and continue to receive, intense academic
attention, \eg\ through the foundations track of the Business Process
Management Conference (BPM), and via a discipline coined
as \emph{process mining} and pioneered prolifically by Wil van der
Aalst\footnote{See \url{http://www.processmining.org}.}. In
particular, many tools, such as those integrated in the ProM
framework~\cite{DMVWA05}, can extract events from logs, \eg\ of
enterprise resource planning (ERP) systems, from which they synthesize
workflow nets (and other models) to be formally analyzed
(see \cite{vdAS11} for a book on the topic).

More formally, workflow nets form a subset of (standard) Petri
nets. They consist of \emph{places} that can contain resources (called
\emph{tokens}) which can be consumed and produced via \emph{transitions} in a
nondeterministic and concurrent fashion. Two designated places, namely
the \emph{initial place} $\initial$ and the \emph{final place}
$\output$, respectively model the initialisation and termination of a
business process. No token can be produced in the initial place, and
no token can be consumed from the final place.

A central property studied since the inception of workflow nets is
\emph{$1$-soundness}~\cite{AL97, van1998application}. Informally, quoting~\cite{AL97}, it states that
``{\it For any case, the procedure will terminate eventually [...]}''. More formally,
from the configuration with a single token in the initial place
$\initial$, every reachable configuration can reach the configuration
with a single token in the final place $\output$. For readers familiar
with computation temporal logic (CTL), $1$-soundness can be loosely
rephrased as $\initial \models \forall \mathsf{G}\, \exists
\mathsf{F}\, \output$. More generally, $k$-soundness states the same
but for $k$ tokens, \ie\ $(k \cdot \initial) \models \forall
\mathsf{G}\, \exists \mathsf{F}\, (k \cdot \output)$.

\paragraph{Classical soundness.}

Several variants of soundness have been considered in the literature
(see~\cite{AalstHHSVVW11} for a survey). The best-known
is \emph{classical soundness}. It states that a workflow net is
$1$-sound and that each transition is meaningful,
\ie\ each transition can be fired in at least one execution (often
called \emph{quasi-liveness}). It is well-known that deciding
classical soundness amounts to checking boundedness and liveness of a
slightly modified net. In particular, this means that classical
soundness is decidable since boundedness and liveness are decidable
problems. However, to the best of our knowledge, the (exact)
complexity of classical soundness remains widely open. It has been
suggested that classical soundness is EXPSPACE-hard. For example, the
author of~\cite{V99} mentions that ``\emph{IO-soundness is decidable
but also EXPSPACE-hard (\cite{V96})}'', yet~\cite{V96} merely states
the following:
\begin{quote}
  \emph{[I]t may be intractable to decide soundness. (For arbitrary
    [workflow]-nets liveness and boundedness are decidable but also
    EXPSPACE-hard [...])}.
\end{quote}
Furthermore, \cite[p.~38]{van1998application} claims that
EXPSPACE-hardness follows from the fact that ``\emph{deciding liveness
and boundedness is EXPSPACE-hard}'', which is attributed
to~\cite{CEP93}. However, \cite{CEP93} only mentions liveness to be
EXPSPACE-hard (which was known prior to~\cite{CEP93}).

The confusion arises from the fact that boundedness and liveness are
\emph{independently} EXPSPACE-hard problems, which suggests that
classical soundness must naturally be at least as hard. However, this
needs not be the case. For example, for a well-studied subclass of
Petri nets, called free-choice nets, testing \emph{simultaneously}
boundedness and liveness has lower complexity than testing both
properties independently\footnote{For free-choice nets: Boundedness is
EXPSPACE-complete since any Petri net can trivially be made
free-choice while preserving its reachability set up to projection;
liveness is coNP-complete~\cite[Thm.~4.28]{DE95}; and testing
liveness \emph{and} boundedness can be done in polynomial
time~\cite[Cor.~6.18]{DE95}.}~\cite{DE95}. Moreover, since liveness is
equivalent to the Petri net reachability problem~\cite{Hack76}, the
only (implicitly) known upper bound is not even primitive
recursive~\cite{LS19}. As a first contribution, we show that classical
soundness and $k$-soundness are in fact both EXPSPACE-hard and in
EXPSPACE, and hence EXPSPACE-complete. The upper bound is derived with
a fortiori surprisingly little effort by invoking known results on
coverability and so-called cyclicity. The hardness result is obtained
by a careful reduction from the reachability problem for reversible Petri nets~\cite{CLM76,MM82}.
There, we exploit subtle known results in
a technically challenging way.

\paragraph{Generalised and structural soundness.}

Among the variants of soundness catalogued by the survey of van der
Aalst et al.~\cite{AalstHHSVVW11}, \emph{generalised
soundness}~\cite[Def.~3]{van2003soundness} is the only fundamentally
distinct property (in particular,
see \cite[Fig.~7]{AalstHHSVVW11}). It asks whether a given workflow
net is $k$-sound \emph{for all} $k \geq 1$.
Generalised soundness, unlike classical soundness, preserves nice properties like composition~\cite{van2003soundness}.
The existential
counterpart of generalised soundness, where ``for all'' is replaced by ``for
some'', is known as \emph{structural
soundness}~\cite{barkaoui1998structural}.

It is a priori not clear whether generalised and structural soundness
are decidable, as the approach for deciding other types of soundness
reasons about $k$-soundness for a given or fixed number
$k$. Nonetheless, both problems have been shown
decidable~\cite{HSV04,TM05}. The two algorithms, and a subsequent
one~\cite{van2007verifying}, rely on Petri net reachability, which has
very recently been shown Ackermann-complete~\cite{LS19,Ler21,CO21}.

As for classical soundness, the computational complexity of
generalised and structural soundness remains open. In fact, we are not
aware of any complexity result. In this work, we prove that
generalised and structural soundness have much lower complexity than
Petri net reachability: they are respectively PSPACE-complete and
EXPSPACE-complete. In particular, the fact that generalised soundness
is simpler than classical soundness is arguably surprising: positive
instances of both problems require the given workflow net to be
bounded, but for generalised soundness, one can avoid explicitly
checking this EXPSPACE-complete property.

To derive the PSPACE membership, we introduce the notion
of \emph{strong soundness} which is (partly) defined in terms of a
relaxed reachability relation (sometimes known
as \emph{$\Z$-reachability} or \emph{pseudo-reachability}, \eg, see~\cite{Blo20}). Through
results on integer linear programming and bounded vectors reordering,
we prove that $k$-unsoundness of a workflow net must occur for a
``small'' number $k$. Furthermore, we show that it suffices to witness
such a $k$ for so-called $\Z$-bounded nonredundant nets, a more
restrictive property than (standard) boundedness. By building upon
these results, we establish the EXPSPACE membership of structural
soundness, and, in fact, effectively characterise the set of sound
numbers of workflow nets, which settles the open problem of~\cite{TM05}.

The hardness for PSPACE and EXPSPACE are respectively obtained via
reductions from the reachability problem for conservative Petri
nets~\cite{ConservativePN14}, and from $1$-soundness.

\paragraph{Contribution and organisation.}

In summary, we settle, after around two decades, the exact
computational complexity of the central decision problems for workflow
nets. This is achieved in the rest of this work, organised as
follows. In \Cref{sec:preliminaries}, we introduce general notation,
Petri nets, workflow nets and soundness. In \Cref{sec:classical}, we
prove that classical soundness is EXPSPACE-complete. In
\Cref{sec:stuff}, we provide bounds on vector reachability, which in
turn allows us to prove PSPACE-completeness of generalised soundness
(\Cref{sec:generalised}), and EXPSPACE-completeness of structural
soundness (\Cref{sec:structural}). In \Cref{sec:soundk}, we leverage
the previous results to give a characterisation of numbers $k$ for
which a workflow net is $k$-sound. Finally, we conclude in
\Cref{sec:conlusion}. Due to space constraints, some proofs are
deferred to an appendix.

\section{Preliminaries}
\label{sec:preliminaries}
We denote naturals and integers with the usual font: $n \in \N$ and $z \in \Z$. Given $i, j \in \Z$, we write $[i..j]$ for $\{i, i + 1, \ldots, j\}$. We use the bold font for vectors and matrices, \eg\ $\vec{a} = (a_1,\ldots,a_n) \in \Z^n$ and $\mat{A} \in \Z^{m \times n}$. Given $n \in \N$, we write $\vec{n}^d = (n, \ldots, n) \in \N^d$. We omit the dimension $d$ when it is clear from the context, \eg\ $\vec{0}$ denotes the null vector. We write $\vec{a}[i] = a_i$ and $\mat{A}[i, j]$ for matrix entries where $i \in [1..m]$ and $j \in [1..n]$. We write $\xx \le \vec{y}$ if $\xx[i] \le \vec{y}[i]$ holds for all $i \in [1..n]$. We write $\xx < \vec{y}$ if at least one inequality is strict. Given a vector $\vec{a} \in \Z^n$ or a matrix $\mat{A} \in \Z^{m \times n}$, we define the norms $\norm{\vec{a}} \defeq \max_{1 \le i \le n} \abs{\vec{a}[i]}$ and $\norm{\mat{A}} \defeq \max_{1 \le j \le m, 1 \le i \le n} \abs{\mat{A}[j,i]}$.

\subsection{Petri nets}

A \emph{Petri net} is a triple $\pn = (P, T, F)$ such that:
\begin{itemize}
\item $P$ and $T$ are disjoint finite sets whose elements are
  respectively called \emph{places} and \emph{transitions},

\item $F \colon ((P \times T) \cup (T \times P)) \to \N$ is the
  \emph{flow function}.
\end{itemize}

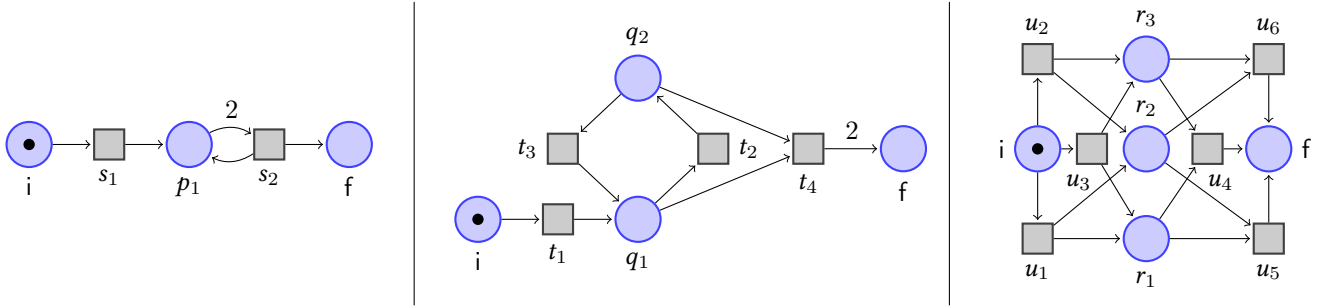
\begin{figure*}[h]
  \centering
  \begin{tabular}{m{5.25cm}|m{6.75cm}|m{4.75cm}}
    \begin{tikzpicture}
      \tikzstyle{place}=[circle,thick,draw=blue!75,fill=blue!20,minimum size=6mm]
      \tikzstyle{transition}=[rectangle,thick,draw=black!75,
        fill=black!20,minimum size=4mm]
      
      \node[place, tokens=1,           label=below:$\initial$] (i) {};
      \node[place, right = 1.5cm of i, label=below:$p_1$]      (p) {};
      \node[place, right = 1.5cm of p, label=below:$\output$]  (o) {};

      \node[transition, label=below:$s_1$] (e1) at ($(i)!0.5!(p)$) {}
      edge[pre]  (i)
      edge[post] (p)
      ;

      \node[transition, label=below:$s_2$] (e2) at ($(p)!0.5!(o)$) {}
      edge[pre,bend right] node[above] {$2$} (p)
      edge[post]            (o)
      edge[post, bend left] (p)
      ;
    \end{tikzpicture} &
    \quad%
    \begin{tikzpicture}
      \tikzstyle{place}=[circle,thick,draw=blue!75,fill=blue!20,minimum size=6mm]
      \tikzstyle{transition}=[rectangle,thick,draw=black!75,
        fill=black!20,minimum size=4mm]
      
      \node[place, tokens=1,             label=below:$\initial$] (i)  {};
      \node[place, right = 1.5cm of i,   label=below:$q_1$]      (p1) {};
      \node[place, above = 1.25cm of p1, label=above:$q_2$]      (p2) {};

      \node[transition, label=below:$t_1$] (e1) at ($(i)!0.5!(p1)$) {}
      edge[pre]  (i)
      edge[post] (p1)
      ;

      \node[transition, label=right:$t_2$] (e3) at ($(p1)!0.5!(p2) + (1,0)$) {}
      edge[pre]  (p1)
      edge[post] (p2)
      ;

      \node[transition, label=left:$t_3$] (e4) at ($(p1)!0.5!(p2) - (1,0)$) {}
      edge[pre]  (p2)
      edge[post] (p1)
      ;

      \node[place, right = 2cm of e3, label=below:$\output$] (o) {};
      \node[transition, label=below:$t_4$] (e2) at ($(e3)!0.5!(o)$) {}
      edge[pre] (p1)
      edge[pre] (p2)
      edge[post] node[above] {$2$} (o)
      ;
    \end{tikzpicture} &
    \quad%
    \begin{tikzpicture}
      \tikzstyle{place}=[circle,thick,draw=blue!75,fill=blue!20,minimum size=6mm]
      \tikzstyle{transition}=[rectangle,thick,draw=black!75,
        fill=black!20,minimum size=4mm]
      
      \node[place, label=left:$\initial$, tokens=1] (i) {};
      \node[place, above right=0.75cm and 1cm of i, label=above:$r_3$] (p1) {};
      \node[place, below right=0.75cm and 1cm of i, label=below:$r_1$] (p2) {};
      \node[place, label=above:$r_2$] (p3) at ($(p1)!0.5!(p2)$) {};
      \node[place, right=1cm of p3, label=right:$\output$]  (o) {};

      \node[transition, label=above:$u_2$] (e1) at (i |- p1) {}
      edge[pre]  (i)
      edge[post] (p1)
      edge[post] (p3)
      ;
      
      \node[transition, label=below:$u_1$] (e2) at (i |- p2) {}
      edge[pre]  (i)
      edge[post] (p2)
      edge[post] (p3)
      ;
      
      \node[transition, label=above:$u_6$] (e3) at (o |- p1) {}
      edge[post] (o)
      edge[pre]  (p1)
      edge[pre]  (p3)
      ;
      \node[transition, label=below:$u_5$] (e4) at (o |- p2) {}
      edge[post] (o)
      edge[pre]  (p2)
      edge[pre]  (p3)
      ;
      
      \node[transition, label={[xshift=5pt]below:$u_4$}] (e5)
      at ($(o)!0.5!(p3)$) {}
      edge[post] (o)
      edge[pre]  (p2)
      edge[pre]  (p1)
      ;
      
      \node[transition, label={[xshift=-5pt]below:$u_3$}]
      (e6) at ($(i)!0.5!(p3)$) {}
      edge[post] (p1)
      edge[post] (p2)
      edge[pre]  (i)
      ;
    \end{tikzpicture}
  \end{tabular}
  \caption{Three workflow nets, each marked with
    $\imarked{1}$.}\label{fig:examples}
\end{figure*}

A \emph{marking} is a vector $\m \colon P \to \N$ where $\m[p]$
indicates how many \emph{tokens} are contained in place $p$. We say
that a transition $t \in T$ is \emph{enabled} in $\m$ if $\m[p] \geq
F[p, t]$. Informally, $F[p, t]$ and $F[t, p]$ respectively correspond
to the amount of tokens to be consumed from and produced in place
$p$. Let $\pre{t}, \post{t} \in \N^p$ respectively denote the vectors
such that $\pre{t}[p] \defeq F[p, t]$ and $\post{t}[p] \defeq F[t,
p]$. Let $\effect{t} \defeq \post{t} - \pre{t}$ denote
the \emph{effect} of $t$. If transition $t$ is enabled in $\m$, then
$t$ may be \emph{fired}, which leads to the marking $\m'
\defeq \m + \effect{t}$. The latter is denoted by $\m
\trans{t} \m'$, or simply by $\m \trans{} \m'$ whenever we do not care
about the transition that led to $\m'$. We use a standard notation for markings, listing only nonzero values, \eg if
$P = \set{p_1, p_2}$, $\m[p_1] = 2$ and $\m[p_2] = 0$, then $\m = \set{p_1 \colon 2}$.

A \emph{run} is a sequence of transitions $\rho = t_1 \cdots t_n \in T^*$.
A run is enabled in $\m_0$ if there is a sequence of markings $\m_1,\ldots,\m_n$ such that
$\m_{i} \trans{t_i} \m_{i+1}$ for all $0 \le i < n$. If it is the case, then we denote this by
$\m_0 \trans{\rho} \m_n$, or $\m_0 \reach{} \m_n$ if $\rho$ is not important.
Given $\ell \in \N$, we say that $\rho$ is \emph{$\ell$-bounded} if
$\norm{\m_i} \le \ell$ for all $0 \le i \le n$. The support of a run is the set of transitions occurring in it, denoted
$\support{\rho} \defeq \set{t_1, \ldots, t_n}$.

We introduce a semantics where transitions can always be fired, and hence where markings may become negative. Formally, a \emph{$\Z$-marking} is a vector $\m \colon P \to \Z$. We write $\m \ztrans{t} \m'$ (or simply $\m \ztrans{} \m'$) if $\m' = \m + \effect{t}$. Given a run $\rho$, we define in the obvious way  $\ztrans{\rho}$ and ${\zreach}$. Note that markings are $\Z$-markings (with the domain restricted to $\N$). The definition of $\Z$-markings is mostly needed to use ${\zreach}$.

We define the \emph{absolute value} and \emph{norm} of a Petri net
$\pn = (P, T, F)$ by $\abs{\pn} \defeq |P| + |T|$ and $\norm{\pn}
\defeq \norm{F} + 1$, where $F$ is seen as a vector over $(P \times T)
\cup (T \times P)$. The \emph{size} of a Petri net is defined as
$\size{\pn} \defeq \abs{\pn} \cdot ( 1 + \log \norm{\pn})$. For some
complexity problems, we will be given a Petri net and some markings,
\eg $\m$ and $\m'$.  By the size of the input, we understand
$\size{\pn,\m,\m'} \defeq \size{\pn} + \log(\norm{\m} + 1) +
\log(\norm{\m'} + 1)$.

A transition $t$ is said to be \emph{quasi-live} from marking $\m$ if there
exists a marking $\m'$ such that $\m \reach \m'$ and $t$ is enabled in
$\m'$. A transition $t$ is said to be  \emph{live} from $\m$ if $t$ is
quasi-live from all $\m'$ such that $\m \reach \m'$. We say that a
Petri net $\pn$ is \emph{quasi-live} (resp.\ \emph{live}) from $\m$ if
each transition $t$ of $\pn$ is quasi-live (resp.\ live) from
$\m$. Informally, quasi-liveness states that no transition is useless,
and liveness states that transitions can always eventually be fired.

\begin{example}
  Consider the Petri net $\pn_\text{middle} = (P, T, F)$ illustrated
  in the middle of \Cref{fig:examples}. Places $P = \{\initial, q_1,
  q_2, \output\}$ and transitions $T = \{t_1, t_2, t_3, t_4\}$ are
  depicted respectively as circles and squares. The flow function $F$
  is depicted by arcs, where unit weights are omitted, and where arcs
  with weight zero are not drawn, \eg\ $F(\initial, t_1) = 1$,
  $F(t_1, \initial) = 0$, $F(t_4, \output) = 2$ and $F(\output, t_4) =
  0$. In particular, transitions $t_1$, $t_2$ and $t_3$ are quasi-live
  from marking $\imarked{1}$
  since \[\imarked{1} \trans{t_1} \{q_1 \colon
  1\} \trans{t_2} \{q_2 \colon 1\} \trans{t_3} \{q_1 \colon 1\}.\]
  However, as no other marking is reachable, transition $t_4$ is not
  quasi-live. Note that $t_2$ and $t_3$ are both live from
  $\imarked{1}$, while $t_1$ is not live since it can only be fired
  once.
\end{example}

\subsection{Workflow nets and soundness}

A \emph{workflow net} $\pn$ is a Petri net that satisfies the
following:
\begin{itemize}
\item there is a dedicated \emph{initial} place $\initial$ with
  $\pre{t}[\initial] = 0$ for every transition $t$ (cannot produce
  tokens in $\initial$);
          
\item there is a dedicated \emph{final} place $\output \neq \initial$
  with $\post{t}[\output] = 0$ for every transition $t$ (cannot
  consume tokens from $\output$);
          
\item each place and transition lies on at least one path from
  $\initial$ to $\output$ in the underlying graph of $\pn$, \ie\ the
  graph $(V, E)$ where $V \defeq P \cup T$ and $(u, v) \in E$ iff
  $F[u, v] > 0$.
\end{itemize}

Given $k \in \N$, we say that $\pn$ is $k$-sound iff $\imarked{k}
\reach \m$ implies $\m \reach \omarked{k}$, i.e.\ starting from $k$
tokens in the initial place, it is always possible to move the $k$
tokens into the final place. We say that $\pn$ is:
\begin{itemize}
\item \emph{classically sound} iff $\pn$ is $1$-sound and quasi-live from $\set{\initial : 1}$;

\item \emph{generalised sound} iff $\pn$ is
  $k$-sound for all $k > 0$;

\item \emph{structurally sound} iff $\pn$ is $k$-sound for some $k > 0$.
\end{itemize}

\begin{example}
  Consider the workflow nets $\pn_\text{left}$, $\pn_\text{middle}$
  and $\pn_\text{right}$ depicted respectively in
  \Cref{fig:examples}.

  Workflow nets $\pn_\text{left}$ and $\pn_\text{middle}$ are not
  $1$-sound since their only transition that can mark place $\output$
  is not quasi-live from $\imarked{1}$, namely $s_2$ and $t_4$. In
  particular, this means that both workflow nets are neither
  classically sound, nor generalized sound. Workflow net
  $\pn_\text{right}$ is $1$-sound, and in fact classically sound, as
  shown by the reachability graph of \Cref{fig:reach}.

  \begin{figure}[h]
  \begin{tikzpicture}[auto]
    \node (i) {$\imarked{1}$};

    \node[below left=20pt and 15pt of i]
    (r1r2) {$\{r_1 \colon 1, r_2 \colon 1\}$};

    \node[below=20pt of i]
    (r1r3) {$\{r_1 \colon 1, r_3 \colon 1\}$};

    \node[below right=20pt and 15pt of i]
    (r2r3) {$\{r_2 \colon 1, r_3 \colon 1\}$};
    
    \node[below right=20pt and 15pt of r1r2] (f) {$\fmarked{1}$};

    \path[->]
    (i) edge node[swap] {$u_1$} (r1r2)
    (i) edge node {$u_2$} (r2r3)
    (i) edge node {$u_3$} (r1r3)

    (r1r2) edge node[swap] {$u_5$} (f)
    (r2r3) edge node {$u_6$} (f)
    (r1r3) edge node {$u_4$} (f)
    ;
  \end{tikzpicture}
  \caption{Markings reachable from $\imarked{1}$ in $\pn_\text{right}$.}%
  \label{fig:reach}
\end{figure}
  
  In particular, this means that $\pn_\text{right}$ is structurally
  sound. Workflow net $\pn_\text{left}$ is not structurally sound as
  no matter the marking $\imarked{k}$ from which it starts, there is
  no way to empty place $p_1$ once it is marked. Workflow net
  $\pn_\text{middle}$ is $2$-sound, and hence structurally
  sound. Indeed, from $\imarked{2}$, the two tokens must enter $\{q_1,
  q_2\}$ from which they can escape via $\{q_1 \colon 1, q_2 \colon
  1\}$ by firing $t_4$, reaching marking $\fmarked{2}$.

  Workflow net $\pn_\text{right}$ is not $2$-sound, and hence not
  generalised sound. Indeed, we have $\imarked{2} \trans{u_1 u_2
  u_4} \{r_2 \colon 2, \output \colon 1\}$ and no transition is
  enabled in the latter marking.
\end{example}

\section{Classical soundness}
\label{sec:classical}
As mentioned in the introduction, classical soundness is decidable,
but its complexity has not yet been established. Let us recall why
decidability holds. We say that a Petri net $\pn$ is \emph{bounded}
from marking $\m$ if there exists $b \in \N$ such that $\m \reach \m'$
implies $\m' \leq \vec{b}$. Otherwise, $\pn$ is \emph{unbounded} from
$\m$. It is well-known that unboundedness holds iff there exist
markings $\m' < \m''$ such that $\m \reach \m' \reach \m''$. The
\emph{short-circuit net} $\pn_{sc}$ of a workflow net $\pn$ is $\pn$
extended with a transition $t_{sc}$ such that $F[\output, t_{sc}] =
F[t_{sc}, \initial] = 1$ (and $0$ for other entries relating to
$t_{sc}$). Informally, the short-circuit net allows to restore the
system upon completion, \ie\ by moving a token from $\output$ to
$\initial$.

\begin{figure}
  \begin{tabular}{m{3.65cm}|m{4cm}}
    \hspace*{-0.75cm} 
    \begin{tikzpicture}[transform shape, scale=0.8]
      \tikzstyle{place}=[circle,thick,draw=blue!75,fill=blue!20,minimum size=6mm]
      \tikzstyle{transition}=[rectangle,thick,draw=black!75,
        fill=black!20,minimum size=4mm]
      
      \node[place, label=left:$\initial$, tokens=1] (i) {};
      \node[place, above right = 1cm and 1cm of i, label=above:$r_3$] (p1) {};
      \node[place, below right = 1cm and 1cm of i, label=below:$r_1$] (p2) {};
      \node[place, label=above:$r_2$] (p3) at ($(p1)!0.5!(p2)$) {};
      \node[place, right=1cm of p3, label=right:$\output$]  (o) {};

      \node[transition, label=above:$u_2$] (e1) at (i |- p1) {}
      edge[pre]  (i)
      edge[post] (p1)
      edge[post] (p3)
      ;
      
      \node[transition, label=below:$u_1$] (e2) at (i |- p2) {}
      edge[pre]  (i)
      edge[post] (p2)
      edge[post] (p3)
      ;
      
      \node[transition, label=above:$u_6$] (e3) at (o |- p1) {}
      edge[post] (o)
      edge[pre]  (p1)
      edge[pre]  (p3)
      ;
      \node[transition, label=below:$u_5$] (e4) at (o |- p2) {}
      edge[post] (o)
      edge[pre]  (p2)
      edge[pre]  (p3)
      ;
      
      \node[transition, label={[xshift=5pt]below:$u_4$}] (e5)
      at ($(o)!0.5!(p3)$) {}
      edge[post] (o)
      edge[pre]  (p2)
      edge[pre]  (p1)
      ;
      
      \node[transition, label={[xshift=-5pt]below:$u_3$}]
      (e6) at ($(i)!0.5!(p3)$) {}
      edge[post] (p1)
      edge[post] (p2)
      edge[pre]  (i)
      ;

      \node[transition, below=15pt of p2, label=below:$t_{sc}$] (tsc) {};

      \path[->]
      (o)   edge[out=-45, in=0, looseness=1.5]    node {} (tsc)
      (tsc) edge[out=180, in=-135, looseness=1.5] node {} (i)
      ;
    \end{tikzpicture}%
    \hspace*{-5pt} 
    &
    \hspace*{-5pt} 
    \begin{tikzpicture}[auto, transform shape, scale=0.8]
      \node[font=\small] (i) {$\imarked{1}$};

      \node[below left=20pt and 10pt of i, font=\small]
      (r1r2) {$\{r_1 \colon 1, r_2 \colon 1\}$};

      \node[below=20pt of i, font=\small]
      (r1r3) {$\{r_1 \colon 1, r_3 \colon 1\}$};

      \node[below right=20pt and 10pt of i, font=\small]
      (r2r3) {$\{r_2 \colon 1, r_3 \colon 1\}$};
      
      \node[below right=20pt and 10pt of r1r2, font=\small] (f) {$\fmarked{1}$};

      \path[->]
      (i) edge node[swap] {$u_1$} (r1r2)
      (i) edge node {$u_2$} (r2r3)
      (i) edge node {$u_3$} (r1r3)

      (r1r2) edge node[swap] {$u_5$} (f)
      (r2r3) edge node {$u_6$} (f)
      (r1r3) edge node {$u_4$} (f)

      (f) edge[out=-15, in=15, looseness=2.85]
      node[xshift=-35pt, yshift=50pt] {$t_{sc}$} (i)
      ;
    \end{tikzpicture}
  \end{tabular}
  \caption{\emph{Left}: Short-circuit net of the rightmost workflow
    net from \Cref{fig:examples}. \emph{Right}: Its markings reachable
    from $\imarked{1}$.}%
  \label{fig:sc}
\end{figure}
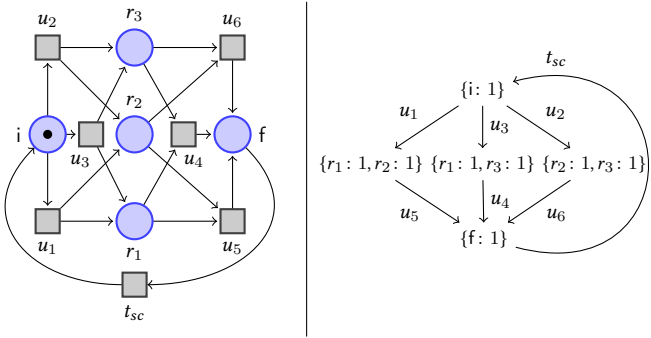

For example, the left side of \Cref{fig:sc} illustrates a
short-circuit net $\pn_{sc}$. By inspecting the graph of markings
reachable from $\imarked{1}$ in $\pn_{sc}$, we see that $\pn_{sc}$ is
live and bounded, \ie\ it is always possible to (re)fire any
transition, and each place is bounded by $b \defeq 1$ token. It turns
out that liveness and boundedness characterize classical soundness:

\begin{proposition}[{\cite[Lemma~8]{AL97}}]\label{prop:charac:sound}
  A workflow net $\pn$ is classically sound iff $\pn_{sc}$ is live and
  bounded from $\imarked{1}$.
\end{proposition}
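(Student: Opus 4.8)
The plan is to prove the two implications of the biconditional separately, working from the definitions of classical soundness (i.e.\ $1$-soundness together with quasi-liveness from $\imarked{1}$) and of liveness and boundedness of $\pn_{sc}$ from $\imarked{1}$. The key structural observation, which I would establish first as a lemma, is that the markings reachable in $\pn_{sc}$ from $\imarked{1}$ are exactly those markings reachable in $\pn$ from $\imarked{1}$, i.e.\ the reachability set is unchanged by adding $t_{sc}$, \emph{provided} $\pn$ is $1$-sound: whenever $t_{sc}$ is fired it consumes a token from $\output$ and returns a token to $\initial$, and if the net is $1$-sound then from any marking reachable from $\imarked{1}$ one can reach $\omarked{1}$, and conversely a run reaching a marking $\m$ with $\m[\output] \ge 1$ can be preceded/followed by $t_{sc}$ to again reach a marking reachable from $\imarked{1}$. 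More precisely, the crucial facts are: (i) from $\omarked{1}$, firing $t_{sc}$ yields $\imarked{1}$; and (ii) since $\pn$ is a workflow net, no transition of $\pn$ can consume from $\output$, so the only way a token disappears from $\output$ is via $t_{sc}$.

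For the forward direction, assume $\pn$ is classically sound. For boundedness of $\pn_{sc}$: take any $\m$ with $\imarked{1} \reach[\pn_{sc}] \m$. Using $1$-soundness one argues that $\m$ is reachable in $\pn$ from $\imarked{1}$ as well (peel off the $t_{sc}$ steps, using that after each $t_{sc}$ we are back at $\imarked{1}$, modulo the tokens that were left behind --- here the "safeness"/$1$-soundness structure is what makes this clean, and this is where I expect the argument to need care). Then boundedness of $\pn_{sc}$ from $\imarked{1}$ reduces to boundedness of $\pn$ from $\imarked{1}$, which follows from $1$-soundness: if $\pn$ were unbounded from $\imarked{1}$, there would be markings $\m' < \m''$ with $\imarked{1} \reach \m' \reach \m''$; by $1$-soundness $\m' \reach \omarked{1}$, and applying the same run from $\m''$ (monotonicity of Petri nets) reaches $\omarked{1} + (\m'' - \m') > \omarked{1}$, which cannot reach $\omarked{1}$ since $\pn$ is a workflow net and tokens in $\output$ are never consumed --- contradicting $1$-soundness at the marking $\m''$. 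For liveness of $\pn_{sc}$: take any transition $t$ and any $\m$ with $\imarked{1} \reach[\pn_{sc}] \m$; again $\m$ is reachable in $\pn$ from $\imarked{1}$, so by $1$-soundness $\m \reach[\pn] \omarked{1} \trans{t_{sc}} \imarked{1}$ in $\pn_{sc}$, and from $\imarked{1}$ quasi-liveness of $\pn$ lets us fire $t$ eventually (and $t_{sc}$ itself is enabled after reaching $\omarked{1}$). Hence every transition is live from $\imarked{1}$ in $\pn_{sc}$.

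For the backward direction, assume $\pn_{sc}$ is live and bounded from $\imarked{1}$. Quasi-liveness of $\pn$ from $\imarked{1}$: liveness of $\pn_{sc}$ gives, for each transition $t$ of $\pn$ (all of which are also transitions of $\pn_{sc}$), some run firing $t$; one must then argue this run can be taken inside $\pn$, or rather truncated/rerouted so as to use only $\pn$-transitions up to the point $t$ is fired --- the point being that if the run uses $t_{sc}$, it must first deposit a token in $\output$, and by the workflow condition that token can be "parked" there while we continue, so we can drop the $t_{sc}$ occurrences and still enable $t$ (this needs a small monotonicity argument). For $1$-soundness: let $\imarked{1} \reach[\pn] \m$; then also $\imarked{1} \reach[\pn_{sc}] \m$, so by liveness $t_{sc}$ is live, hence from $\m$ some run in $\pn_{sc}$ fires $t_{sc}$, which requires first reaching a marking $\m'$ with $\m'[\output] \ge 1$. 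I then want to upgrade "$\m'[\output] \ge 1$" to "$\m' = \omarked{1}$", i.e.\ all other places empty; this is exactly where boundedness enters, via the classical Petri-net fact that a live and bounded net with a "home state"-like structure forces the final marking, or more directly: if from $\m$ we can reach a marking with a token in $\output$ but residue elsewhere, iterating through $t_{sc}$ and replaying would pump tokens and violate boundedness. Cleaning up the residue argument --- showing the token count in $\output$ plus parked tokens cannot keep growing --- is the main obstacle, and the honest way to handle it is to cite the standard correspondence (van der Aalst) or to carefully invoke that in a bounded net the number of reachable markings is finite so some configuration repeats, combined with liveness of $t_{sc}$ to force emptiness of non-$\output$ places at the repetition point.
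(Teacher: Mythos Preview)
The paper does not give its own proof of this proposition: it is quoted as \cite[Lemma~8]{AL97} and used as a known result, so there is nothing in the paper to compare your attempt against directly. The paper does, however, prove the closely related \Cref{prop:charac:1sound} (the $1$-soundness half, without quasi-liveness) in the appendix, and the arguments there match your sketch for boundedness in the forward direction and for the ``liveness of $t_{sc}$ plus boundedness forces termination at exactly $\omarked{1}$'' step in the backward direction.

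On the merits of your sketch, there is one concrete error rather than just a flagged gap. In the backward direction, your argument for quasi-liveness of $\pn$ takes a $\pn_{sc}$-run enabling some $t \in T$ and proposes to ``drop the $t_{sc}$ occurrences'' and appeal to monotonicity. This does not work: deleting an occurrence of $t_{sc}$ leaves an extra token in $\output$ but \emph{removes} the token that would have been returned to $\initial$, so the resulting intermediate markings are not $\geq$ the originals and downstream transitions consuming from $\initial$ may become disabled. The clean fix is to argue first, from boundedness of $\pn_{sc}$ alone, that any marking reachable in $\pn_{sc}$ from $\imarked{1}$ at which $t_{sc}$ is enabled must be exactly $\omarked{1}$: if it were $\omarked{1}+\n$ with $\n>\vec{0}$, then firing $t_{sc}$ gives $\imarked{1}+\n$, and since $\imarked{1} \reach \imarked{1}+\n$ with $\imarked{1} < \imarked{1}+\n$ this witnesses unboundedness. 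With this in hand (by induction on the number of $t_{sc}$-occurrences in a run), every $t_{sc}$-step returns the system to $\imarked{1}$, so the reachability sets of $\pn$ and $\pn_{sc}$ from $\imarked{1}$ coincide; both $1$-soundness (via liveness of $t_{sc}$) and quasi-liveness of $\pn$ (via liveness of each $t \in T$ in $\pn_{sc}$) then follow immediately without any run surgery. In short: establish the coincidence of reachability sets from boundedness first, and derive quasi-liveness afterwards, rather than attacking quasi-liveness by editing runs.
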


Decidability of classical soundness follows
from \Cref{prop:charac:sound}. Indeed, boundedness can be tested in
EXPSPACE~\cite{Rac78}, and liveness is decidable since it reduces to
reachability~\cite[Thm~5.1]{Hack76} which is
decidable~\cite{May81}. However, the liveness problem is hard for the
reachability problem~\cite[Thm~5.2]{Hack76}, which was recently shown
Ackermann-complete~\cite{LS19,Ler21,CO21}. In this section, we first
give a slightly different characterization not involving
liveness which yields EXPSPACE membership. Then, we show that classical
soundness is EXPSPACE-hard, and hence EXPSPACE-complete, via a
reduction from the reachability problem for so-called reversible Petri
nets.

\subsection{EXPSPACE membership}

Let us reformulate the characterization of \Cref{prop:charac:sound} so
that it deals with another property than liveness, namely
``cyclicity''. We say that a Petri net is \emph{cyclic} from a marking
$\m$ if $\m \reach \m'$ implies $\m' \reach \m$, i.e.\ it is always
possible to go back to $\m$. For example, the short-circuit net
$\pn_{sc}$, illustrated on the left of \Cref{fig:sc}, is cyclic since
each marking reachable from $\imarked{1}$ can reach $\omarked{1}$,
which in turn can reach $\imarked{1}$.

Rather than directly considering classical soundness, we first
consider $1$-soundness. The characterization
of \Cref{prop:charac:sound} can be adapted to this problem as follows:

\begin{restatable}{proposition}{propCharacOneSound}\label{prop:charac:1sound}
  A workflow net $\pn$ is $1$-sound iff $\pn_{sc}$ is bounded and
  transition $t_{sc}$ is live from $\imarked{1}$.
\end{restatable}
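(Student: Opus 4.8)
The plan is to prove the two directions separately, using \Cref{prop:charac:sound} as a stepping stone.

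\medskip

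\noindent\textbf{($\Rightarrow$)} Suppose $\pn$ is $1$-sound. First I would argue boundedness of $\pn_{sc}$ from $\imarked{1}$. Any marking $\m$ reachable in $\pn_{sc}$ from $\imarked{1}$ can be reached by a run in which every occurrence of $t_{sc}$ moves a token from $\output$ back to $\initial$; I want to show that along such a run the total number of tokens never exceeds some bound, and that each individual place stays bounded. The key observation is that between consecutive firings of $t_{sc}$ the net behaves like $\pn$ started from some $\imarked{j}$. Here $1$-soundness alone is not obviously enough to bound the reachable markings of $\pn$ from $\imarked{j}$ for arbitrary $j$, so I expect the argument to go through the short-circuit structure more carefully: I would show that if $\pn_{sc}$ were unbounded from $\imarked{1}$, then there are markings $\m' < \m''$ with $\imarked{1} \reach \m' \reach \m''$ in $\pn_{sc}$, and then pump to obtain, inside $\pn$, a marking reachable from $\imarked{1}$ that strictly covers $\imarked{1}$ but from which $\omarked{1}$ is not reachable (since the extra tokens cannot be consumed from $\output$ and the net must return all tokens to $\output$), contradicting $1$-soundness. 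For liveness of $t_{sc}$: take any $\m$ with $\imarked{1} \reach \m$ in $\pn_{sc}$. Firing $t_{sc}$ zero or more times to collect the $\output$-tokens, one reduces to a marking of $\pn$ reachable from $\imarked{1}$; $1$-soundness gives a run to $\omarked{1}$, after which $t_{sc}$ is enabled. So $t_{sc}$ is quasi-live from every reachable marking, i.e.\ live.

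\medskip

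\noindent\textbf{($\Leftarrow$)} Suppose $\pn_{sc}$ is bounded from $\imarked{1}$ and $t_{sc}$ is live from $\imarked{1}$. Let $\m$ be a marking of $\pn$ with $\imarked{1} \reach \m$ in $\pn$; this is also a run in $\pn_{sc}$, so $t_{sc}$ is quasi-live from $\m$, meaning there is a run $\m \reach \m'$ in $\pn_{sc}$ with $\m'[\output] \geq 1$. The only obstacle is that this run may itself use $t_{sc}$. I would handle this by induction on the number of $t_{sc}$-occurrences in the run witnessing quasi-liveness, together with boundedness: since $\pn_{sc}$ is bounded, the total token count is bounded, and each application of $t_{sc}$ does not change the token count, so one can track a "potential" and argue that the token originally in $\initial$ (together with any produced later) can be funneled to $\output$; more concretely, I would show that from $\m$ one can reach in $\pn$ some marking with at least one token in $\output$ and then, by repeated use of the same argument on the remaining tokens (again finitely many by boundedness), reach $\omarked{1}$. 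Finally, $1$-soundness follows, and combining with the fact that $\pn_{sc}$ live $\Rightarrow$ $\pn$ quasi-live is not needed here — we only claim $1$-soundness.

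\medskip

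\noindent\textbf{Main obstacle.} The delicate point is the interplay between $t_{sc}$-firings and the original net's behaviour in both directions: in ($\Rightarrow$), ruling out unboundedness requires turning an infinite-pumping witness in $\pn_{sc}$ into a genuine violation of $1$-soundness in $\pn$, which needs the observation that surplus tokens in $\output$ (or tokens stranded elsewhere) cannot be removed; in ($\Leftarrow$), extracting a clean run of $\pn$ ending in $\omarked{1}$ from a run of $\pn_{sc}$ that interleaves $t_{sc}$ requires a careful rearrangement/induction exploiting that the token count is invariant under $t_{sc}$ and bounded overall. I would expect to isolate a lemma of the form: if $\pn_{sc}$ is bounded from $\imarked{1}$ and $\imarked{1} \reach_{\pn} \m$, then $\m \reach_{\pn} \omarked{1}$ iff some marking reachable from $\m$ in $\pn_{sc}$ enables $t_{sc}$, and prove that first.
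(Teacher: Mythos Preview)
Your overall plan matches the paper's, and the lemma you isolate at the end is exactly the right target; the reference to \Cref{prop:charac:sound} is a red herring (that result adds quasi-liveness and is not used here). For ($\Rightarrow$) you are close: the paper makes it precise by splitting on whether the unboundedness witness $\imarked{1}\reach\m\reach\m'$ in $\pn_{sc}$ (with $\m<\m'$, no repeated markings) uses $t_{sc}$. If not, $1$-soundness gives $\imarked{1}\reach\omarked{1}+(\m'-\m)$, forcing $(\m'-\m)\reach\vec{0}$ --- impossible since every workflow-net transition has $\post{t}\neq\vec{0}$. If so, the prefix up to the first $t_{sc}$ lies in $\pn$ and ends in some $\omarked{1}+\n$; the same reasoning forces $\n=\vec{0}$, so $\imarked{1}$ repeats. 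Liveness then follows because any repeat-free $\pn_{sc}$-run from $\imarked{1}$ must avoid $t_{sc}$, hence is a $\pn$-run, and $1$-soundness supplies a continuation to $\omarked{1}$.

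The gap is in ($\Leftarrow$). You correctly flag that the run witnessing quasi-liveness of $t_{sc}$ might itself use $t_{sc}$, but your proposed resolution --- induction on $t_{sc}$-occurrences, a token-count potential, ``repeated use on the remaining tokens'' --- does not work: once a token sits in $\output$ it cannot leave in $\pn$, so from $\omarked{1}+\n$ with $\n\neq\vec{0}$ there is no iteration toward $\omarked{1}$. The paper's argument is a single application of boundedness: liveness of $t_{sc}$ gives $\m\reach\m'$ in $\pn_{sc}$ with $\m'\geq\omarked{1}$; if $\m'>\omarked{1}$ then $\m'\trans{t_{sc}}\imarked{1}+(\m'-\omarked{1})$ violates boundedness; hence $\m'=\omarked{1}$ exactly. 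Applying the same observation to the \emph{first} marking along this run that puts a token in $\output$ shows the prefix never enabled $t_{sc}$, so $\m\reach\omarked{1}$ already holds in $\pn$. No induction or potential is needed.
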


From the previous proposition, we prove the following.

\begin{lemma}\label{lem:1sound-charact}
  A workflow net $\pn = (P, T, F)$ is $1$-sound iff $\pn_{sc}$ is
  bounded and cyclic from $\imarked{1}$, and some transition $t \in T$
  satisfies $\pre{t} = \imarked{1}$.
\end{lemma}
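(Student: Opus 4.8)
The plan is to derive \Cref{lem:1sound-charact} from \Cref{prop:charac:1sound} by showing that, in the presence of boundedness, liveness of $t_{sc}$ from $\imarked{1}$ is equivalent to the conjunction of cyclicity of $\pn_{sc}$ from $\imarked{1}$ and the existence of a transition $t \in T$ with $\pre{t} = \imarked{1}$. I would prove the two directions separately.

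For the forward direction, assume $\pn$ is $1$-sound, so by \Cref{prop:charac:1sound} the net $\pn_{sc}$ is bounded and $t_{sc}$ is live from $\imarked{1}$. First I would show $\pn_{sc}$ is cyclic: take any $\m$ with $\imarked{1} \reach \m$ in $\pn_{sc}$; since $t_{sc}$ is live, from $\m$ we can reach some $\m'$ enabling $t_{sc}$, and by definition of $t_{sc}$ we have $\m'[\output] \ge 1$; firing $t_{sc}$ yields $\m''$ with a token moved to $\initial$. The point is that $1$-soundness forces the net to be ``clean'' — the only place that can hold tokens after a full cycle is $\initial$ — so iterating we return to $\imarked{1}$; more carefully, $1$-soundness of $\pn$ gives $\m \reach \omarked{1}$ in $\pn$ (hence in $\pn_{sc}$), and $\omarked{1} \trans{t_{sc}} \imarked{1}$, giving cyclicity. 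Second, since $t_{sc}$ is live from $\imarked{1}$, it is in particular quasi-live, so some marking reachable from $\imarked{1}$ enables $t_{sc}$; but the only transition enabled at $\imarked{1}$ itself must consume from $\initial$ only (no transition produces in $\initial$ except $t_{sc}$, which needs a token in $\output$, impossible at $\imarked{1}$), and for any transition to ever fire starting from the single token in $\initial$, some transition $t \in T$ must be enabled at $\imarked{1}$, i.e.\ $\pre{t} \le \imarked{1}$; by the workflow conditions $\pre{t}$ is supported on $\initial$ and nonzero (every transition lies on a path from $\initial$ to $\output$ and cannot consume from $\output$, so it must consume from somewhere, and at $\imarked{1}$ only $\initial$ is marked), hence $\pre{t} = \imarked{1}$.

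For the backward direction, assume $\pn_{sc}$ is bounded and cyclic from $\imarked{1}$ and some $t \in T$ has $\pre{t} = \imarked{1}$; I would verify the hypotheses of \Cref{prop:charac:1sound}, i.e.\ that $t_{sc}$ is live from $\imarked{1}$. Take any $\m$ with $\imarked{1} \reach \m$ in $\pn_{sc}$; by cyclicity $\m \reach \imarked{1}$, so it suffices to show $t_{sc}$ is quasi-live from $\imarked{1}$, i.e.\ that $\imarked{1}$ can reach a marking with a token in $\output$. Here I would use the third workflow condition: the transition $t$ with $\pre{t} = \imarked{1}$ lies on a path from $\initial$ to $\output$, and more globally every place and transition does; combined with cyclicity and boundedness this should let me argue that from $\imarked{1}$ one can push the token along to $\output$ (if not, by König's lemma / boundedness one finds a reachable marking with no way to ever mark $\output$, contradicting cyclicity once we observe that $t_{sc}$ — which is on the unique cycle through $\initial$ — must be reachable, since otherwise $\pn_{sc}$ restricted to reachable markings would have a place or transition, namely $t_{sc}$, off every path, violating cyclicity back to $\imarked{1}$ after firing $t$).

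The main obstacle I anticipate is the backward direction's quasi-liveness argument: getting from ``$\pn_{sc}$ is cyclic from $\imarked{1}$'' to ``$\output$ is markable from $\imarked{1}$''. Cyclicity alone says every reachable marking returns to $\imarked{1}$, but a priori the net could loop within $T$ without ever touching $\output$ or $t_{sc}$. The key insight to exploit is that firing the transition $t$ with $\pre{t} = \imarked{1}$ is forced (it is the only way to leave $\imarked{1}$, up to other transitions also enabled there, all of which empty $\initial$), so after one step $\initial$ is empty; by the workflow structure the only transition that can re-mark $\initial$ is $t_{sc}$, which requires a token in $\output$; hence to get back to $\imarked{1}$ — which cyclicity guarantees — we must first mark $\output$ and fire $t_{sc}$. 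That closes the loop. I would also need to double-check the degenerate cases (e.g.\ whether multiple transitions satisfy $\pre{t} = \imarked{1}$, or whether $\pn_{sc}$ could be trivially cyclic because no transition is enabled — impossible here since $t$ is enabled at $\imarked{1}$).
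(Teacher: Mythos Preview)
Your backward direction is essentially the paper's argument, and you even identify the key step yourself in the final paragraph: after firing the transition $t$ with $\pre{t} = \imarked{1}$, the place $\initial$ is empty; cyclicity forces a return to $\imarked{1}$; and since only $t_{sc}$ produces in $\initial$, the run back must contain $t_{sc}$. That is exactly how the paper proves liveness of $t_{sc}$. (The paper also uses boundedness to rule out the case $\m[\initial] \geq 1$ with $\m \neq \imarked{1}$, but this is a minor point you can easily add.)

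The forward direction has a genuine gap in the cyclicity argument. You take $\m$ with $\imarked{1} \reach \m$ \emph{in $\pn_{sc}$} and then invoke $1$-soundness to get $\m \reach \omarked{1}$ in $\pn$. But $1$-soundness only speaks about markings reachable in $\pn$, and the run to $\m$ may use $t_{sc}$. Your first attempt was actually on the right track and is what the paper does: from liveness of $t_{sc}$ you get $\m \reach \m'$ with $\m' \geq \omarked{1}$. The missing step is to use \emph{boundedness} (not $1$-soundness) to conclude $\m' = \omarked{1}$: if $\m' = \omarked{1} + \n$ with $\n > \vec{0}$, then $\imarked{1} \reach \omarked{1} + \n \trans{t_{sc}} \imarked{1} + \n$, contradicting boundedness of $\pn_{sc}$. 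Hence $\m \reach \omarked{1} \trans{t_{sc}} \imarked{1}$, which is cyclicity. Your ``iterating we return to $\imarked{1}$'' gestures at this but never invokes boundedness, and without it the argument does not close.

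For the existence of $t$ with $\pre{t} = \imarked{1}$, the paper's route is simpler than yours: $1$-soundness gives $\imarked{1} \reach \omarked{1}$ in $\pn$, and since $\initial \neq \output$ this run is nonempty, so its first transition $t$ satisfies $\pre{t} \leq \imarked{1}$; the workflow path condition forces $\pre{t} \neq \vec{0}$, hence $\pre{t} = \imarked{1}$.
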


\begin{proof}
  $\Rightarrow$) Let $\pn$ be $1$-sound. Since $\imarked{1} \reach
  \omarked{1}$ and $\initial \neq \output$,
   some $t \in T$ satisfies $\pre{t} = \imarked{1}$. By \Cref{prop:charac:1sound}, from $\imarked{1}$,
  $\pn_{sc}$ is bounded and $t_{sc}$ is live. It remains to show that
  $\pn_{sc}$ is cyclic. Let $\imarked{1} \reach \m$. By liveness of
  $t_{sc}$, there is a marking $\m'$ such that $\m \reach \m'$ and
  $\m'$ enables $t_{sc}$. Note that $\pre{t_{sc}} =
  \omarked{1}$. If $\m' > \omarked{1}$, that is, $\m' = \omarked{1} +
  \n$ with $\n > \vec{0}$, then we obtain $\imarked{1} \reach
  \omarked{1} + \n \trans{t_{sc}} \imarked{1} + \n$, and hence
  boundedness is violated. Thus, by boundedness and liveness of
  $t_{sc}$, $\m \reach \m' = \omarked{1} \trans{t_{sc}} \imarked{1}$,
  which proves cyclicity.

  $\Leftarrow$) Assume $\pn_{sc}$ is bounded and cyclic from
  $\imarked{1}$, and that some $t \in T$ is as described. By
  \Cref{prop:charac:1sound}, it suffices to show that $t_{sc}$ is live
  from $\imarked{1}$. Let $\m \in \N^P$ be such that $\imarked{1}
  \reach \m$ in $\pn_{sc}$. We either have $\m = \imarked{1}$ or
  $\m[\initial] = 0$, as otherwise $\pn_{sc}$ is unbounded. If $\m =
  \imarked{1}$, we can fire $t$ and obtain a marking where $\initial$
  is empty. Thus, assume w.l.o.g.\ that $\m[\initial] = 0$. By
  cyclicity, we have $\m \trans{\pi} \imarked{1}$ for some
  $\pi$. Since $t_{sc}$ is the only transition that produces tokens in
  place $\initial$, transition $t_{sc}$ must appear in $\pi$. Hence,
  $t_{sc}$ is live.
\end{proof}

Since classical soundness amounts to quasi-liveness and $1$-soundness,
we obtain the following corollary.

\begin{corollary}\label{cor:charac:sound:cyc}
  A workflow net $\pn$ is classically sound iff $\pn_{sc}$ is
  quasi-live, bounded and cyclic from $\imarked{1}$.
\end{corollary}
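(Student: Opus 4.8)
The plan is to combine \Cref{lem:1sound-charact} with a direct analysis of quasi-liveness, since classical soundness of $\pn$ means exactly that $\pn$ is $1$-sound \emph{and} quasi-live from $\imarked{1}$. The bridge between $\pn$ and $\pn_{sc}$ that I would use throughout is the following observation: when $\pn_{sc}$ is bounded from $\imarked{1}$, the transition $t_{sc}$ can only ever be fired from the marking $\omarked{1}$. Indeed, if $t_{sc}$ were fired from some $\omarked{1} + \n$ with $\n > \vec{0}$, then (exactly as in the proof of \Cref{lem:1sound-charact}) we would reach $\imarked{1} + \n$ from $\imarked{1}$, and monotonicity would make $\pn_{sc}$ unbounded.

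For the ($\Rightarrow$) direction I would argue as follows. If $\pn$ is classically sound, then it is $1$-sound, so \Cref{lem:1sound-charact} immediately gives that $\pn_{sc}$ is bounded and cyclic from $\imarked{1}$. It then remains to establish quasi-liveness of $\pn_{sc}$ from $\imarked{1}$: each $t \in T$ is quasi-live in $\pn$ from $\imarked{1}$ by hypothesis, hence also in $\pn_{sc}$ since every marking reachable in $\pn$ from $\imarked{1}$ is also reachable in $\pn_{sc}$; and $t_{sc}$ is quasi-live because $1$-soundness yields $\imarked{1} \reach \omarked{1}$, a marking where $t_{sc}$ is enabled.

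For the ($\Leftarrow$) direction, assume $\pn_{sc}$ is quasi-live, bounded and cyclic from $\imarked{1}$. I would first recover the remaining hypothesis of \Cref{lem:1sound-charact}, namely a transition $t \in T$ with $\pre{t} = \imarked{1}$: by quasi-liveness some transition of $\pn_{sc}$ is enabled at $\imarked{1}$ (otherwise $\imarked{1}$ is the only reachable marking, so no transition is ever enabled, yet $\pn$ has at least one transition because $\initial \neq \output$), this transition is not $t_{sc}$ since $\pre{t_{sc}} = \omarked{1}$ while $\imarked{1}[\output] = 0$, so it is some $t \in T$ with $\pre{t} \le \imarked{1}$, and $\pre{t} \neq \vec{0}$ because $t$ lies on a path from $\initial$ to $\output$ and therefore has an incoming arc, forcing $\pre{t} = \imarked{1}$. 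Applying \Cref{lem:1sound-charact} now shows $\pn$ is $1$-sound. Finally, to get quasi-liveness of $\pn$ from $\imarked{1}$, fix $t' \in T$; quasi-liveness of $\pn_{sc}$ provides a run $\imarked{1} \trans{\rho} \m$ in $\pn_{sc}$ enabling $t'$, and by the bridging observation every occurrence of $t_{sc}$ in $\rho$ is fired from $\omarked{1}$ and produces $\imarked{1}$, so the suffix of $\rho$ after its last occurrence of $t_{sc}$ (or all of $\rho$, if $t_{sc}$ does not occur) is a genuine run of $\pn$ starting at $\imarked{1}$ that still enables $t'$. Hence $\pn$ is $1$-sound and quasi-live from $\imarked{1}$, \ie\ classically sound.

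The routine parts aside, the one place needing a little care is the ($\Leftarrow$) direction: extracting the transition $t$ with $\pre{t} = \imarked{1}$ genuinely invokes the workflow-net connectivity condition (to exclude $\pre{t} = \vec{0}$), and transferring quasi-liveness back from $\pn_{sc}$ to $\pn$ depends on boundedness pinning every firing of $t_{sc}$ to $\omarked{1}$, which is what lets an arbitrary $\pn_{sc}$-run be cut into $\pn$-runs separated by resets to $\imarked{1}$.
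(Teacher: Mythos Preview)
Your proof is correct and follows the same route the paper implicitly takes: the paper simply states that the corollary follows because ``classical soundness amounts to quasi-liveness and $1$-soundness,'' leaving the details (transferring quasi-liveness between $\pn$ and $\pn_{sc}$, and absorbing the $\pre{t}=\imarked{1}$ condition of \Cref{lem:1sound-charact} into quasi-liveness of $\pn_{sc}$) unwritten. You have spelled out precisely those details, and your bridging observation---that boundedness forces every firing of $t_{sc}$ to occur at $\omarked{1}$---is exactly the mechanism needed to make the transfer work.
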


\begin{theorem}\label{thm:class-easy}
  Both $1$-soundness and classical soundness are in EXPSPACE.
\end{theorem}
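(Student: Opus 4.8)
The plan is to reduce both problems to checking a conjunction of properties that each lie in EXPSPACE, using \Cref{lem:1sound-charact} for $1$-soundness and \Cref{cor:charac:sound:cyc} for classical soundness. The common ingredients are: boundedness of $\pn_{sc}$ from $\imarked{1}$, cyclicity of $\pn_{sc}$ from $\imarked{1}$, quasi-liveness (only for classical soundness), and the purely syntactic check that some transition $t$ has $\pre{t} = \imarked{1}$. The last is trivial to test in polynomial time by inspecting $F$. Boundedness of a Petri net from a given marking is decidable in EXPSPACE by Rackoff's theorem~\cite{Rac78}, and since $\size{\pn_{sc}} = \size{\pn} + O(1)$, this gives an EXPSPACE procedure directly. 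Quasi-liveness reduces to a polynomial number of coverability queries — one per transition $t$, asking whether $\pre{t}$ is coverable from $\imarked{1}$ — and coverability is EXPSPACE-complete, again by Rackoff's bound; the polynomially many queries do not leave EXPSPACE since EXPSPACE is closed under polynomial-time Turing reductions to itself.

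The step that requires a little more care is cyclicity. I would first observe that cyclicity of $\pn_{sc}$ from $\imarked{1}$ only needs to be decided \emph{when $\pn_{sc}$ is already known to be bounded}, so we may assume a bound $b \in \N$ on all reachable markings; by Rackoff's analysis this $b$ is at most doubly exponential in $\size{\pn}$, hence representable in exponential space. Under boundedness, the reachability graph of $\pn_{sc}$ from $\imarked{1}$ has at most $(b+1)^{|P|}$ vertices, i.e.\ doubly exponentially many, each of exponential size. Cyclicity from $\imarked{1}$ says exactly that every marking reachable from $\imarked{1}$ can reach $\imarked{1}$ back, i.e.\ that $\imarked{1}$ and every vertex reachable from it lie in the same strongly connected component of this finite graph. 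This is a reachability question in an exponential-space-representable graph, solvable in nondeterministic exponential space — we guess the forward path from $\imarked{1}$ to the candidate marking $\m$ and the return path from $\m$ to $\imarked{1}$ step by step, storing only the current marking (exponential size), rejecting if at any point the norm exceeds $b$ (which, by boundedness, will never legitimately happen on real runs). By Savitch's theorem, $\mathrm{NEXPSPACE} = \mathrm{EXPSPACE}$, so this stays within the desired class. Equivalently one can phrase cyclicity as: it is \emph{not} the case that some marking $\m$ is reachable from $\imarked{1}$ while $\imarked{1}$ is not reachable from $\m$, and decide the complement; EXPSPACE is closed under complement.

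Putting the pieces together: for $1$-soundness we run the boundedness test; if it fails, reject; otherwise run the cyclicity test and the syntactic $\pre{t} = \imarked{1}$ test, and accept iff both succeed. By \Cref{lem:1sound-charact} this is correct, and every subroutine runs in exponential space. For classical soundness we additionally run the quasi-liveness test, using \Cref{cor:charac:sound:cyc}. The only mild subtlety — and the thing I would be most careful to state precisely — is that the cyclicity routine is only meaningful once boundedness has been confirmed, so that the bound $b$ needed to cap the search space is available; this ordering of the checks is what keeps the overall procedure in EXPSPACE rather than merely decidable.
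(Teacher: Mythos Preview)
Your proposal is correct and follows the same decomposition as the paper: invoke \Cref{lem:1sound-charact} and \Cref{cor:charac:sound:cyc}, then argue that boundedness, cyclicity, quasi-liveness, and the syntactic $\pre{t}=\imarked{1}$ check each lie in EXPSPACE. Your handling of boundedness (Rackoff), quasi-liveness (a polynomial number of coverability queries), and the syntactic check matches the paper essentially verbatim.

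The one point of divergence is cyclicity. The paper simply cites~\cite{bouziane1997cyclic}, which places cyclicity of \emph{arbitrary} Petri nets in EXPSPACE, so no boundedness assumption is needed there and the two checks are independent. You instead argue conditionally: first establish boundedness, then use that a bounded net has reachable markings of norm at most doubly exponential, and finally search that finite reachability graph in (co-)NEXPSPACE $=$ EXPSPACE. This is a legitimate and more self-contained alternative, but be aware that the doubly-exponential bound on reachable markings of a bounded net, while true and obtainable from Rackoff's inductive technique, is not the literal statement of Rackoff's coverability/boundedness theorems; it deserves either a precise citation or a short derivation rather than the phrase ``by Rackoff's analysis''. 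The paper's route sidesteps this subtlety at the price of a black-box citation; yours trades that citation for an extra (standard) lemma you would need to justify.
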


\begin{proof}
  Checking whether a transition $t$ satisfies $\pre{t} = \imarked{1}$
  can be carried in polynomial time. The
  other properties of \Cref{lem:1sound-charact} for $1$-soundness,
  namely boundedness and cyclicity, belong to
  EXPSPACE~\cite{Rac78,bouziane1997cyclic}.

  For quasi-liveness, we proceed as follows. The \emph{coverability
    problem} asks whether given a Petri net and two markings $\m,
  \m'$, there exists a marking $\m'' \geq \m'$ such that $\m \reach
  \m''$. This problem belongs to EXPSPACE~\cite{Rac78}. Recall that
  quasi-liveness asks whether for each transition $t \in T \cup
  \set{t_{sc}}$, it is the case that $\imarked{1} \reach \m$ for some
  some marking $\m$ that enables $t$, i.e.\ such that $\m \geq
  \pre{t}$. The latter is a coverability question. Hence,
  quasi-liveness amounts to $\abs{T} + 1$ coverability queries, which
  can be checked in EXPSPACE.
\end{proof}

We further show that the previous result can be extended to
$k$-soundness through the following lemma.

\begin{lemma}\label{lem:scale-net}
  Given a workflow net $\pn$ and $k > 0$, one can compute, in
  polynomial time, a workflow net $\pn'$ with $\norm{\pn'}
  = \norm{\pn} + \log(k)$ such that, for all $c > 0$, $\pn$ is
  $ck$-sound iff $\pn'$ is $c$-sound.
\end{lemma}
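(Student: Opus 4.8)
\emph{Construction.} The plan is to \emph{pre}compose $\pn$ with a gadget that multiplies the token count by $k$ and \emph{post}compose it with the dual gadget dividing by $k$. Concretely, keep all places and transitions of $\pn$ (so $\initial$ and $\output$ become ordinary internal places of $\pn'$), add a fresh initial place $\initial'$ and a fresh final place $\output'$, a transition $t_{\mathrm{in}}$ with $\pre{t_{\mathrm{in}}} = \set{\initial' \colon 1}$ and $\post{t_{\mathrm{in}}} = \set{\initial \colon k}$, and a transition $t_{\mathrm{out}}$ with $\pre{t_{\mathrm{out}}} = \set{\output \colon k}$ and $\post{t_{\mathrm{out}}} = \set{\output' \colon 1}$. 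Nothing produces in $\initial'$, nothing consumes from $\output'$, and every place and transition lies on a path from $\initial'$ to $\output'$, so $\pn'$ is a workflow net. To meet the claimed norm bound, I realize the weight-$k$ arcs of $t_{\mathrm{in}}$ and $t_{\mathrm{out}}$ by standard binary multiplier and divider gadgets obtained from the binary expansion of $k$: they use $O(\log k)$ places and transitions and arc weights at most $2$, and their only relevant property is that a partially executed gadget can always be completed, so they introduce no deadlock of their own. This keeps $\pn'$ computable in polynomial time.

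\emph{From $c$-soundness of $\pn'$ to $ck$-soundness of $\pn$.} Let $\imarked{ck} \reach \m$ in $\pn$. Firing $t_{\mathrm{in}}$ exactly $c$ times from $\set{\initial' \colon c}$ produces $\set{\initial \colon ck}$, after which the same run reaches $\m$ with the fresh places empty; by $c$-soundness $\m \reach \set{\output' \colon c}$ in $\pn'$. In that run $t_{\mathrm{in}}$ never fires, since $\initial'$ is empty at $\m$ and no transition refills it; so the run is a run of $\pn$ interleaved with exactly $c$ occurrences of $t_{\mathrm{out}}$, which together consume $ck$ tokens from $\output$. Deleting these occurrences leaves a run of $\pn$ from $\m$ to $\omarked{ck}$ --- legal because no transition of $\pn$ consumes from $\output$, so omitting the $t_{\mathrm{out}}$-consumptions only increases intermediate counts in $\output$.

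\emph{From $ck$-soundness of $\pn$ to $c$-soundness of $\pn'$.} The key is a commutation argument: in any run of $\pn'$ from $\set{\initial' \colon c}$, occurrences of $t_{\mathrm{in}}$ can be moved to the front and occurrences of $t_{\mathrm{out}}$ to the back, checked swap by swap from the workflow-net properties of $\pn$ and the fact that $\initial'$, $\output'$ are touched only by $t_{\mathrm{in}}$, $t_{\mathrm{out}}$. Hence every marking reachable in $\pn'$ equals $\set{\initial' \colon c - i} + \set{\output' \colon o}$ plus a marking $\m''$ reachable in $\pn$ from $\imarked{ik}$ with its $\output$-entry lowered by some $ok \le \m''[\output]$, where $i \le c$. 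From such a marking, fire the remaining $c - i$ copies of $t_{\mathrm{in}}$, turning the $\pn$-part into $\m'' + \imarked{(c-i)k}$ with $\output$ still lowered by $ok$. Running the same $\pn$-transitions from $ck$ rather than $ik$ tokens shows $\m'' + \imarked{(c-i)k}$ is reachable from $\imarked{ck}$, so $ck$-soundness gives a run $\tau$ of $\pn$ from it to $\omarked{ck}$; since no transition of $\pn$ consumes from $\output$, $\tau$ is still enabled after $\output$ has been lowered by $ok$ and drives the $\pn$-part to $\set{\output \colon (c-o)k}$, after which $c - o$ more occurrences of $t_{\mathrm{out}}$ reach $\set{\output' \colon c}$. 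Here $o \le c$: $ck$-soundness forbids any marking reachable from $\imarked{ck}$ in $\pn$ from ever carrying more than $ck$ tokens in $\output$, while $\m''$ sits inside such a marking and $ok \le \m''[\output]$.

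\emph{Main obstacle.} The heart of the argument is this commutation/normal-form analysis together with the bound $o \le c$, which is the only point where $ck$-soundness is genuinely invoked in the converse direction; checking that the logarithmic-size multiplier and divider gadgets can always be flushed (so the correspondence is preserved exactly) and that the size, norm and running-time bounds hold is routine.
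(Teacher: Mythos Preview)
Your construction is exactly the paper's: add fresh places $\initial'$, $\output'$ and transitions $t_{\mathrm{in}}$, $t_{\mathrm{out}}$ with the obvious pre/post-sets. The paper's entire argument is the single sentence ``It is straightforward that $\pn'$ satisfies the lemma'', so your commutation proof of the two directions is considerably more careful than the original and is correct for the direct (weight-$k$) construction.

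One caveat: you take the bound $\norm{\pn'} = \norm{\pn} + \log k$ literally and introduce binary multiplier/divider gadgets to realise it. The paper does \emph{not} do this --- its construction uses arc weight $k$ outright, so in fact $\norm{\pn'} = \max(\norm{\pn}, k+1)$ and the stated norm bound is a slip; all that the applications (\Cref{cor:k-sound} and \Cref{thm:sound-numbers}) need is that $\size{\pn'}$ is polynomial in $\size{\pn} + \log k$, which the direct construction already gives. Your gadgets are therefore unnecessary, and the divider side is not as innocent as you assert: the obvious halving-based gadget \emph{can} deadlock once several batches of $k$ tokens accumulate in $\output$. For instance, with $k = 3$ (so $r_0 = \output$, $r_1$, a halving transition $2r_0 \to r_1$, and a final transition $r_0 + r_1 \to \output'$) and six tokens in $r_0$, firing the halving transition three times leaves $r_0 = 0$, $r_1 = 3$, from which nothing is enabled. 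This is repairable by adding reverse transitions to the divider, but then your ``flush and commute'' argument must be redone tracking the gadgets' internal places, which is more than routine. Simply using weight-$k$ arcs, as the paper does, avoids the issue entirely.
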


\begin{proof}
  Let $\pn = (P, T, F)$. We define $\pn' \defeq (P', T', F')$ that rescales
  everything by $k$. Formally, we add two new places that are the new
  initial and final places $P' \defeq P \cup
  \set{\initial',\output'}$. We denote by $\initial$ and $\output$ the
  previous initial and final places. We add two new transitions
  $t_\initial$ and $t_\output$ defined by:
  \begin{alignat*}{5}
    \pre{t_\initial}[\initial'] &= 1\
    &\text{and }&& \pre{t_\initial}[p]\ &= 0\
    &&\text{for } p \neq \initial', \\
    \post{t_\initial}[\initial] &= k\
    &\text{and }&& \post{t_\initial}[p]\ &= 0\
    &&\text{for } p \neq \initial, \\
    \pre{t_\output}[\output] &= k\
    &\text{and }&& \pre{t_\output}[p]\ &= 0\
    &&\text{for } p \neq \output, \\
    \post{t_\output}[\output'] &= 1\
    &\text{and }&& \post{t_\output}[p]\ &= 0\
    &&\text{for } p \neq \output'.
  \end{alignat*}
  It is straightforward that $\pn'$ satisfies the lemma.
\end{proof}

\begin{corollary}\label{cor:k-sound}
  The $k$-soundness problem is in EXPSPACE.
\end{corollary}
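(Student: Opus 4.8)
The plan is to reduce $k$-soundness to $1$-soundness via \Cref{lem:scale-net}, and then invoke the EXPSPACE upper bound for $1$-soundness established in \Cref{thm:class-easy}.

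Concretely, on an input $(\pn, k)$ — where $k$ is written in binary, so that $\log k$ is polynomial in the input size — I would first apply \Cref{lem:scale-net} with this same $k$ to compute, in polynomial time, a workflow net $\pn'$ such that, instantiating the parameter $c$ to $1$, $\pn$ is $k$-sound iff $\pn'$ is $1$-sound. I would then run the EXPSPACE procedure of \Cref{thm:class-easy} on $\pn'$ and return its answer. Note that only the $c = 1$ case of \Cref{lem:scale-net} is needed here.

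It remains to verify that this composition stays within EXPSPACE. By construction, $\pn'$ has $\abs{\pn'} = \abs{\pn} + 2$ (two fresh places and two fresh transitions) and $\norm{\pn'} = \norm{\pn} + \log k$, hence $\size{\pn'} = O\bigl((\abs{\pn} + 2)(1 + \log(\norm{\pn} + \log k))\bigr)$, which is polynomial in $\size{\pn} + \log k$, i.e.\ polynomial in the size of the input. Consequently, a procedure running in space $2^{\mathrm{poly}(\size{\pn'})}$ runs in space $2^{\mathrm{poly}(\size{\pn} + \log k)}$, which is EXPSPACE with respect to the original input.

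There is no genuinely hard step: the substance is already contained in \Cref{lem:scale-net} and \Cref{thm:class-easy}. The only point that needs a moment of care is the size bookkeeping — in particular that $k$ contributes only $\log k$ because it is presented in binary, and that the extra additive $\log k$ inside $\norm{\pn'}$ is absorbed harmlessly by the logarithm appearing in the definition of $\size{\cdot}$, so that no superpolynomial blow-up occurs before the EXPSPACE machine is started.
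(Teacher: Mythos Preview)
Your proposal is correct and follows essentially the same approach as the paper: invoke \Cref{lem:scale-net} with $c = 1$ and then test $1$-soundness via \Cref{thm:class-easy}. One tiny slip: adding two fresh places \emph{and} two fresh transitions gives $\abs{\pn'} = \abs{\pn} + 4$, not $+2$, but this is immaterial to the complexity analysis.
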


\begin{proof}
  It suffices to invoke \Cref{lem:scale-net} with $c = 1$, and test
  $1$-soundness of the resulting workflow net via
  \Cref{thm:class-easy}.
\end{proof}

\subsection{EXPSPACE-hardness}

Let us now establish EXPSPACE-hardness of classical soundness. We will
need the forthcoming lemma that essentially states that so-called
reversible Petri nets can count up to (or down from) a doubly
exponential number. Formally, we say that a Petri net $\pn = (P, T,
F)$ is \emph{reversible} if each transition of $\pn$ has an
inverse, \ie\ for every $t \in T$, there exists $t^{-1} \in T$ such
that $\pre{(t^{-1})} = \post{t}$ and $\post{(t^{-1})} = \pre{t}$.
Note that for reversible Petri nets, it is the case that $\m \reach \m'$ if and only if $\m' \reach \m$.
To emphasise this, we will sometimes write $\m \rreach \m'$.

\begin{lemma}[{\cite[Lemma~3]{MM82}}]\label{lem:rev:en}
  Let $\pn$ be a reversible Petri net and let $\m$ and $\m'$ be two
  markings. Let $n \defeq \size{\pn,\m,\m'}$. There exists $c_n \in
  2^{2^{\bigO(n)}}$ such that if $\m \reach{} \m'$ then $\m
  \trans{\rho} \m'$ for a $c_n$-bounded run $\rho$.
\end{lemma}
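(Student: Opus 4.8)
The plan is to reduce the statement to a purely combinatorial fact about reversible nets and then bound the witnessing run using a mixture of linear algebra and a scheduling argument. Write $\mat{E}$ for the $P \times T$ incidence matrix whose $t$-th column is $\effect{t}$, and let $L \defeq \{\mat{E}\vec{x} : \vec{x} \in \Z^T\} \subseteq \Z^P$ be the lattice of effects and $M \defeq \{\mat{E}\vec{x} : \vec{x} \in \N^T\}$ the monoid of effects. First I would observe that any run $\m \trans{\rho} \m'$ has a Parikh image $\phi_\rho \in \N^T$ with $\mat{E}\phi_\rho = \m' - \m$, so $\m \reach \m'$ forces $\m' - \m \in M$. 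By classical bounds (e.g.\ of Pottier, or of von zur Gathen and Sieveking) on small nonnegative integer solutions of linear Diophantine systems, the system $\mat{E}\vec{x} = \m' - \m$, $\vec{x} \ge \vec{0}$, then has a solution $\phi$ with $\norm{\phi} \le (\abs{\pn} \cdot (\norm{\pn} + \norm{\m} + \norm{\m'}))^{\bigO(\abs{\pn})} \le 2^{\bigO(n^2)}$, which is comfortably within $2^{2^{\bigO(n)}}$. Reversibility is not yet used; it will be needed precisely to turn this static solution $\phi$ into an actual run.

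The core step is a realisation lemma: $\m$ can reach $\m'$ by a run whose markings never leave the box $[0..c_n]^P$ with $c_n \in 2^{2^{\bigO(n)}}$. The idea is to route the run through a \emph{plateau}, i.e.\ a marking $\M$ with $\m \reach \M$, with $\M \ge \pre{t}$ for every $t \in T$, and with $\norm{\M}$ of doubly-exponentially bounded size. From such $\M$ every transition is enabled, and firing the multiset of transitions recorded by $\phi$ in any order keeps the marking within $\abs{T} \cdot \norm{\phi} \cdot \norm{\pn}$ of $\M$ in each coordinate, hence still $\ge \pre{t}$ throughout, reaching $\M + (\m' - \m)$; one then descends from $\M + (\m' - \m)$ to $\m'$. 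Because the net is reversible, climbing to the plateau and descending from it can themselves be done by short runs: raising or lowering a single place is achieved by firing bounded bursts of a transition together with its inverse, which needs only bounded working space. Collecting the estimates, every marking visited stays below $\norm{\M} + \abs{T} \cdot \norm{\phi} \cdot \norm{\pn} \le 2^{\bigO(n^2)} \le 2^{2^{\bigO(n)}}$.

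The main obstacle is making the plateau argument rigorous in full generality, and this is exactly where reversibility must be exploited carefully (it is the technical content of Mayr--Meyer's Lemma~3). Two issues arise. First, the small solution $\phi$ need not have ``connected'' support, so one genuinely cannot schedule $\phi$ in place and must detour through a marking where all transitions are simultaneously enabled --- hence the plateau. Second, a plateau dominating all $\pre{t}$ need not be reachable from $\m$ at all: the monoid $M$ may fail to contain a strictly positive vector (e.g.\ when it is a subgroup of $L$ with no positive element), in which case the reachability class of $\m$ is ``thin'' in the corresponding directions. The fix is to split $\Z^P$ according to the lineality space of $M$ versus its pointed part: in the pointed directions one really can pump up to a bounded plateau, whereas in the lineality directions the class is so constrained that the reachable markings are already bounded, or one recurses on a net of smaller dimension. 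Formalising this decomposition, together with the induction on the number of places that controls the final bound, is the heart of the proof; the linear-algebraic ingredients above are otherwise routine.
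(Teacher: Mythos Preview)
The paper does not prove this lemma: it is quoted from~\cite{MM82} and used as a black box in the hardness reduction for \Cref{thm:class-hard}. There is therefore nothing in the paper to compare your argument against, and I can only assess the proposal on its own merits.

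Your first paragraph (a small nonnegative Parikh solution via standard ILP bounds) is correct, but as you say it does not use reversibility and does not yield a run. The rest is a plan with one concrete error and one acknowledged gap. The error is the claim that ``raising or lowering a single place is achieved by firing bounded bursts of a transition together with its inverse'': firing $t$ then $t^{-1}$ has net effect $\vec{0}$, so reversibility lets you \emph{undo} runs, not pump individual coordinates; consequently your construction of the climb to (and descent from) the plateau does not work as written. The gap is the one you identify yourself: a plateau dominating every $\pre{t}$ need not be reachable from $\m$ at all (take two places and a single token-swapping transition with its inverse---the reachability class is a thin diagonal with no high plateau), and the proposed fix via a lineality/pointed decomposition with induction on $\abs{P}$ is asserted but not carried out. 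That decomposition is precisely the substance of the Cardoza--Lipton--Meyer / Mayr--Meyer analysis of commutative semigroups underlying~\cite{CLM76,MM82}; once it is done properly, the doubly-exponential bound follows. But the proposal stops exactly where that work begins, so what you have is a reasonable map of the terrain rather than a proof.
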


\begin{lemma}[{\cite[reformulation of Lemma~6 and Lemma~8]{MM82}}]\label{lem:count:gadget}
  Let $n \in \N$ and $c_n \in 2^{2^{\bigO(n)}}$. There exists a reversible Petri net $\pn_n = (P_{n}, T_{n}, F_{n})$
  with four distinguished places $s,c,f, b \in P_{n}$. Given the two markings $\m_n \defeq \{s \colon 1, c \colon 1\}$ and $\m_n'
  \defeq \{f \colon 1, c \colon 1, b \colon c_n\}$, the following holds for all $\m$:
  \begin{enumerate}
    \item $\m_n \rreach \m_n'$;

    \item $\m_n \rreach \m$ and $\m[f] > 0$ implies $\m = \m_n'$;

    \item $\m \rreach \m_n'$ and $\m[s] > 0$ implies $\m = \m_n$;
      
    \item if $\m < \m_n'$ and $\m[f] = 0$ then no transition can be fired from $\m$;

    \item for all $p \in P_n$ there exists $\m_n \rreach \m$ s.t.\ $\m[p] > 0$.
  \end{enumerate}
  Furthermore, $\pn_n$ is: of polynomial size in $n$; constructible in polynomial time in $n$; and quasi-live both from $\m_n$ and $\m_n'$.
\end{lemma}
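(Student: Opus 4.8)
The plan is to import the reversible ``amplifier chain'' of Mayr and Meyer~\cite{MM82} and to check that, after adding the auxiliary place $c$, it meets the five numbered requirements together with the size, constructibility and quasi-liveness guarantees. Concretely, one builds $\pn_n$ as the sequential composition of $n$ reversible modules $M_1, \ldots, M_n$: module $M_i$ reads the value currently stored in its input counter place, writes into its output counter place the value obtained by one amplification step (squaring, or $2$ to that value), and hands control on to $M_{i+1}$; the inverse of every transition is added explicitly, so $\pn_n$ is reversible by construction. The place $s$ holds the single control token at the start, $f$ holds it at the end, the output counter of $M_n$ is the place $b$, and $c$ is a spare token that no transition touches. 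After $n$ modules the accumulated value is doubly exponential, and we take this value to be $c_n \in 2^{2^{\bigO(n)}}$. Since each module uses constantly many places and transitions with weights polynomially bounded in $n$, the net has polynomial size and is computable in polynomial time; this part is routine.

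The core of the argument is a family of linear place invariants. First, a \emph{control invariant}: the number of tokens summed over $s$, over all the (sub-)phase places internal to the modules, and over $f$ is always exactly $1$; hence at every reachable marking precisely one of these control places is marked. Second, for each module $M_i$ one exhibits a weighted invariant relating the contents of the counter places of $M_i$ to the position of the control token inside $M_i$; in particular, once the control token has left $M_i$, that invariant forces the internal counters of $M_i$ back to $0$ and the output counter to the amplified value. Chaining these invariants along $M_1, \ldots, M_n$ yields property~(1): the canonical forward run fires every forward transition of $\pn_n$ and takes $\m_n = \{s \colon 1, c \colon 1\}$ to $\m_n' = \{f \colon 1, c \colon 1, b \colon c_n\}$. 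For property~(2), $\m[f] > 0$ and the control invariant put the control token in $f$, so all module-internal control places are empty; evaluating each module invariant at that position pins every counter to its value in $\m_n'$, and $c$ is fixed throughout, so $\m = \m_n'$. Property~(3) is the mirror image, obtained from $\m \rreach \m_n'$ iff $\m_n' \rreach \m$ (reversibility) together with $\m[s] > 0$: the same invariants then force $\m = \m_n$.

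For property~(4), let $\m < \m_n'$ with $\m[f] = 0$. Since $\m_n'$ is supported on $\{f, c, b\}$, the marking $\m$ is zero on $s$, on every module-internal control place, and on $f$; but every transition of $\pn_n$ — forward transitions and their inverses alike — consumes at least one token from one of those places, so none is enabled in $\m$. Property~(5) is read off the canonical forward run of~(1): for each place $p$ there is a prefix of that run after which $p$ is marked (the control token visits every control place, and each counter place becomes nonzero at some point). Quasi-liveness from $\m_n$ follows since every forward transition $t$ occurs in the canonical run, so the marking just before it is $\geq \pre{t}$ and the marking just after it is $\geq \post{t} = \pre{t^{-1}}$; and quasi-liveness from $\m_n'$ then follows because $\m_n' \rreach \m_n$, after which the canonical run can be replayed. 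The main obstacle is the bookkeeping behind~(2) and~(3): one must make the module invariants precise enough that ``the control token sits in $f$'' genuinely determines \emph{every} coordinate of the marking, ruling out half-completed module computations by invariants rather than merely by reachability — this is exactly where the delicate design of the Mayr--Meyer modules is used.
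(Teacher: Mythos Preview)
The paper does not supply a proof of this lemma: it is stated as a reformulation of Lemmas~6 and~8 of Mayr and Meyer~\cite{MM82} and is used entirely as a black box in the hardness reduction. There is therefore no ``paper's own proof'' to compare against; your sketch is precisely the kind of reconstruction of the Mayr--Meyer amplifier chain that the citation points to, and the overall shape (reversible modules, a single circulating control token, per-module counter invariants, the spare place $c$ left untouched) is the right one.

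Two small remarks on your sketch itself. For property~(4) you correctly avoid using the control invariant (since $\m$ is not assumed reachable) and instead argue purely from the support of $\m_n'$ together with the fact that every transition of the gadget consumes at least one token from some control place; this is indeed the crucial design constraint one must check for the Mayr--Meyer modules, and it is worth saying explicitly that it applies to the \emph{inverse} transitions as well. For properties~(2) and~(3), you rightly flag that the module invariants must be sharp enough to pin down every counter once the control token is in $f$ (resp.\ $s$); this is the delicate part of~\cite{MM82}, and your proposal honestly defers it there rather than reproving it, which matches what the paper does.
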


\begin{restatable}{theorem}{thmClassHard}\label{thm:class-hard}
  The classical soundness and $1$-soundness problems are EXPSPACE-hard.
\end{restatable}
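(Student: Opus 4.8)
The plan is to reduce from the reachability problem for reversible Petri nets, which is EXPSPACE-hard~\cite{CLM76,MM82} (in essence the uniform word problem for commutative semigroups). So suppose we are given a reversible Petri net $\pn = (P,T,F)$, markings $\m,\m'$, and set $n \defeq \size{\pn,\m,\m'}$. After a normalisation of the instance---ensuring $\m \not\le \m'$, ensuring every transition of $\pn$ is quasi-live from $\m$, and transforming $\pn$ so that a marking reachable from $\m$ that covers $\m'$ must actually equal $\m'$---the question ``$\m \reach \m'$?'' remains EXPSPACE-hard. Using \Cref{lem:rev:en} we fix $c_n \in 2^{2^{\bigO(n)}}$ (with a little additive $\norm{\pn}$ of slack, still doubly exponential) so that $\m \reach \m'$ forces the existence of a $c_n$-bounded witness run. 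We build, in polynomial time, a workflow net $\pn_W$ that is classically sound---equivalently, by \Cref{cor:charac:sound:cyc}, $(\pn_W)_{sc}$ is quasi-live, bounded and cyclic from $\imarked{1}$---exactly when $\m \reach \m'$; the same construction yields hardness of $1$-soundness.

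The net $\pn_W$ performs a \emph{bounded simulation} of $\pn$. A transition $t_{\mathit{init}}$ with $\pre{t_{\mathit{init}}} = \imarked{1}$ fires from $\imarked{1}$, lays down $\m$ on a private copy of $\pn$, plants (for each place of $\pn$) a private copy of the counting gadget $\pn_n$ of \Cref{lem:count:gadget} in its start configuration $\m_n$, and a control token. Each place $p$ is paired with a complementary \emph{cap} place; the doubly exponentially many tokens that make the cap mechanism work are produced by $\pn_n$ counting up to $\m_n'$. The transitions of $\pn$---and their inverses, which exist since $\pn$ is reversible---then act on the capped copy in reversible form: producing in $p$ consumes from its cap and conversely, so the simulated marking never exceeds $c_n$. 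The capped lifted transitions of $\pn$, together with the (reversible) transitions and transfer moves of the gadgets, constitute the reversible \emph{body} of $\pn_W$. Finally, a single transition $t_{\mathit{fin}}$ leading to $\output$ is guarded so that it can eventually be fired only after the simulated marking reaches $\m'$ and the capacity has been handed back to the gadgets, which are then rewound: by clauses~(2) and~(3) of \Cref{lem:count:gadget} this rewinding can complete only when exactly the right number of tokens is returned, pinning the simulated marking down to $\m'$, and clause~(4) of \Cref{lem:count:gadget} makes every misaligned gadget configuration a dead marking so that no alternative progress is possible; when $t_{\mathit{fin}}$ eventually fires, the only nonzero place is $\output$.

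For correctness, first suppose $\m \reach \m'$. By \Cref{lem:rev:en} there is a $c_n$-bounded witness, which the capped simulation replays to reach $\omarked{1}$; and \emph{every} marking reachable from $\imarked{1}$ can in turn reach $\omarked{1}$, because the body of $\pn_W$ is reversible: from any reachable marking one first unwinds back to the marking reached just after $t_{\mathit{init}}$, and then replays the witness, its boundedness ensuring no cap is violated en route. Thus $\pn_W$ is $1$-sound, and in fact classically sound, since $t_{\mathit{init}}$ fires from $\imarked{1}$, every transition of $\pn_n$ becomes enabled after $t_{\mathit{init}}$ by clause~(5) and the quasi-liveness of $\pn_n$ from $\m_n$ and from $\m_n'$ stated in \Cref{lem:count:gadget}, and the lifted transitions of $\pn$ are quasi-live because those of $\pn$ are quasi-live from $\m$. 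Conversely, if $\m \not\reach \m'$, then $t_{\mathit{fin}}$ is never eventually enabled---the simulated marking never reaches $\m'$, so the capacity that would be returned to the gadgets no longer matches what they require, and the rewinding jams by clause~(4) of \Cref{lem:count:gadget}---so the marking reached just after $t_{\mathit{init}}$ is reachable, differs from $\omarked{1}$, and cannot reach $\omarked{1}$; hence $\pn_W$ is neither $1$-sound nor classically sound.

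The hard part---the ``subtle'' use of known results that \Cref{lem:rev:en,lem:count:gadget} are tailored for---is the interface between the bounded simulation and the counting gadget: arranging that the capacity manufactured by $\pn_n$ is \emph{cleanly} re-absorbed so that the completing run of $\pn_W$ ends in precisely $\omarked{1}$; turning ``the simulation covers $\m'$'' into ``the simulation equals $\m'$'', via the instance normalisation and the rigidity clauses~(2)--(4) of \Cref{lem:count:gadget} used as exact counters; and ruling out, in the positive case, every spurious deadlock or dead marking other than $\omarked{1}$, in particular the clause-(4) dead configurations living inside $\pn_n$. Keeping \emph{every} lifted and transfer transition reversible, so that the ``unwind then replay'' argument applies uniformly to all reachable markings, is what makes the soundness direction go through.
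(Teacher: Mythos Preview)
Your plan mirrors the paper's: reduce from reversible Petri net reachability, bound witness runs via \Cref{lem:rev:en}, build capacity places fed by the counting gadget of \Cref{lem:count:gadget}, and make the body of the workflow net reversible so that any reachable marking can be unwound and then steered to $\omarked{1}$ along a bounded witness.

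The genuine gap is the quasi-liveness normalisation. You assert that one may assume every transition of $\pn$ is quasi-live from $\m$ while the instance remains EXPSPACE-hard, but you do not justify this. Quasi-liveness of a transition is a coverability query, so non-quasi-live transitions cannot be identified and pruned in polynomial time; it is not established that the hard instances produced by the commutative-semigroup encoding already have this property; and bolting on a reversible ``flood'' gadget that enables every transition would enlarge the reachability class of $\m$ and could make $\m'$ reachable when it was not. The paper does not attempt this normalisation. Instead it equips the workflow net with a disjoint ``simple'' branch (transitions $t_{\mathrm{simple}}, t_{\mathrm{simple}2}$ and place $p_{\mathrm{simple}}$) that, from $\imarked{1}$, deposits $\norm{\pn}$ tokens in every $p$ and every cap $\overline{p}$---enough to enable each lifted transition once---and then collects them back into $\output$. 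This branch never interacts with the ``hard'' branch carrying out the bounded simulation, so it leaves the $1$-soundness argument untouched yet guarantees quasi-liveness of all lifted transitions regardless of whether $\pn$ itself is quasi-live from $\m$. Your remaining normalisations are likewise unnecessary: the paper uses two shared copies of $\pn_n$ (one to fill all caps with $c_n$ tokens, one to reabsorb exactly $c_n$ from each) together with the invariant $\n[p]+\n[\overline{p}]=c_n$ and clauses~(2)--(4) of \Cref{lem:count:gadget}, which already force ``covers $\m'$'' to coincide with ``equals $\m'$'' inside the construction, without any preprocessing of the instance.
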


\begin{proof}
    We give a reduction from the reachability problem for reversible
    Petri nets. This problem is known to be
    EXPSPACE-complete~\cite{CLM76,MM82}. Let $\pn = (P, T, F)$ be a reversible
    Petri net, and let $\m, \m'$ be two markings for which we would like
    to know whether $\m \reach \m'$ in $\pn$.

    Let $n \defeq \size{\pn, \m, \m'}$. Let $c_n$ be
    the value given by \Cref{lem:rev:en} for $n$.
    Let
    $\pn_n= (P_n, T_n, F_n)$ be the Petri net given
    by \Cref{lem:count:gadget} for $c_n$.

    \begin{figure*}[!h]
    \centering
    \begin{tikzpicture}[node distance=1.25cm, transform shape, scale=0.9]
        \tikzstyle{place}=[circle,thick,draw=blue!75,fill=blue!20,minimum size=6mm]
        \tikzstyle{transition}=[rectangle,thick,draw=black!75,
        fill=black!20,minimum size=4mm]
        \tikzstyle{red place}=[place,draw=red!75,fill=red!20]
        \tikzstyle{copy place}=[place,draw=\phaseTwoColor!75,fill=\phaseTwoColor!20]

        \node [red place,tokens=1, label=left:{$\initial$}] (i)                                    {};
        \node [red place, label=above:{$p_{\mathrm{start}}$}] (q0) [right = 2cm of i]                      {};
        \node [red place, label=above:{$p_{\mathrm{inProgress}}$},right= 2cm of q0] (c1) {};
        \node [red place, label=above:{$p_{\mathrm{cover}}$}] (q1) [right= 4cm of c1]                      {};
        \node [red place, label=right:{$\output$}] (q2) [right = 3cm of q1]                      {};

        \node [place, label=above:{$p_1$},above=1.5cm of q0] (p1)        {};
        \node [copy place, label=right:{$\overline{p}_1$},below=1.5cm of q0] (notp1)  {};
        \node [place, label=above:{$p_2$},right=2cm of p1] (p2)        {};
        \node [copy place, label=below:{$\overline{p}_2$},right=2cm of notp1] (notp2)  {};
        \node [red place, label=left:{$p_{\mathrm{simple}}$},above=3.5cm of i] (p3)        {};

        \node [transition] (t1) [label=left:{$t_{\mathrm{hard}}T_2^* t_{\mathrm{start}}$}] at (notp1 -| i) {}
        edge[pre,\phaseTwoDecColor] (i)
        edge[post,\phaseTwoDecColor] (q0)
        edge[post,\phaseTwoDecColor] node[above] {$c_n$} (notp1)
        edge[post,bend right=40,\phaseTwoDecColor] node[above] {$c_n$} (notp2)
        ;
\begin{pgfonlayer}{bg}
        \node [transition] (te1) [label=left:{$t_{\mathrm{simple}}$}] at (p1 -| i) {}
        edge[pre,gray,densely dotted,thick] (i)
        edge[post,gray,densely dotted,thick] node[above] {$\norm{\pn}$} (p1)
        edge[post,bend left=40,gray,densely dotted,thick] node[above] {$\norm{\pn}$} (p2)
        edge[post,gray,densely dotted,thick] node[right] {$\norm{\pn}$} (notp1)
        edge[post,gray,densely dotted,thick] node[above] {} (p3)
        ;

        \path[gray, densely dotted, thick, ->]
        (te1) edge[out=-15, in=140, looseness=2] node[xshift=38.5pt, yshift=-42.5pt] {$\norm{\pn}$} (notp2)
        ;

\end{pgfonlayer}
        \node [transition, label=above:{$t_{\m}$}] (tm) at ($(q0)!0.5!(c1)$) {}
        edge[post] (c1)
        edge[pre] (q0)
        edge[pre] (notp1)
        edge[post] (p1)
        ;
        \node [transition, label=above:{$t_{\m'}$}] (tm') at ($(q1)!0.5!(c1) + (0.5,0.1)$) {}
        edge[post,bend left] (q1)
        edge[pre,bend right] (c1)
        edge[post,bend left] (notp2)
        edge[pre] (p2)
        ;
        \node [transition, label=below:{$\ \ t_{\m'}^{-1}$}] (tm'm1) at ($(q1)!0.5!(c1) - (0.5,0.1)$) {}
        edge[pre,bend right] (q1)
        edge[post,bend left] (c1)
        edge[pre] (notp2)
        edge[post] (p2)
        ;
        \node [transition, label=above:{$t_{\mathrm{reach}}T_3^*t_{\mathrm{end}}$}] (tc) at ($(q1)!0.5!(q2)$) {}
        edge[pre,\phaseTwoDecColor] (q1)
        edge[post,\phaseTwoDecColor] (q2)
        edge[pre,bend left,\phaseTwoDecColor] node[above] {$c_n$} (notp2)
        edge[pre,bend left=50,\phaseTwoDecColor] node[above] {$c_n$} (notp1)
        ;
\begin{pgfonlayer}{bg}
        \node [transition] (te2) [label = right:{$t_{\mathrm{simple}2}$}] at (p3 -| q2) {}
        edge[pre,bend right=5,gray,densely dotted,thick] (p3)
        edge[pre,bend right=10,gray,densely dotted,thick] node[above] {$\norm{\pn}$} (p1)
        edge[pre,gray,densely dotted,thick] node[above] {$\norm{\pn}$} (p2)
        edge[pre,bend left=5,gray,densely dotted,thick] node[left,near start] {$\norm{\pn}$} (notp1)
        edge[pre,bend left=30,gray,densely dotted,thick] node[left, near start] {$\norm{\pn}$} (notp2)
        edge[post,gray,densely dotted,thick] (q2)
        ;
\end{pgfonlayer}
%
%
    \end{tikzpicture}
    \caption{A workflow net $\pn'$ which is classically sound iff $\m \reach \m'$ in the reversible Petri net $\pn = (P,T)$. In the example, $P = \set{p_1, p_2}$, $\m = (1,0)$ and $\m' = (0,1)$. The original places are blue, their copies are green, and other new places are red. We omit the transitions in $T_1$ that originated from $T$ (recall that these transitions are modified to consume and produce tokens also in green places), and we omit the place $p_{\mathrm{canFire}}$ (used only to allow transitions in $T_1$ to fire). We only sketch transitions in $T_2$ and $T_3$ (and some other transitions), by writing the intuitive meaning of the gadgets that add/remove $c_n$ tokens (these ``transitions'' are marked with a different color).
    The transition $t_{\mathrm{hard}}$ initiates the bottom part of $\pn'$ (by filling the green places with $c_n$ tokens) that checks $\m \reach \m'$. The transition $t_{\mathrm{simple}}$ initiates the top part of $\pn'$. We denote transitions in the top part with dotted gray color. This part is rather trivial and its only purpose is to ensure quasi-liveness of transitions in $T_1$ (by filling blue and green places with $\lVert \pn \rVert$ tokens).
}
    \label{fig:soundness-expspace-hard}
\end{figure*}

    We construct a workflow net
    $\pn' = (P', T', F')$ such that $\pn'$ is classically sound if and only if $\m \trans{*} \m'$
    in $\pn$. To avoid any confusion, we will denote markings in $\pn'$ by $\n$, $\n'$, etc. 
  
    The construction will ensure that
    \begin{align}\label{eq:main}
    \m \trans{*} \m' \text{ in  } \pn \text{ iff } \pn' \text{ is classically sound}.
    \end{align}
    Moreover, $1$-soundness of $\pn'$ will imply $\m \trans{*} \m'$, which will prove that both classical soundness and $1$-soundness are EXPSPACE-hard.

    Formally,
    the set of places $P'$ consists of:
    $P$; its disjoint copy $\overline{P} \defeq \set{\overline{p} \mid p \in
      P}$; seven extra places
\[
      \{\initial, \output, p_{\mathrm{start}}, p_{\mathrm{inProgress}}, p_{\mathrm{cover}}, p_{\mathrm{simple}}, p_{\mathrm{canFire}}\};
\]
two disjoint copies of $P_n$ (from \Cref{lem:count:gadget}), with one copy of $b$ removed.
One of the copies will be marked with $\heartsuit$ to avoid any confusion, thus we write \eg\ $p^\heartsuit \in P_n^\heartsuit$.
The two places $b$ and $b^\heartsuit$ are merged into a single place denoted $b$.

Before presenting the transitions, we would like to emphasise that, intuitively,
    place $\overline{p} \in \overline{P}$ will contain a
    ``budget'' of tokens that is an upper bound on how many more tokens can be present in $p$.
    Most of the time, for every marking $\m$ and place $p \in P$, we will keep $\m[p] + \m[\overline{p}] = c_n$ as an invariant.

    In \Cref{fig:soundness-expspace-hard}, we present the most relevant parts of $\pn'$.
    Formally, the set of transitions is divided into four subsets $T' = T_1 \cup T_2 \cup T_3 \cup T_4$.
    Transitions will be defined by giving $\pre{t'}[p]$ and $\post{t'}[p]$. The values are zero on unmentioned places.
    
    First, for every transition $t \in T$, we define $t' \in T_1$ by:
    \begin{itemize}
     \item $\pre{t'}[p] \defeq \pre{t}[p]$ and $\post{t'}[p] \defeq \post{t}[p]$ for all $p \in P$;
     \item $\pre{t'} [\overline{p}] \defeq \post{t}[p]$ and $\post{t'} [\overline{p}] \defeq \pre{t}[p]$ for all $p \in P$;
     \item $\pre{t'}[p_{\mathrm{canFire}}] = \post{t'}[p_{\mathrm{canFire}}] \defeq 1$.
    \end{itemize}
It is easy to see that since $\pn$ is a reversible Petri net, for every transition in $T_1$, its reverse is also in $T_1$. We will say that $T_1$ is reversible. Notice that, for all $t' \in T_1$ and $p \in P$, the sum of tokens in $p$ and $\overline{p}$ does not change under $t'$.

Second, for every $t \in T_n$, we add $t' \in T_2$ such that:
    \begin{itemize}
     \item $\pre{t'}[p] \defeq \pre{t}[p]$ and $\post{t'}[p] \defeq \post{t}[p]$ for all $p \in P_n$;
     \item $\pre{t'}[\overline{p}] \defeq \pre{t}[b]$ and $\post{t'}[\overline{p}] \defeq \post{t}[b]$ for all $\overline{p} \in \overline{P}$.
    \end{itemize}
    Intuitively, places in $\overline{P}$ behave as $b$ to initialise the budget of $c_n$ tokens.
    Similarly, for every $t^\heartsuit \in T_{n}^\heartsuit$, we add $t' \in T_3$ such that:
    \begin{itemize}
     \item $\pre{t'}[p^\heartsuit] \defeq \pre{t}[p^\heartsuit]$ and $\post{t'}[p^\heartsuit] \defeq \post{t}[p^\heartsuit]$ for all $p^\heartsuit \in P^\heartsuit_n$;
     \item $\pre{t'}[\overline{p}] \defeq \pre{t}[b]$ and $\post{t'}[\overline{p}] \defeq \post{t}[b]$ for all $\overline{p} \in \overline{P}$.
    \end{itemize}
    Note that since $\pn_n$ is reversible, both $T_2$ and $T_3$ are reversible.
    
    The set $T_4$ consists of the ten remaining transitions
    \[
    \set{t_{\mathrm{hard}}, t_{\mathrm{start}}, t_{\m}, t_{\m'}, t_{\m'}^{-1}, t_{\mathrm{isEmpty}}, t_{\mathrm{reach}}, t_{\mathrm{reach}}^{-1}, t_{\mathrm{simple}}, t_{\mathrm{simple}2}}.
    \]
    Intuitively, the first two are needed to initialise places in $\overline{P}$ with $c_n$ tokens; the next three transitions respectively
    add $\m$, $\m'$ and $-\m'$ to $P$; the next three transitions transfer tokens towards the final places; and the last two transitions are needed for quasi-liveness. Formally,
    \begin{itemize}
    \item $\pre{t_{\mathrm{hard}}}[\initial] = \post{t_{\mathrm{hard}}}[s] = \post{t_{\mathrm{hard}}}[c] \defeq 1$;

    \item $\pre{t_{\mathrm{start}}}[f] = \pre{t_{\mathrm{start}}}[c] = \post{t_{\mathrm{start}}}[p_{\mathrm{start}}] \defeq 1$;

    \item $\post{t_{\m}}[p] = \pre{t_{\m}}[\overline{p}] \defeq \m[p]$ for all $p \in P$; and $\pre{t_{\m}}[p_{\mathrm{start}}] = \post{t_{\m}}[p_{\mathrm{inProgress}}] = \post{t_{\m}}[p_{\mathrm{canFire}}] \defeq 1$;

    \item $\pre{t_{\m'}}[p] = \post{t_{\m'}}[\overline{p}] \defeq \m'[p]$ for all $p \in P$; $\pre{t_{\m'}}[p_{\mathrm{inProgress}}] = \pre{t_{\m}}[p_{\mathrm{canFire}}] = \post{t_{\m'}}[p_{\mathrm{cover}}] \defeq 1$; and $t_{\m'}^{-1}$ is its reverse transition;

    \item $\pre{t_{\mathrm{reach}}}[p_{\mathrm{cover}}] = \post{t_{\mathrm{reach}}}[f^{\heartsuit}] = \post{t_{\mathrm{reach}}}[c^\heartsuit] \defeq 1$; and $t_{\mathrm{reach}}^{-1}$ is its reverse transition;

    \item $\pre{t_{\mathrm{end}}}[s^{\heartsuit}] = \pre{t_{\mathrm{end}}}[c^{\heartsuit}] = \post{t_{\mathrm{end}}}[\output] \defeq 1$;
    \item $\pre{t_{\mathrm{simple}}}[\initial]=\post{t_{\mathrm{simple}}}[p_{\mathrm{simple}}] = \post{t_{\mathrm{simple}}}[p_{\mathrm{canFire}}] \defeq 1$; and $\post{t_{\mathrm{simple}}}[p] = \post{t_{\mathrm{simple}}}[\overline{p}] \defeq \norm{\pn}$ for all $p \in P$;
    \item $\pre{t_{\mathrm{simple}2}}[p] = \pre{t_{\mathrm{simple}2}}[\overline{p}] \defeq \norm{\pn}$ for all $p \in P$; and $\pre{t_{\mathrm{simple}2}}[p_{\mathrm{simple}}] = \pre{t_{\mathrm{simple}2}}[p_{\mathrm{canFire}}] = \post{t_{\mathrm{simple}2}}[\output] \defeq 1$.
    \end{itemize}

    Recall that $P \subseteq P'$ and that $\m$ is a marking on $P$. To ease the notation, we will assume that $\m$ is a marking on $P'$ (with $0$ tokens in places from $P' \setminus P$).
    
    We are ready to prove \Cref{eq:main}. Notice that for every reachable configuration $\set{\initial \colon 1} \trans{\rho} \n$ the value $\n[p_{\mathrm{canFire}}]$ is always equal to $\n[p_{\mathrm{simple}}]$ or $\n[p_{\mathrm{inProgress}}]$ (depending on whether the first transitions of $\rho$ is $t_{\mathrm{simple}}$ or $t_{\mathrm{hard}}$). For readability, we omit the value of $p_{\mathrm{canFire}}$ in the markings of $\pn'$.
    
    $\Leftarrow$) Suppose that $\pn'$ is $1$-sound (we will not rely on $\pn'$ being quasi-live). By \Cref{lem:count:gadget}~(1), we know that
    \[
    \set{\initial \colon 1} \trans{t_{\mathrm{hard}}} \set{s \colon 1, c \colon 1} \reach \set{f \colon 1, c \colon 1, b \colon c_n} + \sum_{\overline{p} \in \overline{P}} \set{\overline{p} \colon c_n}.
    \]
    Let us denote the last marking by $\n$. Notice that
    \[
    \n \trans{t_{\mathrm{start}}t_{\m}} \set{p_{\mathrm{inProgress}} \colon 1, b \colon c_n} + \m + \sum_{\overline{p} \in \overline{P}} \set{\overline{p} \colon c_n - \m[p]}.
    \]
    We denote the latter marking by $\n'$. Since $\pn'$ is $1$-sound, $\n' \trans{\rho} \set{\output \colon 1}$ for some run $\rho$. This is possible if  $t_{\mathrm{reach}}$ was fired at least once in $\rho$. Let $\n_1 \trans{t_{\mathrm{reach}}} \n_2$ be the last time $t_{\mathrm{reach}}$ was fired in $\rho$. We claim that $\n_2 = \set{f^\heartsuit \colon 1, c^\heartsuit \colon 1, b \colon c_n} + \sum_{\overline{p}} \set{\overline{p} \colon c_n}$. Indeed, it has to be that 
    \[
    \n_2 \trans{\rho'} \set{s^\heartsuit \colon 1, c^\heartsuit \colon 1} \trans{t_{\mathrm{end}}} \set{\output \colon 1},
    \]
    where $\rho'$ uses transition only from $T_3$. By \Cref{lem:count:gadget}~(4), this is possible only if $\n_2$ is as claimed. 
    Let $\rho''$ be the prefix of the run $\rho$ from $\n'$ such that it ends in $\n_1$. Finally, $\rho''$, when restricted to $P$, witnesses reachability for $\m \reach \m'$.

 $\Rightarrow$) Suppose that $\m \reach \m'$. The proof of $1$-soundness is very technical and can be found in the appendix. In a nutshell, recall that $T_1$, $T_2$ and $T_3$ are reversible, and for $t_{\m'}, t_{\mathrm{reach}} \in T_4$ we include their reverse transitions. This allows us to revert any configuration to a configuration from which it is easy to define a run to $\set{\output \colon 1}$.
 
To conclude this implication, we need to prove that $\pn'$ is quasi-live. Indeed, from the proof of $1$-soundness it is easy to see that
$\m \reach \m'$ implies that all transitions are fireable, with the possible exception of transitions from $T_1$. However,
\[
\set{\initial \colon 1} \trans{t_{\mathrm{simple}}} \set{p_{\mathrm{simple}} \colon 1} + \sum_{p \in P} \set{p \colon \norm{\pn}, \overline{p} \colon \norm{\pn}}.
\]
From the latter configuration, any transition of $T_1$ is fireable.

    Finally, observe that $\pn'$ is a workflow net. Indeed, by taking $t_{\mathrm{simple}}$ we put tokens in $P$, and by taking 
    $t_{\mathrm{simple}2}$ we can put tokens in $\overline{P}$. Each place from copies in $\pn_n$ is on a path from $\initial$ to $\output$ by  \Cref{lem:count:gadget} (5). The remaining places are clearly on such a path by definition (see \Cref{fig:soundness-expspace-hard}).
\end{proof}

\section{Bounds on vector reachability}
\label{sec:stuff}
In this section, we present technical results that will be helpful to establish complexity bounds in the forthcoming sections. It is well-known that Petri nets are complex due to their nonnegativity constraints. Namely, markings are over $\N$ (not $\Z$), which blocks transitions from being fired whenever the amount of tokens would drop below zero. By lifting this restriction, \ie\ allowing markings over $\Z$, transitions cannot be blocked and we obtain a provably simpler model (\eg see~\cite{HaaseH14}). We recall known results that provide bounds on reachability problems for vectors over $\Z$. Based on these results, we will derive useful bounds for the next sections.

\subsection{Integer linear programs}

Given positive natural numbers $n, m > 0$, let $\mat{A} \in \Z^{m \times n}$ be an integer matrix, $\vec{b} \in \Z^{m}$ an integer vector and $\xx = (x_1,\ldots,x_n)^\transpose$ a vector of variables.
We say that $G \defeq \mat{A} \cdot \xx \geq \vec{b}$ is an \emph{$(m \times n)$-ILP}, that is, an integer linear program (ILP) with $m$ inequalities and $n$ variables.
The set of solutions of $G$ is \[\solutions{G} \defeq \{\vec{\mu} \in \Z^n \mid \mat{A} \cdot \vec{\mu} \geq \vec{b}\},\]
and the set of natural solutions is $\nsolutions{G} \defeq \solutions{G} \cap \N^n$.
We will only be interested in the natural solutions $\nsolutions{G}$ but sometimes we will need to refer to $\solutions{G}$. We shall assume that these sets are equal, by implicitly adding a new inequality for each variable specifying that it is greater or equal to $0$.

Often it is convenient to write an equality constraint, \eg $x - y = 0$. This can be simulated by two inequalities, so we will allow to define $G$ both with equalities and inequalities.

We introduce some notation about \emph{semi-linear} sets from~\cite{CH16} to obtain bounds on the sizes of solutions to ILPs.
A set of vectors is called \emph{linear} if it is of the form $L(\vec{b},P) = \set{\vec{b} + \lambda_1 \vec{p}_1 + \ldots + \lambda_k \vec{p}_k \mid \lambda_1, \ldots, \lambda_k \in \N}$, where
$\vec{b} \in \Z^n$ is a vector and $P = \set{\vec{p}_1,\ldots, \vec{p}_k} \subseteq \Z^n$ is a finite set of vectors.
A set is called \emph{hybrid linear} if it is of the form
$L(B,P) = \bigcup_{\vec{b} \in B} L(\vec{b},P)$ for a finite set of vectors $B = \set{\vec{b}_1\ldots,\vec{b}_\ell} \subseteq \Z^n$.

The \emph{size} of a finite set of vectors $B$ and of an $(m \times
n)$-ILP $G$ are defined respectively as $\norm{B} \defeq \max_{\vec{b} \in
B}\norm{\vec{b}}$ and $\norm{G} \defeq \norm{\mat{A}} + \norm{\vec{b}}
+ m + n$.

\begin{lemma}[\cite{GS78}, presentation adapted from~{\cite[Prop.~3]{CH16}}]\label{lem:ilp-decomp}
    Let $G$ be an $(m \times n)$-ILP.
    It is the case that $\solutions{G} = \bigcup_{i \in I} L(B_i, P_i)$, where
    $max_{i \in I} \norm{B_i} \leq \norm{G}^{\bigO(n \log n)}$.
\end{lemma}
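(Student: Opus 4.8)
The plan is to prove the structural (semilinearity) part by a standard polyhedral decomposition, and then read off the size bound from Cramer's rule. Recall that, by the paper's convention, the constraints of $G$ already include $\xx \ge \vec{0}$, so I would first rewrite $G$ as a single system $\mat{A}_0 \xx \ge \vec{b}_0$ with $\mat{A}_0 = \binom{\mat{A}}{\mat{I}_n}$ and $\vec{b}_0 = \binom{\vec{b}}{\vec{0}}$. The key point of doing it this way (rather than introducing $m$ slack variables) is that $\mat{A}_0$ still has exactly $n$ columns, so every square submatrix of $[\mat{A}_0 \mid \vec{b}_0]$ has order at most $n$. Let $Q \subseteq \mathbb{Q}^n$ be the rational polyhedron $\{\vec{x} : \mat{A}_0 \vec{x} \ge \vec{b}_0\}$, so that $\solutions{G} = Q \cap \Z^n$; if $Q = \emptyset$ the statement is vacuous (take $I = \emptyset$), so assume otherwise. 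Since $Q \subseteq \mathbb{Q}^n_{\ge 0}$ is pointed, Minkowski--Weyl gives $Q = \mathrm{conv}(V) + \mathrm{cone}(R)$ where $V$ is the finite vertex set and $R$ is a finite set of primitive \emph{integer} generators of the extreme rays of the recession cone $\mathrm{rec}(Q)$.

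Next I would show that $Q \cap \Z^n = \bigcup_{\vec{w} \in W} L(\vec{w}, R)$ for a suitable finite $W \subseteq Q \cap \Z^n$; this is the decomposition, with each $B_i$ a singleton $\{\vec{w}\}$ and each $P_i = R$. The inclusion $\supseteq$ is immediate since $W \subseteq Q \cap \Z^n$, $R \subseteq \mathrm{rec}(Q) \cap \Z^n$, and $Q + \mathrm{rec}(Q) = Q$. For $\subseteq$, take $\vec{\mu} \in Q \cap \Z^n$ and apply Carath\'eodory to write $\vec{\mu} = \sum_i \lambda_i \vec{v}_i + \sum_j \nu_j \vec{r}_j$ with $\vec{v}_i \in V$, $\lambda_i \ge 0$, $\sum_i \lambda_i = 1$, $\vec{r}_j \in R$, $\nu_j \ge 0$, using at most $n+1$ vertices and at most $n$ rays. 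Setting $\vec{w} \defeq \vec{\mu} - \sum_j \lfloor \nu_j \rfloor \vec{r}_j = \sum_i \lambda_i \vec{v}_i + \sum_j \{\nu_j\} \vec{r}_j$, the vector $\vec{w}$ is integral (a difference of integer vectors, as the $\vec{r}_j$ are integral), lies in $Q$ (a convex combination of vertices plus a nonnegative combination of recession directions), and lies in the bounded polytope $\mathrm{conv}(V) + \sum_j [0,1]\vec{r}_j$; hence $\vec{w}$ ranges over the finite set $W$ of integer points of that polytope, and $\vec{\mu} = \vec{w} + \sum_j \lfloor \nu_j \rfloor \vec{r}_j \in L(\vec{w}, R)$. (Finiteness of $W$ also follows from Dickson's lemma on minimal elements, but the bounded-polytope view is what yields the size estimate.)

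It remains to bound $\norm{W}$. Each vertex $\vec{v} \in V$ is the unique solution of an $n \times n$ subsystem of $\mat{A}_0 \vec{x} = \vec{b}_0$ with invertible coefficient matrix $\mat{M}$, so by Cramer's rule and Hadamard's inequality each coordinate of $\vec{v}$ is a ratio of an $n \times n$ determinant of a submatrix of $[\mat{A}_0 \mid \vec{b}_0]$ over $\det \mat{M}$, whence $\norm{\vec{v}} \le (n\,\norm{\mat{A}_0} + \norm{\vec{b}_0} + 1)^n \le (n\,\norm{G})^n$; the same argument applied to the $n-1$ tight homogeneous constraints defining an extreme ray gives $\norm{\vec{r}} \le (n\,\norm{G})^n$ for every $\vec{r} \in R$. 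Since $\vec{w}$ is a combination of at most $n+1$ vertices (with coefficients summing to $1$) and at most $n$ rays (with coefficients in $[0,1)$), we get $\norm{\vec{w}} \le (n+1)(n\,\norm{G})^n$. Finally, as $\norm{G} \ge m + n \ge n+1 \ge 2$, we have $n^n = 2^{n \log n} \le \norm{G}^{n \log n}$, so $\max_{i \in I}\norm{B_i} = \norm{W} \le (n+1)(n\,\norm{G})^n \le \norm{G}^{\bigO(n \log n)}$, as claimed.

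No individual step here is deep: the substance is Minkowski--Weyl, Carath\'eodory, and Hadamard's inequality. I expect the only real care points to be (i) keeping the nonnegativity constraints folded into the $n$-variable inequality system, since converting to an equality system with slacks would yield an exponent in $n+m$ rather than in $n$; and (ii) arranging the ``floor trick'' so that the base vectors $\vec{w}$ are simultaneously integral, contained in $Q$, and contained in a polytope whose size is controlled by the Cramer bounds on $V$ and $R$.
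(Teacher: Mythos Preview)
The paper does not give its own proof of this lemma; it is cited from \cite{GS78} with the quantitative formulation taken from \cite{CH16}. Your argument is correct and is the standard route: Minkowski--Weyl for the pointed rational polyhedron, Carath\'eodory together with the floor trick to produce bounded integer base vectors, and Cramer/Hadamard to bound vertices and primitive extreme-ray generators by $(n\norm{G})^{O(n)} \le \norm{G}^{O(n\log n)}$.

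Two small points worth tightening. First, the set $W$ as you describe it (``integer points of that polytope'') depends on which at-most-$n$ rays Carath\'eodory selected; it is cleaner to define $W$ once and for all as the integer points of $Q$ of norm at most $(n+1)(n\norm{G})^n$, which your uniform estimate already justifies. Second, you rely on the paper's standing convention that $\xx \ge \vec{0}$ is folded into $G$ in order to obtain pointedness, whereas the cited references prove the statement for arbitrary $\solutions{G} \subseteq \Z^n$, where the polyhedron may have a nontrivial lineality space that must be split off first. For the paper's purposes this restriction is harmless: the lemma is only invoked in the proof of \Cref{lem:small-ilp}, which begins precisely by assuming $\solutions{G} = \nsolutions{G}$.
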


For the forthcoming lemmas, recall that $\vec{c} = (c, \ldots, c)$.

\begin{lemma}\label{lem:small-ilp}
    Let $G$ be an $(m \times n)$-ILP. There exists a number
    $c \le \norm{G}^{\bigO(n \log n)}$ such
    that for all $\vec{\mu} \in \nsolutions{G}$, there is some $\vec{\mu}' \in
        \nsolutions{G}$ such that $\vec{\mu}' \leq \vec{\mu}$ and $\vec{\mu}' \leq \vec{c}$.
\end{lemma}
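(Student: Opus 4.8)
The statement is a "small solutions" result: every natural solution of an ILP dominates some natural solution bounded by $\vec{c}$, with $c$ only depending on $\norm{G}$ and singly-exponentially on $n$. The natural route is to leverage \Cref{lem:ilp-decomp}, which decomposes the real-over-$\Z$ solution set into hybrid linear sets $\bigcup_{i \in I} L(B_i, P_i)$ with bounded base vectors. The subtlety is that \Cref{lem:ilp-decomp} talks about $\solutions{G}$ (all integer solutions), whereas here we want to stay inside $\nsolutions{G} = \solutions{G} \cap \N^n$; but since we assumed (via the implicit nonnegativity inequalities) that $\solutions{G} \subseteq \N^n$ already, this is harmless, and every base vector $\vec{b}_i$ and every period vector $\vec{p}_i$ lies in $\N^n$.

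First I would fix $\vec{\mu} \in \nsolutions{G}$. By \Cref{lem:ilp-decomp}, $\vec{\mu} \in L(B_i, P_i)$ for some $i \in I$, so $\vec{\mu} = \vec{b}_i + \sum_{j} \lambda_j \vec{p}_j$ for some $\vec{b}_i \in B_i$, periods $\vec{p}_j \in P_i$, and coefficients $\lambda_j \in \N$. The key observation is monotonicity: since all $\vec{p}_j \in \N^n$, dropping any subset of the period terms only decreases the vector coordinatewise while staying in $L(B_i, P_i) \subseteq \solutions{G} = \nsolutions{G}$. In particular $\vec{b}_i$ itself is a natural solution with $\vec{b}_i \le \vec{\mu}$. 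Then I would set $c \defeq \max_{i \in I} \norm{B_i}$, which by \Cref{lem:ilp-decomp} satisfies $c \le \norm{G}^{\bigO(n \log n)}$, and take $\vec{\mu}' \defeq \vec{b}_i$. This gives $\vec{\mu}' \le \vec{\mu}$ and $\vec{\mu}' \le \vec{c}$ and $\vec{\mu}' \in \nsolutions{G}$, as required.

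The one point that needs a little care — and the only place I expect any friction — is justifying that $\vec{b}_i \in \nsolutions{G}$, i.e.\ that the base vectors of the decomposition are themselves solutions. This follows because $L(\vec{b}_i, P_i) \subseteq \solutions{G}$ by the correctness of the decomposition in \Cref{lem:ilp-decomp}, and $\vec{b}_i = \vec{b}_i + \sum_j 0 \cdot \vec{p}_j \in L(\vec{b}_i, P_i)$; combined with our standing assumption $\solutions{G} = \nsolutions{G}$, we get $\vec{b}_i \in \nsolutions{G}$. The domination $\vec{b}_i \le \vec{\mu}$ is immediate from nonnegativity of the periods. Everything else is bookkeeping: the bound on $c$ is copied verbatim from \Cref{lem:ilp-decomp}, and it is independent of the particular $\vec{\mu}$ chosen since it ranges only over the finitely many $B_i$. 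Hence the lemma holds with $c = \max_{i \in I}\norm{B_i} \le \norm{G}^{\bigO(n\log n)}$.
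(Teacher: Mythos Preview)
Your proposal is correct and follows essentially the same argument as the paper: invoke \Cref{lem:ilp-decomp}, use the standing assumption $\solutions{G} = \nsolutions{G}$ to ensure the periods are nonnegative, and take $\vec{\mu}' \defeq \vec{b}_i$ with $c \defeq \max_{i \in I} \norm{B_i}$. The paper's proof is terser but structurally identical.
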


\begin{proof}
  Recall that we can assume $\solutions{G} = \nsolutions{G}$. By
  \Cref{lem:ilp-decomp}, $\solutions{G} = \bigcup_{i \in I} L(B_i,
  P_i)$. We set $c \defeq \max_{i \in I} \norm{B_i}$. Let $\vec{\mu}
  \in \nsolutions{G}$. There exist $i \in I$ and $\vec{b} \in B_i$
  such that $\vec{\mu} \in L(\vec{b}, P_i)$. Note that $\vec{p} \geq
  \vec{0}$ for all $\vec{p} \in P_i$. Hence, we have $\vec{b} \in
  \nsolutions{G}$, $\vec{b} \leq \vec{\mu}$ and $\vec{b} \leq
  \vec{c}$. Thus, we can set $\vec{\mu}' \defeq \vec{b}$.
\end{proof}

\begin{restatable}{lemma}{lemSmallSolution}\label{lem:small-solution}
  Let $G = \mat{A} \cdot \vec{x} \geq \vec{b}$ be an $(m \times
  n)$-ILP, where $\vec{b} \ge \vec{0}$. There exists $c \le
  \norm{G}^{\bigO((m+n)\log(m + n))}$ such that the following holds.
  For every $\vec{\mu} \in \nsolutions{G}$, there exists $\vec{\nu}
  \in \nsolutions{G}$ such that $\vec{\nu} \le \vec{\mu}$, $\vec{\nu}
  \le \vec{c}$, and $\mat{A} \cdot \vec{\nu} \le \mat{A} \cdot
  \vec{\mu}$.
\end{restatable}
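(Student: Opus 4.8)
The statement strengthens \Cref{lem:small-ilp} by additionally requiring that the small solution $\vec\nu$ does not overshoot $\mat{A}\cdot\vec\mu$ on any row. The natural idea is to introduce extra variables to "record" the slack of $\vec\mu$ and then apply \Cref{lem:small-ilp} to an enlarged ILP. Concretely, given $G = \mat{A}\cdot\vec{x}\ge\vec{b}$ with $\vec b \ge \vec 0$, I would build an $(m'\times n')$-ILP $G'$ over variables $\vec x \in \N^n$ together with fresh slack variables $\vec s \in \N^m$, whose constraints are $\mat{A}\cdot\vec{x} - \vec{s} = \vec{0}$ (equivalently $\mat{A}\cdot\vec{x} = \vec{s}$, which automatically forces $\mat{A}\cdot\vec x \ge \vec 0 \ge \dots$ once combined with $\vec s \ge \vec 0$; to keep $\vec x$ feasible for $G$ we also keep $\mat{A}\cdot\vec x \ge \vec b$). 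Here $m' = 2m$ (or $m+m$) and $n' = n+m$, so $m'+n' = \bigO(m+n)$, and $\norm{G'} = \bigO(\norm{G})$ since the new rows only contain entries of $\mat A$, $\pm 1$, and $\vec b$. A solution $\vec\mu \in \nsolutions{G}$ lifts to $(\vec\mu, \mat{A}\cdot\vec\mu) \in \nsolutions{G'}$; note $\mat A\cdot\vec\mu \ge \vec b \ge \vec 0$, so this is indeed a natural vector.

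Now apply \Cref{lem:small-ilp} to $G'$: there is $c \le \norm{G'}^{\bigO(n'\log n')} \le \norm{G}^{\bigO((m+n)\log(m+n))}$ such that for the solution $(\vec\mu,\mat{A}\cdot\vec\mu)$ there exists $(\vec\nu,\vec t)\in\nsolutions{G'}$ with $(\vec\nu,\vec t)\le(\vec\mu,\mat{A}\cdot\vec\mu)$ and $(\vec\nu,\vec t)\le\vec{c}$. Componentwise this gives exactly $\vec\nu\le\vec\mu$, $\vec\nu\le\vec c$, and $\vec t\le\mat{A}\cdot\vec\mu$. Since $(\vec\nu,\vec t)$ satisfies the constraints of $G'$, we have $\mat{A}\cdot\vec\nu = \vec t$, hence $\mat{A}\cdot\vec\nu \le \mat{A}\cdot\vec\mu$, and also $\mat{A}\cdot\vec\nu = \vec t \ge \vec 0$ together with (since we kept the rows $\mat A \cdot \vec x \ge \vec b$) $\mat A\cdot\vec\nu \ge \vec b$, so $\vec\nu\in\nsolutions{G}$. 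That is precisely the claim, with the bound $c$ absorbed into $\norm{G}^{\bigO((m+n)\log(m+n))}$.

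The main thing to be careful about — and the only real obstacle — is the bookkeeping on sizes: I must make sure that encoding the equality $\mat{A}\cdot\vec x = \vec s$ as two inequalities, adjoining the original inequalities, and adding the nonnegativity constraints all leaves $\norm{G'}$ polynomially related to $\norm{G}$ (it does: new entries are drawn from $\{\mat A[i,j]\} \cup \{0,\pm 1\} \cup \{\vec b[i]\}$, and $m',n'$ grow only by additive/constant factors), and that the exponent $n'\log n' = (n+m)\log(n+m)$ matches the claimed $(m+n)\log(m+n)$. One should also double-check that \Cref{lem:small-ilp} is being applied to an ILP in the paper's normal form where $\solutions{G'} = \nsolutions{G'}$; since the lemma's proof only uses that the period vectors $P_i$ are nonnegative, this is harmless. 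Everything else is routine componentwise reasoning. A slightly slicker alternative avoids the extra $m$ variables by instead replacing each row of $\mat A$ by $\pm$ itself: add to $G$ the single variable-free trick of bounding $\mat A\cdot\vec x$ from above by the constant vector obtained from $\vec\mu$ — but that constant depends on $\vec\mu$, so it cannot be baked into a fixed $G$; the slack-variable formulation above is the clean way to make the bound uniform in $\vec\mu$.
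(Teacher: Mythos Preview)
Your proposal is correct and follows essentially the same approach as the paper: introduce $m$ fresh slack variables $y_j$ (your $\vec s$) recording the values $\mat{A}\cdot\vec{x}$, keep the original inequalities $\mat{A}\cdot\vec x \ge \vec b$, apply \Cref{lem:small-ilp} to the enlarged ILP, and read off $\mat{A}\cdot\vec\nu \le \mat{A}\cdot\vec\mu$ from the slack components. The only minor slip is the row count: with the original $m$ inequalities plus the equality $\mat{A}\cdot\vec x = \vec s$ encoded as $2m$ inequalities you get $m' = 3m$, not $2m$; this does not affect the asymptotic bound.
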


\subsection{Steinitz Lemma}

Let us recall the Steinitz Lemma~\cite{steinitz1913bedingt} based on the presentation of~\cite{eisenbrand2019proximity}.

    \begin{figure*}[!h]
\centering
 \begin{tikzpicture}[every node/.append style={circle, inner sep=1.3pt},extended line/.style={shorten >=-#1,shorten <=-#1},
 extended line/.default=1cm, transform shape, scale=0.9]
\begin{axis}[grid=both, xmax=15,ymax=7.5,
                  xmin=-2.5,ymin=-2.5,
                  axis lines=middle,
                        xtick distance={1},
                        ytick distance={1},
              ticks=none,
              axis equal image
              ]
\node[fill] (v) at (axis cs:0,0) {};
\node[fill] (v0) at (axis cs:-1,2) {};
\node[fill] (v1) at (axis cs:1,3) {};
\node[fill] (v2) at (axis cs:3,4) {};
\node[fill] (v3) at (axis cs:5,5) {};
\node[fill] (v4) at (axis cs:7,6) {};
\node[fill] (v5) at (axis cs:9,5) {};
\node[fill] (v6) at (axis cs:11,4) {};
\node[fill] (v7) at (axis cs:12,3) {};
\node[fill,label=below:{$\vec{z}$}] (v8) at (axis cs:13,2) {};

\path[->]
    (v) edge[->,line width=1pt,>=stealth,red] node[black,left] {$\vec{x}_0$} (v0)
    (v0) edge[->,line width=1pt,>=stealth,red] node[black,above] {$\vec{x}_1$} (v1)
    (v1) edge[->,line width=1pt,>=stealth,red] node[black,above] {$\vec{x}_2$} (v2)
    (v2) edge[->,line width=1pt,>=stealth,red] node[black,above] {$\vec{x}_3$} (v3)
    (v3) edge[->,line width=1pt,>=stealth,red] node[black,above] {$\vec{x}_4$} (v4)
    (v4) edge[->,line width=1pt,>=stealth,red] node[black,above] {$\vec{x}_5$} (v5)
    (v5) edge[->,line width=1pt,>=stealth,red] node[black,above] {$\vec{x}_6$} (v6)
    (v6) edge[->,line width=1pt,>=stealth,red] node[black,above right] {$\vec{x}_7$} (v7)
    (v7) edge[->,line width=1pt,>=stealth,red] node[black,above right] {$\vec{x}_8$} (v8)
;

\begin{pgfonlayer}{bg}    
    \draw[blue!20,line width=2cm,extended line=0.5cm] (axis cs:0,0) -- (axis cs:13,2);
    \end{pgfonlayer}
\end{axis}
\end{tikzpicture}
\hspace*{1.5cm}
\begin{tikzpicture}[every node/.append style={circle, inner sep=1.3pt},extended line/.style={shorten >=-#1,shorten <=-#1},
 extended line/.default=1cm, transform shape, scale=0.9]

\begin{axis}[grid=both, xmax=15,ymax=7.5,
                  xmin=-2.5,ymin=-2.5,
                  axis lines=middle,
                        xtick distance={1},
                        ytick distance={1},
              ticks=none,
              axis equal image
              ]
\node[fill] (v) at (axis cs:0,0) {};
\node[fill] (v0) at (axis cs:-1,2) {};
\node[fill] (v1) at (axis cs:1,1) {};
\node[fill] (v2) at (axis cs:2,0) {};
\node[fill] (v3) at (axis cs:3,-1) {};
\node[fill] (v4) at (axis cs:5,0) {};
\node[fill] (v5) at (axis cs:7,1) {};
\node[fill] (v6) at (axis cs:9,2) {};
\node[fill] (v7) at (axis cs:11,1) {};
\node[fill,label=below:{$\vec{z}$}] (v8) at (axis cs:13,2) {};

\path[->]
    (v) edge[->,line width=1pt,>=stealth,red] node[black,left] {$\vec{x}_0$} (v0)
    (v0) edge[->,line width=1pt,>=stealth,red] node[black,above] {$\vec{x}_5$} (v1)
    (v1) edge[->,line width=1pt,>=stealth,red] node[black,above right] {$\vec{x}_8$} (v2)
    (v2) edge[->,line width=1pt,>=stealth,red] node[black,below left] {$\vec{x}_7$} (v3)
    (v3) edge[->,line width=1pt,>=stealth,red] node[black,below right] {$\vec{x}_2$} (v4)
    (v4) edge[->,line width=1pt,>=stealth,red] node[black,above] {$\vec{x}_3$} (v5)
    (v5) edge[->,line width=1pt,>=stealth,red] node[black,above] {$\vec{x}_4$} (v6)
    (v6) edge[->,line width=1pt,>=stealth,red] node[black,above] {$\vec{x}_6$} (v7)
    (v7) edge[->,line width=1pt,>=stealth,red] node[black,above] {$\vec{x}_1$} (v8)
;

\begin{pgfonlayer}{bg}    
    \draw[blue!20,line width=2cm,extended line=0.5cm] (axis cs:0,0) -- (axis cs:13,2);
    \end{pgfonlayer}
\end{axis}
\end{tikzpicture}
\caption{An example of \Cref{lem:extended-steinitz} in dimension $d = 2$. The vectors $\vec{x}_0,\ldots,\vec{x}_n$ form a path from $\vec{0}$ to $\vec{z}$. The colored background highlights points that are within some bounded distance from the line $\vec{0}$ to $\vec{z}$ (the bound 
 depends on $d$ and $\vec{x}_i$, but not on $\vec{z}$). In the right picture, the vectors are reordered so that they all fit within the bound. The additional constraints are that: the first vector $\vec{x}_0$ remains first ($\pi(0) = 0$); and, intuitively, that the points are getting closer to $\vec{z}$ ($0 \le c_0 \le c_1 \le \ldots \le c_n$).}\label{fig:steinitz}
    \end{figure*}

\begin{lemma}\label{lemma:steinitz}
    Let $\vec{x}_1, \dots, \vec{x}_n \in \R^d$ be such that $\sum_{i=1}^{n} \vec{x}_i = \vec{0}$ and $\norm{\vec{x}_i} \leq 1$ for all $i$.
    There exists a permutation $\pi$ on $[1..n]$ such that
    \begin{align*}
      \norm{\sum_{j=1}^i \vec{x}_{\pi(j)}} \leq d
      && \text{for all } i \in [1..n].
    \end{align*}
\end{lemma}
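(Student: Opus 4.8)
The claim is the classical Steinitz Lemma: for vectors $\vec{x}_1,\dots,\vec{x}_n \in \R^d$ summing to $\vec{0}$ with $\norm{\vec{x}_i} \le 1$, there is a reordering all of whose prefix sums have norm at most $d$. The standard proof proceeds by reverse induction, constructing the permutation from the last vector backwards: at stage $k$ (having already committed to which vectors occupy positions $k{+}1,\dots,n$) one uses a linear-algebra argument to pick a suitable vector for position $k$. Concretely, let $S_k$ denote the set of indices not yet placed, so $|S_k| = k$, and let $\vec{r}_k \defeq \sum_{i \in S_k} \vec{x}_i$ (this is minus the sum of the vectors already placed). I would prove the invariant that one can always choose the placement so that the partial sum $\vec{x}_{\pi(1)} + \cdots + \vec{x}_{\pi(k)} = \vec{r}_k$ lies in the convex hull of $\{0\}$ rescaled appropriately — more precisely, that there exist coefficients $\lambda_i \in [0,1]$ for $i \in S_k$ with $\sum_{i \in S_k} \lambda_i = k - d$ (when $k \ge d$) and $\sum_{i \in S_k}\lambda_i \vec{x}_i = \vec{r}_k$, equivalently $\sum_{i\in S_k}(1-\lambda_i)\vec{x}_i$ has the partial sum structure we want.

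**Key steps in order.** First, set up the reverse induction: we peel off vectors one at a time from the end. Second, the crux is a \emph{basic feasible solution / vertex} argument. Suppose at stage $k$ (with $k > d$) we have a vector of multipliers $(\lambda_i)_{i \in S_k} \in [0,1]^{S_k}$ satisfying the $d$ equality constraints $\sum_{i \in S_k} \lambda_i \vec{x}_i = \vec{r}_k$ together with $\sum_{i\in S_k}\lambda_i = k-d$. The feasible region of these $d+1$ linear equality constraints intersected with the cube $[0,1]^{S_k}$ is a nonempty polytope (nonempty because $\lambda_i \equiv (k-d)/k$ works, as $\sum \vec{x}_i = \vec{r}_k$ is the full remaining sum). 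Pick a vertex: at a vertex of a polytope in $k$ variables defined by $d+1$ equalities, at least $k - (d+1)$ of the box constraints $\lambda_i \in \{0,1\}$ are tight. So at least $k-d-1$ coordinates are integral ($0$ or $1$). Since $\sum \lambda_i = k-d$, not all the integral ones can be $0$; pick an index $j \in S_k$ with $\lambda_j = 0$ if one exists, else... — actually the right move is: among the $\le d+1$ fractional coordinates plus whichever integral coordinates remain, select the index $j$ to remove so that the system stays feasible at the next stage. I place $x_j$ at position $k$, set $S_{k-1} = S_k \setminus \{j\}$, and must produce new multipliers on $S_{k-1}$ summing to $k-1-d$ with weighted sum $\vec{r}_{k-1} = \vec{r}_k - \vec{x}_j$; this is where the precise vertex bookkeeping matters, choosing $j$ to be a coordinate whose removal can be absorbed. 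Third, for the base case $k \le d$: here the bound $\norm{\vec{r}_k} \le d$ is immediate since $\vec{r}_k = \sum_{i \in S_k}\vec{x}_i$ is a sum of at most $d$ vectors of norm $\le 1$, regardless of order — so any ordering of the last $d$ vectors works, and the prefix sums for $k > d$ stay bounded by the invariant. Finally, assemble: for $k > d$, the partial sum $\vec{x}_{\pi(1)}+\cdots+\vec{x}_{\pi(k)} = \vec{r}_k = \sum_{i\in S_k}\vec{x}_i$, and the invariant gives $\vec{r}_k = \sum_{i \in S_k}\lambda_i \vec{x}_i + \sum_{i\in S_k}(1-\lambda_i)\vec{x}_i$ where $\sum(1-\lambda_i) = k - (k-d) = d$, so $\norm{\vec{r}_k - \sum_{i\in S_k}\lambda_i\vec{x}_i} = \norm{\sum(1-\lambda_i)\vec{x}_i} \le d$ — but I actually want the stronger reading that $\vec{r}_k$ itself is within $d$ of the origin, which follows from maintaining $\sum_{i\in S_k}\lambda_i\vec{x}_i = \vec{r}_k$ with the $\lambda$'s forming the "already placed in scaled form" combination; I would state the invariant so that the committed prefix sum equals a convex-type combination witnessing $\norm{\vec{r}_k}\le d$ directly.

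**Main obstacle.** The delicate point is the inductive step's bookkeeping: choosing which fractional/integral coordinate $j$ to expel at each stage so that (i) the new multiplier vector on $S_{k-1}$ still lies in the box, (ii) it still satisfies the $d$ vector-equalities relative to the updated target $\vec{r}_{k-1}$, and (iii) its coordinate sum drops to exactly $k-1-d$. Getting a clean argument here — essentially re-solving the basic-feasible-solution argument at each stage, or arguing that a vertex of the stage-$k$ polytope projects to a feasible point of the stage-$(k-1)$ polytope after deleting a suitable integral coordinate — is the part that requires care, since a naive choice of $j$ can break feasibility. I expect to handle this by the cleaner formulation: directly prove by reverse induction that for each $k$ there is a set $S_k$ of size $k$ and a point $\vec{y}_k$ with $\norm{\vec{y}_k} \le d$, $\vec{y}_k \in \mathrm{conv}\{\sum_{i \in S_k \setminus \{j\}} \vec{x}_i : j \in S_k\} + $ (bounded correction), reducing everything to the single vertex lemma applied once per stage. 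Everything else — the base case, the norm estimates, verifying $\norm{\vec{x}_i}\le 1$ is used only through "sum of $d$ of them has norm $\le d$" — is routine.
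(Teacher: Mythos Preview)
The paper does not prove this lemma: it is stated as the classical Steinitz Lemma, cited to Steinitz and to the presentation of Eisenbrand and Weismantel, and then used as a black box to derive the next result (the reordering bound of \Cref{lem:extended-steinitz}). So there is no in-paper argument to compare your proposal against.

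Your plan is the standard proof (the Grinberg--Sevast'yanov argument that yields the constant $d$), and the architecture is right: reverse induction on $k$, maintain a linear-programming certificate on the remaining index set $S_k$, and at each step pass to a vertex of the feasible polytope so that at most $d+1$ multipliers are fractional. One correction, however: your invariant is mis-stated in a way that loses the bound. You write $\sum_{i\in S_k}\lambda_i\vec{x}_i=\vec{r}_k$ together with $\sum\lambda_i=k-d$; from this one only gets $\norm{\vec{r}_k}=\norm{\sum\lambda_i\vec{x}_i}\le\sum\lambda_i=k-d$, not $\le d$, and your own computation $\norm{\vec{r}_k-\sum\lambda_i\vec{x}_i}\le d$ then collapses to the triviality $\norm{\vec{0}}\le d$. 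The correct invariant is $\sum_{i\in S_k}\lambda_i\vec{x}_i=\vec{0}$ with $\sum\lambda_i=k-d$, which yields $\vec{r}_k=\sum_{i\in S_k}(1-\lambda_i)\vec{x}_i$ and hence $\norm{\vec{r}_k}\le\sum(1-\lambda_i)=d$. Your feasibility witness $\lambda_i\equiv(k-d)/k$ is then valid only at the base case $k=n$, where $\sum\vec{x}_i=\vec{0}$ --- which is exactly where it is needed. With this fix, the step you flag as the main obstacle (choosing which index to expel so that feasibility persists on $S_{k-1}$) is indeed where the real work sits; your diagnosis of the difficulty is accurate.
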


The following formulation of the lemma, which is depicted graphically in
\Cref{fig:steinitz}, will be more convenient for us.

\begin{restatable}{lemma}{lemExtendedSteinitz}\label{lem:extended-steinitz}
    Let $\vec{x}_0,\vec{x}_1, \dots, \vec{x}_n \in \Z^d$,
    $b \defeq \max_{j=0}^n \norm{\vec{x}_j}$, and $\vec{z} \defeq \sum_{j=0}^n \vec{x}_j$.
    There exists a permutation $\pi$ of $[0..n]$ such that: $\pi(0) = 0$; and there exist $0 \le c_0  \le c_1 \le \ldots \le c_n$, where
    \begin{align*}
      \norm{\sum_{j=0}^i \vec{x}_{\pi(j)} - c_i \cdot \vec{z}} \le b(d+2)
      && \text{for all } i \in [0..n].
    \end{align*}
\end{restatable}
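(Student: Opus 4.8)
The plan is to reduce the statement to the classical Steinitz Lemma (\Cref{lemma:steinitz}) by a scaling-and-completion trick. First I would handle the degenerate case $b = 0$, where all $\vec{x}_j = \vec{0}$ and any permutation with $c_i = 0$ works. So assume $b > 0$. The idea is to build, from the given vectors $\vec{x}_0, \dots, \vec{x}_n$ together with $\vec{z}$, a family of vectors summing to $\vec{0}$ and of norm at most $1$, apply \Cref{lemma:steinitz}, and then read off the permutation and the coefficients $c_i$. Concretely, consider the $n+2$ vectors $\vec{x}_0, \vec{x}_1, \dots, \vec{x}_n, -\vec{z}$. Their sum is $\vec{z} - \vec{z} = \vec{0}$, and each has norm at most $b$ (note $\norm{\vec{z}} = \norm{\sum_j \vec{x}_j} \le (n+1) b$, so this is \emph{not} bounded by $b$ in general — this is the first subtlety I need to address).

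To get around the fact that $\vec{z}$ may be large, I would split $-\vec{z}$ into $n+1$ equal pieces. That is, apply \Cref{lemma:steinitz} to the $2(n+1)$ vectors consisting of $\vec{x}_0/b, \dots, \vec{x}_n/b$ together with $n+1$ copies of $-\vec{z}/((n+1)b)$; each has norm $\le 1$ and the total sum is $\vec{0}$. The lemma gives a permutation of these $2(n+1)$ vectors keeping all partial sums within norm $d$. Now I read the output left to right: at each prefix, let $c_i \cdot b$ (after rescaling by $b$) be a bookkeeping count of how many copies of $-\vec{z}/((n+1)b)$ have appeared so far, divided by $n+1$; the $c_i$ are automatically non-decreasing and lie in $[0,1]$. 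The partial-sum bound translates to $\norm{\sum_{j \le i} \vec{x}_{\pi(j)} - c_i \vec{z}} \le bd$. To force $\pi(0) = 0$, I would either note that the classical lemma can be run with $\vec{x}_0$ placed first (re-deriving it with this constraint costs at most an additive $\norm{\vec{x}_0} \le b$ in the bound, since prepending one bounded vector shifts all partial sums by at most $b$), or equivalently run the lemma on $\vec{x}_1, \dots, \vec{x}_n, (\text{copies of } -\vec{z})$ relative to the shifted origin $\vec{x}_0$ and absorb the shift. Either way the bound degrades by at most $b$, and a further additive $b$ slack comes from rounding the fractional copy-count, giving the stated $b(d+2)$.

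The main obstacle I anticipate is the bookkeeping around the ``copies of $-\vec{z}$'' and making the $c_i$ come out monotone and in the right range while the interleaving bound stays clean — in particular checking that at the moment just after $m$ copies of $-\vec{z}/((n+1))$ have been placed, the running partial sum of the $\vec{x}$-part differs from $\tfrac{m}{n+1}\vec{z}$ by only the $O(d)$ error, and that this persists (up to one more unit of $b$) at the intermediate steps where a copy of $-\vec{z}$ has been only ``partially'' accounted for. Setting $c_i$ to be the number of $-\vec{z}$-copies appearing strictly before position $i$ divided by $n+1$ handles monotonicity for free, and the $+2$ in $b(d+2)$ is exactly the budget for the $\vec{x}_0$-first constraint plus this rounding gap. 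Once that is pinned down, the rest is the direct translation of \Cref{lemma:steinitz}'s inequality back through the scaling by $b$.
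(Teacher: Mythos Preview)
Your proposal is essentially the same construction as the paper's proof. Both arguments pad the family $\vec{x}_0,\dots,\vec{x}_n$ with enough rescaled copies of $-\vec{z}$ so that the whole collection sums to $\vec{0}$, apply the classical Steinitz lemma, and then read off $\pi$ and the $c_i$ from the positions at which the original $\vec x_j$'s land in the resulting ordering. The only cosmetic difference is the number of copies: you take $n+1$ copies of $-\vec{z}/(n+1)$, while the paper takes $c \defeq \norm{\vec z}$ copies of $-\vec z/c$; either choice keeps all pieces of norm at most $b$ and works the same way.

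One point to tighten: your accounting of the ``$+2$'' is slightly off. There is no rounding gap --- once you restrict attention to the partial sums at the positions $s_0 < s_1 < \cdots < s_n$ where the original $\vec x_j$'s sit, the Steinitz bound already gives $\bigl\lVert \sum_{j=0}^{i} \vec x_{\pi(j)} - c_i \vec z \bigr\rVert \le bd$ exactly, with $c_i$ equal to the fraction of $-\vec z$-copies consumed up to position $s_i$. The extra $2b$ is spent entirely on forcing $\pi(0)=0$: your ``prepend $\vec x_0$'' idea does not directly apply because the remaining vectors no longer sum to $\vec 0$, so what actually works (and what the paper does) is to run Steinitz on the full collection and then swap $\vec x_0$ into the first slot of the $V$-subsequence, which perturbs each relevant partial sum by at most $\lVert \vec x_0\rVert + \lVert \vec a_{s_0}\rVert \le 2b$ via the triangle inequality.
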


\section{Generalised soundness}
\label{sec:generalised}
A Petri net $\pn$ is \emph{$\Z$-bounded} from a marking $\m$ if there
exists $b \in \N$ such that $\m \zreach \m' \geq \vec{0}$ implies
$\m' \le \vec{b}$ (\ie we replace ${\reach}$ with ${\zreach}$ in the
definition of boudedness). Otherwise, we say that $\pn$
is \emph{$\Z$-unbounded}. Observe that being $\Z$-bounded does not
mean that the set of reachable markings is bounded by below, but only
from above.

Let $k \ge 0$. We say that $\pn$ is \emph{strongly $k$-sound} if for every $\m \in \N^P$ such that
$\imarked{k} \zreach \m$, it holds that $\m \reach \omarked{k}$. Note
that every strongly $k$-sound net is also $k$-sound.

The aim of the next three subsections is to prove the following theorem.

\begin{theorem}\label{theorem:pspaceupper}
  Generalised soundness is in PSPACE.
\end{theorem}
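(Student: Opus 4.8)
The plan is to reduce generalised soundness to a PSPACE-checkable condition by combining three ingredients: (i) a bound $K$, singly-exponential in $\size{\pn}$, such that $\pn$ is generalised sound iff it is $k$-sound for all $k \in [1..K]$; (ii) a reduction of $k$-soundness to strong $k$-soundness, which is closed under the $\Z$-reachability relation and hence better behaved; and (iii) a PSPACE procedure that, for a fixed $k$ written in binary with $\log k = \bigO(\mathrm{poly}(\size{\pn}))$, decides strong $k$-soundness. Since $K$ is singly exponential, iterating the procedure of (iii) over all $k \le K$ costs only polynomial space, so the overall algorithm runs in PSPACE.

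For step (i), the idea is to use the bounds on vector reachability from \Cref{sec:stuff}. If $\pn$ fails to be $k$-sound, there is a marking $\m$ with $\imarked{k} \reach \m$ but $\m \not\reach \omarked{k}$; I would first argue this can be witnessed already at the level of $\Z$-reachability, i.e.\ that $\pn$ fails to be \emph{strongly} $k$-sound for some $k$. A non-strong-$k$-sound witness is a pair of markings with $\imarked{k} \zreach \m$ and $\m \not\reach \omarked{k}$. The first condition $\imarked{k} \zreach \m$ is equivalent to solvability of an ILP whose variables are the Parikh image of the run together with the entries of $\m$ and the number $k$; by \Cref{lem:small-ilp} or \Cref{lem:small-solution}, if such a witness exists for some $k$, one exists with $k \le \norm{G}^{\bigO(n\log n)} = 2^{\bigO(\mathrm{poly}(\size{\pn}))}$. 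The second condition, non-reachability of $\omarked{k}$, is upward-closed in an appropriate sense and is preserved when $k$ is decreased along the witness, so a small $k$ still works; here is where the Steinitz-type reordering (\Cref{lem:extended-steinitz}) enters, to keep intermediate markings bounded while scaling $k$ down, and this is the step I expect to be the main obstacle — controlling the non-reachability side while shrinking the witness.

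For step (ii), recall that strong $k$-soundness implies $k$-soundness, so it suffices to show that generalised soundness is \emph{equivalent} to strong $k$-soundness for all $k$, or at least that checking strong $k$-soundness for all small $k$ certifies generalised soundness. Concretely I would show: if $\pn$ is generalised sound then it is $\Z$-bounded (since a $\Z$-unbounded behaviour could be pumped into a $k$-unsoundness witness for suitable $k$), and for a $\Z$-bounded net the $\Z$-reachable markings from $\imarked{k}$ coincide (up to the bound) with ordinary reachable markings, making strong $k$-soundness and $k$-soundness coincide. Thus generalised soundness $\Leftrightarrow$ $\pn$ is strongly $k$-sound for all $k \in [1..K]$.

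For step (iii), fix $k \le K$ with $\log k$ polynomial. Strong $k$-soundness asks: for every $\m \ge \vec 0$ with $\imarked{k} \zreach \m$, is $\m \reach \omarked{k}$? Its negation is an existential statement: there is $\m$ reachable in the $\Z$-semantics that cannot reach $\omarked{k}$. The $\Z$-reachable set is effectively semilinear and computable in PSPACE from the net (it is the natural solution set of an ILP of polynomial size once $k$ is fixed), so one can nondeterministically guess a base point / periods description of a candidate $\m$ of singly-exponential norm, verify $\imarked{k}\zreach \m$ by solving the ILP in PSPACE, and then verify $\m \not\reach \omarked{k}$. For the latter, since a generalised-sound net is bounded, one may work inside a bounded region: either the net is $\Z$-bounded by a value $b = 2^{\bigO(\mathrm{poly})}$ — checkable in PSPACE via the ILP characterisation of $\Z$-boundedness — in which case the reachability graph from $\m$ has singly-exponentially many nodes and $\m \reach \omarked{k}$ is a PSPACE graph-reachability query on an implicitly represented graph; or it is $\Z$-unbounded, and we already output "not generalised sound". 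Putting the pieces together, the algorithm is: check $\Z$-boundedness; if it fails, reject; otherwise, for each $k \in [1..K]$, check strong $k$-soundness by the nondeterministic PSPACE subroutine above (using Savitch to get rid of nondeterminism); accept iff all checks pass. Every subroutine uses polynomial space, and the outer loop reuses space, so the whole procedure is in PSPACE, establishing \Cref{theorem:pspaceupper}.
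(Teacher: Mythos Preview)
Your overall architecture matches the paper's, but two load-bearing steps are not established, and one of them rests on a false claim.

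\textbf{Step (ii) is wrong as stated.} You assert that for a $\Z$-bounded nonredundant workflow net, the $\Z$-reachable nonnegative markings from $\imarked{k}$ ``coincide (up to the bound) with ordinary reachable markings'', and hence that strong $k$-soundness and $k$-soundness coincide for each fixed $k$. This is not true: $\Z$-boundedness only says the nonnegative $\Z$-reachable set is finite, not that it equals the reachable set; ordering constraints can still make a $\Z$-reachable marking unreachable even inside a bounded region. The paper never claims this. The correct bridge is \Cref{lem:zreachnreach}/\Cref{lem:small-ell}: if $\imarked{k}\zreach\m$ then $\imarked{k+\ell}\reach \m+\m'$ for some bounded ``budget'' $\ell$ and some $\m'$ with $\imarked{\ell}\reach\m'$. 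Using this, \Cref{lem:unssound-unsound} shows that failure of \emph{strong} $k$-soundness yields failure of ordinary $k'$-soundness for a bounded $k'$; it does \emph{not} identify the two notions for a fixed $k$.

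\textbf{Step (i) is the crux, and you have only named it, not proved it.} Shrinking an ILP solution via \Cref{lem:small-solution} gives you a small $(k',\m')$ with $\imarked{k'}\zreach\m'$, but it gives you no control over whether $\m'\reach\omarked{k'}$ fails; your ``non-reachability is upward-closed and preserved when $k$ is decreased'' is exactly what has to be shown and is not a general monotonicity fact. The paper's argument (\Cref{thm:first-unsound-small}) is a minimality-and-decomposition trick: take the \emph{least} $k$ for which strong $k$-soundness fails; if $k$ exceeded the ILP bound $c$, then the witnessing solution $\vec\mu$ splits as $\vec\mu'+\vec\pi$ with both $\vec\mu',\vec\pi\in\nsolutions{\text{ILP}_\pn}$ and both having $\kappa$-component strictly below $k$ (this uses the extra conclusion $\mat A\vec\mu'\le\mat A\vec\mu$ of \Cref{lem:small-solution}). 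By minimality of $k$, both pieces are strongly sound, so $\marking{\vec\mu'}\reach\omarked{\vec\mu'[\kappa]}$ and $\marking{\vec\pi}\reach\omarked{\vec\pi[\kappa]}$; adding these runs contradicts $\m\not\reach\omarked{k}$. Steinitz (\Cref{lem:extended-steinitz}) is used elsewhere, namely in \Cref{lem:small-ell} for the $\Z$-to-$\N$ conversion and in \Cref{lem:z-unbounded} to bound reachable markings, not here.

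Once (i) and (ii) are fixed along these lines, your step (iii) is essentially correct, though note the paper checks ordinary $k$-soundness (not strong $k$-soundness) for each small $k$: after establishing the $\Z$-boundedness witness bound (\Cref{lem:zbounded}, \Cref{lem:z-unbounded}), it either finds a too-large reachable marking (reject) or does bounded-graph reachability in PSPACE.
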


The proof has two parts. First, we prove that if there is a $k$ for which the net is not $k$-sound, then there is also such a $k$ bounded exponentially. Second, we prove that $k$-soundness for exponentially bounded $k$ can be verified in PSPACE.

\subsection{Nonredundant workflow nets}

Fix a workflow net $\pn = (P, T, F)$. We say that a place $p \in P$ is \emph{nonredundant} if there exists $k \in \N$ such that $\imarked{k} \reach \m$ and $\m[p] > 0$. By removing a redundant place $p$ from $\pn$, we mean removing $p$ from $P$ and all transitions $t \in T$ such that $(\pre{t})[p] > 0$. With the remaining transitions restricted to the domain $P \setminus \set{p}$, we obtain a new workflow net $\pn' \defeq (P\setminus \set{p}, T')$. It is clear that $\pn$ is $k$-sound if and only if $\pn'$ is $k$-sound for all $k\in \N$. Thus, in particular, this procedure preserves generalised soundness.

It will be convenient to assume that all places in the studied workflow nets are nonredundant. At first, it might seem that this requires coverability checks for every place. However, since the number of initial tokens is arbitrary, finding redundant places amounts to a simple polynomial-time saturation procedure. 
More details can be found in~\cite[Thm.~8, Def.~10, Sect.~3.2]{HSV04} (and in the appendix). We will call workflow nets without redundant places \emph{nonredundant workflow nets}\footnote{The results in~\cite{HSV04} deal with batch workflow nets, which are in particular nonredundant workflow nets.}. To summarise we conclude the following.

\begin{restatable}{proposition}{propBatch}\label{claim:batch}
  Given a workflow net $\pn$, one can identify and remove all
  redundant places from it in polynomial time. The resulting workflow
  net $\pn'$ is nonredundant. Moreover, $\pn$ is $k$-sound if and only
  if $\pn'$ is $k$-sound for all $k\in \N$.
\end{restatable}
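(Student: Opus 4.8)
The plan is to prove Proposition~\ref{claim:batch} by making precise the informal claim that redundant places can be found via a polynomial-time saturation procedure, and then verifying that removing them changes nothing relevant. I would proceed in three steps. First, I would characterise the set of nonredundant places combinatorially: a place $p$ is nonredundant iff there is a \emph{finite} run $\imarked{k} \trans{\rho} \m$ with $\m[p] > 0$ for \emph{some} $k$, and since $k$ is unconstrained we may always pump up the initial marking, so this is equivalent to the existence of a sequence of transitions $t_{i_1}, \ldots, t_{i_\ell}$ such that, when fired in order starting from a sufficiently large multiple of $\set{\initial : 1}$, the place $p$ eventually carries a token. The key point is that "sufficiently large" can always be achieved because no transition consumes from $\initial$ and every place/transition lies on a path from $\initial$ to $\output$; so redundancy is a purely \emph{reachability-of-support} question, insensitive to exact token counts.

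Second, I would spell out the saturation procedure. Maintain a set $S \subseteq P$ of places known to be markable, initialised to $\set{\initial}$. Repeatedly: if there is a transition $t \in T$ with $\support{\pre{t}} \subseteq S$ (every input place of $t$ is already in $S$), add all places in $\support{\post{t}}$ to $S$. Iterate until no change. This clearly terminates in at most $|P|$ rounds, each round scanning all transitions, so it runs in polynomial time in $\size{\pn}$. The correctness claim is that $S$ equals the set of nonredundant places. The inclusion $S \subseteq \set{\text{nonredundant}}$ follows by induction on the order in which places enter $S$: if $\support{\pre{t}} \subseteq S$ and each $q \in \support{\pre{t}}$ is markable from some $\imarked{k_q}$, then by choosing $k$ large enough and concatenating the corresponding runs (again using that $\initial$ is never consumed, so tokens accumulated along the way are not destroyed by routing to other places — one has to be slightly careful and instead argue: fire each witnessing run with enough copies so that afterwards all of $\support{\pre{t}}$ is simultaneously marked, which is possible by firing the runs in sequence on disjoint "batches" of initial tokens), one reaches a marking enabling $t$, and firing $t$ marks every place in $\support{\post{t}}$. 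Conversely, if $p$ is nonredundant, witnessed by $\imarked{k} \trans{t_{i_1} \cdots t_{i_\ell}} \m$ with $\m[p]>0$, then a straightforward induction on $\ell$ shows every place ever carrying a token along this run lies in $S$: transition $t_{i_j}$ can only fire if its input places were marked, hence in $S$ by induction hypothesis, hence its output places are added to $S$.

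Third, having identified the redundant places $P \setminus S$, I would remove them all at once (removing each redundant place, together with every transition consuming from it, as defined in the excerpt) to obtain $\pn'$, and check the three required properties. That $\pn'$ is nonredundant: removing redundant places and their consuming transitions cannot make a previously nonredundant place redundant, since any witnessing run for a surviving place uses only transitions whose input places are nonredundant (by the saturation characterisation, a transition firable from some $\imarked{k}$ has all input places in $S$), so none of the removed transitions appear in it; one also checks $\pn'$ is still a workflow net (the initial and final places survive as they are trivially nonredundant — $\initial$ by definition, $\output$ because $1$-soundness is not assumed but every place lies on an $\initial$-to-$\output$ path, so $\output$ is markable; and every remaining place/transition still lies on such a path since the removed vertices were exactly those unreachable in the saturation graph). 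That $k$-soundness is preserved for every $k$: for a single redundant place $p$, observe that $\imarked{k} \reach \m$ in $\pn$ never marks $p$, hence the reachable markings of $\pn$ from $\imarked{k}$ are in bijection (via projection forgetting the coordinate $p$) with those of $\pn'$, and the transitions available are exactly the same on those markings, so $\imarked{k}\reach\m\reach\omarked{k}$ holds in $\pn$ iff it holds in $\pn'$; iterating over all removed places gives the claim. The main obstacle, and the one point deserving care, is the forward direction of the saturation correctness: concatenating witnessing runs for the several input places of a transition into a single run that marks them all simultaneously — this works precisely because tokens in $\initial$ are inexhaustible and never consumed, so one can replay each sub-run on a fresh batch of initial tokens without disturbing tokens already deposited elsewhere; I would state this pumping argument carefully as a lemma. (The reference~\cite[Thm.~8, Def.~10, Sect.~3.2]{HSV04} already contains essentially this argument for batch workflow nets, so citing it and sketching the adaptation suffices.)
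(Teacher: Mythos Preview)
Your saturation approach is correct and is precisely what the paper's main text alludes to (attributing it to~\cite{HSV04}); however, the appendix gives a \emph{different} proof of the polynomial-time detection step. There, the paper observes that $p$ is nonredundant iff $p$ is coverable from $\imarked{1}$ under the \emph{continuous} semantics of $\pn$ (where transitions may be scaled by nonnegative rationals), because continuous reachability from $\m$ to $\m'$ holds iff $k\m \reach k\m'$ for some $k>0$; it then invokes the polynomial-time decision procedure for continuous coverability of~\cite{FH15}. Your direct saturation is more elementary and self-contained; the paper's route is a one-line reduction once that external result is available. Both compute the same set $S$, and the paper leaves the $k$-soundness-preservation part to the main text exactly as you sketch it.

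Two small corrections to your write-up. First, the justification for concatenating witnessing runs is misphrased: transitions \emph{do} consume from $\initial$ (what is forbidden is \emph{producing} into $\initial$), so ``tokens in $\initial$ are inexhaustible and never consumed'' is false. The argument you actually need is plain monotonicity: if $\imarked{k_1}\trans{\rho_1}\m_1$ and $\imarked{k_2}\trans{\rho_2}\m_2$, then $\imarked{k_1+k_2}\trans{\rho_1}\m_1+\imarked{k_2}\trans{\rho_2}\m_1+\m_2$; iterating with enough copies yields sufficiently many tokens in every input place of $t$ simultaneously. Second, your claim that $\output$ is always nonredundant is false: take places $\initial,p,q,\output$, a transition $t_1$ consuming one token each from $\initial$ and $q$ and producing one in $p$, and $t_2$ consuming one from $p$ and producing one each in $q$ and $\output$; every vertex lies on an $\initial$--$\output$ path in the underlying graph, yet from any $\imarked{k}$ nothing fires and $\output$ is redundant. (The paper itself does not address this corner case; such nets are trivially $k$-unsound for all $k\ge 1$ and can be dispatched before invoking the proposition.)
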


In the following lemma, intuitively, we show that the initial budget is small for nonredundant workflow nets.

\begin{lemma}\label{lem:placecover}
    Let $\pn = (P, T, F)$ be a nonredundant workflow net and let $p \in P$ be a place. There exists $k < (\norm{T} + 2)^{\abs{T}}$ such that
    $\imarked{k} \reach \m$ and $\m[p] > 0$.
\end{lemma}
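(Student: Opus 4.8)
The statement says: for a nonredundant workflow net $\pn$ and a place $p$, there is a (relatively small) number $k < (\norm{T}+2)^{\abs{T}}$ of initial tokens from which $p$ can be marked. The plan is to start from \emph{some} witness $(k_0,\rho_0)$ with $\imarked{k_0}\trans{\rho_0}\m$ and $\m[p]>0$, which exists by nonredundancy, and then shrink it using an integer linear programming argument. The key observation is that whether a place gets marked at the end of a run depends only on the \emph{Parikh image} (multiset of transitions used) of the run and on the enabledness constraints along the way, but for the purpose of merely reaching \emph{some} marking that is positive on $p$ we can first find a small Parikh vector and then realise it as an actual run by adding enough initial tokens.

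First I would set up the ILP. Let $x_t$ be a variable counting the number of occurrences of transition $t\in T$, let $y$ be a variable for the number of initial tokens $k$, and write the system $G$ expressing: (i) the final marking $\m = \imarked{y} + \sum_t x_t \effect{t}$ is nonnegative, i.e.\ $\imarked{y}+\sum_t x_t\effect{t}\ge \vec 0$; and (ii) $\m[p]\ge 1$. The witness $(\rho_0,k_0)$ gives a natural solution $\vec\mu$ of $G$ (with $\vec\mu[y]=k_0$ and $\vec\mu[x_t]$ the number of occurrences of $t$ in $\rho_0$). The system $G$ has $\abs{T}+1$ variables and $\bigO(\abs P)$ inequalities, with coefficients bounded by $\norm{\pn}$, so by \Cref{lem:small-ilp} (or \Cref{lem:small-solution}) there is a solution $\vec\nu\le\vec\mu$ componentwise with $\vec\nu\le \vec c$ for $c$ polynomially bounded in $\abs T$ and logarithmically in $\norm\pn$; in particular $\vec\nu[y]\le c$ is a small number of initial tokens, and the total number of transition occurrences $\sum_t\vec\nu[x_t]$ is at most $\abs T\cdot c$.

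The remaining — and I expect main — obstacle is turning this small Parikh-feasible solution $\vec\nu$ into an \emph{actual fireable run} from $\imarked{k}$ for some small $k$. A solution of $G$ need not be the Parikh image of a real run: transitions might be "blocked" for lack of tokens. The trick, exploiting that we are allowed to choose $k$, is to add extra tokens in the initial place — but that alone does not push tokens into arbitrary interior places. Here nonredundancy is used more carefully: because every place is nonredundant, there is a "saturation" order in which places become markable (this is exactly the polynomial-time saturation procedure referenced via \Cref{claim:batch}), and along this order one can fire, for each place, a short prefix that deposits a token there; doing this for all places requires at most $\abs T$ such prefixes, hence a run whose length and whose token demand are bounded. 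Combining: the number of initial tokens needed is at most the small ILP bound plus the tokens consumed along these saturating prefixes, and a careful bookkeeping of how tokens can be nested/consumed yields the stated bound $k < (\norm{T}+2)^{\abs T}$, where the exponential in $\abs T$ comes from iterating $\abs T$ many times a multiplicative blow-up of size $\norm{T}+2$ (each transition can consume up to $\norm{T}$ tokens per place, plus a $+2$ slack for the place itself and for passing through $\initial$). I would carry out the argument by induction on $\abs T$, or equivalently on the saturation order, proving the stronger statement that after processing the first $i$ transitions of the saturation sequence, a set of target places can be simultaneously marked from $\imarked{k_i}$ with $k_i\le(\norm{T}+2)^i$.
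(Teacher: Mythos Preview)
Your proposal eventually lands on the right idea, but the ILP detour in the first two paragraphs is a dead end and should be dropped. First, the generic bound from \Cref{lem:small-ilp} is $\norm{G}^{\bigO(n\log n)}$, which with $n=\abs{T}+1$ variables and $\norm{G}$ involving $\abs{P}$ and $\norm{\pn}$ is strictly worse than the target $(\norm{T}+2)^{\abs{T}}$. Second, and more importantly, a small solution of your system $G$ is only a $\Z$-reachability witness; turning it into a fireable run requires pre-loading intermediate places with enough tokens, but you can only add tokens to $\initial$. The paper's tool for converting $\Z$-reachability to reachability (\Cref{lem:small-ell}) does exactly this---and it \emph{invokes the present lemma} to do so. So this route is circular.

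The induction you sketch in the last paragraph is essentially the paper's proof, but you are missing the concrete mechanism that realises the ``multiplicative blow-up''. The paper argues by induction on $n=\abs{\support{\rho}}$: given a run $\pi_{n-1}$ from $\imarked{k_{n-1}}$ (with $k_{n-1}<(\norm{T}+2)^{n-1}$) whose support is $\{t_1,\dots,t_{n-1}\}$ and which marks every place increased by those transitions, one \emph{repeats $\pi_{n-1}$ exactly $\norm{T}+1$ times} from $\imarked{(k_{n-1}+1)(\norm{T}+1)}$. This leaves at least $\norm{T}+1$ tokens in every previously marked place and in $\initial$, which is enough to fire the new transition $t_n$ once and still keep $\ge 1$ token wherever needed. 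The recurrence $k_n=(k_{n-1}+1)(\norm{T}+1)<(\norm{T}+2)^n$ then gives the bound. Your intuition about a factor $\norm{T}+2$ per step is correct; the missing ingredient is that the step is realised by \emph{run repetition}, not by solving an ILP.
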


\begin{proof}
    A transition $t$ \emph{increases} a place $p'$ if $\effect{t}[p'] > 0$. We say that a run $\rho$ \emph{increases} $p'$ if there exists $t \in \support{\rho}$ that increases $p'$.
    For the proof of the lemma, we assume that $p \neq \initial$, as otherwise it suffices to define $k = 1$.

    We prove that for all run $\imarked{k'} \trans{\rho} \m'$, 
    there is a run $\pi$
    such that: $\support{\pi} = \support{\rho}$, and $\imarked{k} \trans{\pi} \m$ for some $k < (\norm{T} + 2)^{n}$ and $\m$, 
    where $\m[p'] \geq 1$ for all places $p'$ increased by $\rho$.
    Note that, since $\pn$ is a nonredundant workflow net,
    if we exhibit such a run then we are done as there exists $\rho$ that increases $p$.
    
    Let $\imarked{k'} \trans{\rho} \m'$.
    The proof is by induction on $n$, where $\support{\rho} = \{t_1, \ldots, t_n\}$.
    Assume $n = 1$. The only transition used by $\rho$ is $t_1$, which increases $p$.
    Recall that $\norm{T}$ is the maximal number occurring on any arc of $\pn$.
    Since workflow nets start with tokens only in place $\initial$,
    we must have $\imarked{\norm{T}} \geq \pre{\pi}$.
    It suffices to define $\pi \defeq t_1$ and $k \defeq \norm{T} < (\norm{T} + 2)$.

    For the induction step, assume $n > 1$ and that the lemma holds for $n-1$. Let $\rho_{n-1}$ be the longest prefix of $\rho$ such that $\support{\rho_{n-1}} = \set{t_1,\ldots, t_{n-1}}$.
    The induction hypothesis for $\rho_{n-1}$ yields $k_{n-1} < (\norm{T} + 2)^{n-1}$,
    and $\pi_{n-1}$ with $\support{\pi_{n-1}} = \set{t_1,\dots,t_{n-1}}$. Let $\imarked{k_{n-1}} \trans{\pi_{n-1}} \m_{n-1}$.
    Note that $\support{\pre{t_{n}}} \allowbreak \subseteq \support{\post{\pi_{n-1}}} \cup \set{i}$ since $\rho$ is a run, where
    $t_{n}$ is fired.
    By repeating $\norm{T}+1$ times the run $\pi_{n-1}$,
    we get
    \[
    \imarked{(k_{n-1} + 1) \cdot (\norm{T} + 1)} \reach
    \imarked{\norm{T}+1} + (\norm{T}+1) \cdot \m_{n-1}.
    \]
    To ease the notation, let $\n \defeq \imarked{\norm{T}+1} + (\norm{T}+1) \cdot \m_{n}$.
    By definition of $\m_{n-1}$, it holds that $\n[p'] \geq \norm{T} + 1$ for all $p' \in \post{\pi}$.
    Furthermore, we can fire $t_{n}$ from $\n$. Let $\n \trans{t_{n}} \m$.
    To conclude, consider a place $p'$ increased by $\rho$. If it is increased by one of the transitions $t_1,\ldots,t_{n-1}$, then after firing $t_n$ at least one token was left in $p'$. Otherwise, $p'$ is increased by $t_n$. In both cases, we have $\m[p] \ge 1$. It remains to observe that $k = (k_{n-1} +1) \cdot (\norm{T} + 1) < (\norm{T}+2)^n$.
\end{proof}

\subsection{Unsoundness occurs for small numbers}\label{subsec:smallk}
Recall a result by van Hee et al.\ that establishes a connection between reachability relations ${\zreach}$ and ${\reach}$.

\begin{lemma}[adaptation of~{\cite[Lemma~12]{HSV04}}]\label{lem:zreachnreach}
    Let $\pn$ be a nonredundant workflow net, and let $\m$ be a marking for which there exists $k \geq 0$ satisfying
    $\imarked{k} \zreach \m$. There exists $\ell \geq 0$
    such that $\imarked{k+\ell} \reach \m + \omarked{\ell}$.
\end{lemma}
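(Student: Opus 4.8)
The plan is to prove \Cref{lem:zreachnreach} by a careful analysis of a $\Z$-run witnessing $\imarked{k} \zreach \m$, using the Steinitz-type reordering of \Cref{lem:extended-steinitz} together with the ``small initial budget'' guarantee of \Cref{lem:placecover} (equivalently, nonredundancy) to turn a $\Z$-run into a genuine run after adding enough tokens. Let $\rho = t_1 \cdots t_n$ be such that $\imarked{k} \ztrans{\rho} \m$, and let $\vec{x}_1, \dots, \vec{x}_n$ be the effects $\effect{t_i}$, so that $\sum_i \vec{x}_i = \m - \imarked{k}$. The obstruction to $\rho$ being a real run is that some intermediate $\Z$-marking $\imarked{k} + \sum_{j \le i} \vec{x}_j$ has a negative coordinate. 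The idea is: if we start from $\imarked{k + \ell}$ instead of $\imarked{k}$ for a suitable $\ell$, we want to fire the same multiset of transitions (in some order) and end in $\m + \omarked{\ell}$ while never going negative.

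First I would set up the reordering. Apply \Cref{lem:extended-steinitz} to the vectors $\vec{x}_0, \vec{x}_1, \dots, \vec{x}_n$ where $\vec{x}_0$ is a dummy (or the first ``bulk'' move) and $\vec{z} = \m - \imarked{k} + (\text{whatever bulk we prepend})$; this yields a permutation $\pi$ and coefficients $0 \le c_0 \le \dots \le c_n$ so that every prefix sum $\sum_{j \le i} \vec{x}_{\pi(j)}$ stays within bounded distance $b(d+2)$ of the ray towards $\vec{z}$. The point of this is that along the ray, the coordinates corresponding to places in $\pn$ stay nonnegative once we have enough initial tokens — because if $\m \ge \vec{0}$ and we travel along the segment from $\imarked{k}$ to $\m$, the only place that might dip is $\initial$ (which only decreases) and places temporarily holding tokens. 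Concretely, for a sufficiently large $\ell$, starting from $\imarked{k+\ell}$ the $\initial$-coordinate stays $\ge \ell$ for a while, and the ``bounded deviation from the ray'' controls all the other coordinates: each intermediate marking is $\ge \m' - b(d+2)\cdot\vec{1}$ along the way, which we can make nonnegative by pumping. This is where \Cref{lem:placecover} / nonredundancy enters: since every place is reachable with a bounded initial budget, there is a fixed run from $\imarked{\ell_0}$ (for suitable $\ell_0$) that deposits at least $b(d+2)$ tokens in every place simultaneously (take the union/concatenation of the witnessing runs for each place, scaled up), providing the ``reservoir'' that absorbs negative excursions.

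So the concrete construction I would carry out is: pick $\ell$ large enough that (i) there is a prefix run from $\imarked{\ell}$ reaching a marking $\geq b(d+2)\cdot\vec{1}$ on every place of $\pn$ (using nonredundancy and scaling, as in \Cref{lem:placecover}), and (ii) $\ell$ exceeds $k$ plus the deviation bound so the $\initial$ place never underflows. Then, from $\imarked{k+\ell}$, first run the reservoir-building prefix (which leaves $\ge k$ tokens in $\initial$ and a full reservoir elsewhere), then fire $t_{\pi(1)}, t_{\pi(2)}, \dots, t_{\pi(n)}$ in the Steinitz order: at each step the current marking equals (reservoir) $+$ $\bigl(\imarked{k} + \sum_{j\le i}\vec{x}_{\pi(j)}\bigr)$, and since the second summand is, coordinatewise, $\ge -b(d+2)\cdot\vec{1}$ on non-$\initial$ places (by the deviation bound, as it hugs the ray from $\imarked k$ to $\m\ge\vec0$) and $\ge 0$ on $\initial$ (since $\initial$ is monotone along $\rho$ and we have the $+\ell$ buffer), the reservoir keeps everything nonnegative and each transition is enabled. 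After all $n$ transitions we are at (reservoir) $+ \m$; finally we need the reservoir plus the leftover $\omarked{\ell'}$-type tokens to add up to exactly $\m + \omarked{\ell}$ — I would absorb any mismatch by defining $\ell$ to account for the reservoir run's effect, or equivalently note that the reservoir run itself only moves tokens from $\initial$ toward $\output$-reachable places, and by composing with a further run that flushes the reservoir to $\output$ (possible since every place lies on a path to $\output$), we land in $\m + \omarked{\ell}$ for the final $\ell$.

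The main obstacle I expect is the bookkeeping in the last step: ensuring the ``reservoir'' tokens can be cleanly routed to $\output$ so that the final marking is \emph{exactly} $\m + \omarked{\ell}$ rather than $\m$ plus some junk spread over internal places. The clean way around this is to not build a literal reservoir but instead to exploit \Cref{lem:placecover} more cleverly: for the specific set of places that $\rho$ touches, there is a bounded $k_0$ and a run $\imarked{k_0} \reach \n_0$ with $\n_0$ large on exactly those places; one then shows $\imarked{k+k_0 \cdot N} \reach N\cdot\n_0 + \imarked{k}$ for large $N$ by running that witness $N$ times, and from $N\cdot\n_0 + \imarked{k}$ the reordered $\rho$ is enabled and reaches $N\cdot\n_0 + \m$; finally $N\cdot\n_0 \reach \omarked{N k_0}$ because (running each place's path-to-$\output$) any marking supported on nonredundant places can be pushed to multiples of $\omarked{}$. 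Choosing $\ell \defeq N k_0$ then gives exactly $\imarked{k+\ell} \reach \m + \omarked{\ell}$. The quantitative bound on $\ell$ (exponential in $\abs T$, doubly-exponential overall once the Steinitz deviation $b$ — which depends on $\norm{\pn}$ and the reservoir size — is folded in) falls out of \Cref{lem:placecover} and \Cref{lem:extended-steinitz}, but since the statement only asserts existence of \emph{some} $\ell$, I would keep the bound implicit and reference \cite[Lemma~12]{HSV04} for the original argument.
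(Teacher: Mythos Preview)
Your argument has a genuine gap at the ``flush the reservoir to $\output$'' step. You assert that $N\cdot\n_0 \reach \omarked{Nk_0}$ because ``any marking supported on nonredundant places can be pushed to multiples of $\omarked{}$'', but nonredundancy only provides runs \emph{into} each place from some $\imarked{\cdot}$; it does not provide runs \emph{out} of a place to $\omarked{\cdot}$. The workflow-net axiom gives merely a path in the underlying graph from $p$ to $\output$, which need not be fireable. Concretely, take $P=\{\initial,p,q,\output\}$ with $t_1\colon \initial\to p$, $t_2\colon \initial\to q$, $t_3\colon p+2q\to\output$. This is a nonredundant workflow net. The $\Z$-run $\imarked{3}\ztrans{t_2 t_2 t_3 t_1}\{\output\colon 1\}$ dips below zero only in $p$, so your construction builds the reservoir $\n_0=\{p\colon 1\}$ via $\imarked{1}\trans{t_1}\{p\colon 1\}$. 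You then correctly obtain $\imarked{3+N}\reach \{\output\colon 1\}+\{p\colon N\}$, but $\{p\colon N\}$ is a deadlock (no transition is enabled), so $\{p\colon N\}\not\reach\omarked{\ell}$ for any $\ell$ and the flush fails. Your argument therefore stops at $\imarked{k+\ell}\reach \m+\m'$ with $\imarked{\ell}\reach\m'$, which is exactly the conclusion of \Cref{lem:small-ell}, not of \Cref{lem:zreachnreach}.

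Two remarks on the comparison with the paper. First, for the bare existence of $\ell$ the Steinitz reordering is unnecessary: any fixed $\Z$-run $\rho$ already has a finite maximal deficit, and a reservoir of that depth suffices; the reordering is only needed to make the bound independent of the length of $\rho$, which is the point of \Cref{lem:small-ell}. Second, the paper does not give a self-contained proof of \Cref{lem:zreachnreach} either: it calls it ``an easy consequence of the definition of nonredundancy'' (namely, ``put enough budget'') and attributes the statement to \cite{HSV04}. The budget argument, taken literally, yields only the weaker conclusion of \Cref{lem:small-ell}, and indeed it is \Cref{lem:small-ell} that the paper proves in detail and uses downstream; in \Cref{lem:unssound-unsound} a case split on $\ell$-soundness is what supplies $\m'\reach\omarked{\ell}$ when needed. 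So the step you identified as the main obstacle is one the paper sidesteps in the text rather than resolves.
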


Note that \cref{lem:zreachnreach} is an easy consequence of the definition of nonredundancy. Namely, it suffices to put ``enough budget'' in each place so that the run under ${\zreach}$ becomes a run under ${\reach}$.
We restate the result to give a bound on $\ell$.

\begin{lemma}\label{lem:small-ell}
Let $\pn = (P, T, F)$ be a nonredundant workflow net.
    Let $k$ and $\vec{m} \in \N^P$ be such that $\imarked{k} \zreach \m$.
    There exist $\ell \leq (\norm{T} + 2)^{\abs{T}} \cdot \max(\norm{T}, k)\cdot \abs{P}(\abs{P}+2)$
    and $\m' \in \N^P$ such that $\imarked{\ell} \reach \m'$
    and $\imarked{\ell+k} \reach \m + \m'$.
\end{lemma}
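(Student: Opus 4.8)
The plan is to sharpen the argument behind \Cref{lem:zreachnreach}: starting from a $\Z$-run realizing $\imarked{k} \zreach \m$, I would reorder its transitions using the extended Steinitz lemma so that every intermediate $\Z$-marking dips only boundedly below $\vec{0}$, and then pre-load exactly that bounded deficit into every place as a ``budget'' produced from the initial place. In detail, fix a run $\rho = t_1 \cdots t_n$ with $\imarked{k} \ztrans{\rho} \m$ and apply \Cref{lem:extended-steinitz} with $\vec{x}_0 \defeq \imarked{k}$ and $\vec{x}_i \defeq \effect{t_i}$. Then $d = \abs{P}$, $\vec{z} = \sum_{j=0}^n \vec{x}_j = \m$, and $b \defeq \max_j \norm{\vec{x}_j} \le \max(k, \norm{T})$ because $\norm{\imarked{k}} = k$ and $\norm{\effect{t_i}} \le \norm{T}$. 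The lemma produces a permutation $\pi$ with $\pi(0) = 0$ and reals $0 \le c_0 \le \cdots \le c_n$ so that the reordered run $\rho' \defeq t_{\pi(1)} \cdots t_{\pi(n)}$, fired from $\imarked{k}$ in the $\Z$-semantics, passes through $\vec{s}_i \defeq \sum_{j=0}^i \vec{x}_{\pi(j)}$ (with $\vec{s}_0 = \imarked{k}$ and $\vec{s}_n = \m$) satisfying $\vec{s}_i \ge c_i \cdot \m - \vec{B} \ge -\vec{B}$, where $B \defeq b(\abs{P}+2)$ and $\vec{B} = (B, \ldots, B)$; the inequality uses $\m \ge \vec{0}$ and $c_i \ge 0$. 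Consequently, fired from $\imarked{k} + \m'$ for any $\m' \ge \vec{B}$, the run $\rho'$ never goes negative, hence it is an ordinary run and reaches $\m + \m'$.

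It then remains to produce, from a small initial marking, some $\m'$ with $\m' \ge \vec{B}$. By \Cref{lem:placecover}, for each place $p$ there are $k_p < (\norm{T}+2)^{\abs{T}}$ and a run $\imarked{k_p} \reach \n_p$ with $\n_p[p] \ge 1$. Firing this run $B$ times consecutively is legitimate --- after $t < B$ iterations the marking equals $(B - t)\imarked{k_p} + t \n_p$, which dominates $\imarked{k_p}$, and no transition produces tokens in $\initial$ --- and yields $\imarked{B k_p} \reach B \n_p$. Concatenating these runs over all $p \in P$ gives $\imarked{\ell} \reach \m'$ with $\ell \defeq \sum_{p \in P} B k_p$ and $\m' \defeq \sum_{p \in P} B \n_p$; since $\n_q \ge \vec{0}$ and $\n_q[q] \ge 1$, we get $\m'[q] \ge B$ for all $q$, i.e.\ $\m' \ge \vec{B}$. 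Moreover $\ell \le \abs{P} \cdot B \cdot (\norm{T}+2)^{\abs{T}} \le (\norm{T}+2)^{\abs{T}} \cdot \max(\norm{T},k) \cdot \abs{P}(\abs{P}+2)$.

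To conclude, since firability is upward closed and $\imarked{\ell+k} = \imarked{\ell} + \imarked{k} \ge \imarked{\ell}$, the run witnessing $\imarked{\ell} \reach \m'$ also gives $\imarked{\ell + k} \reach \imarked{\ell+k} + (\m' - \imarked{\ell}) = \imarked{k} + \m'$; then the reordered run $\rho'$ gives $\imarked{k} + \m' \reach \m + \m'$, so $\imarked{\ell+k} \reach \m + \m'$, which together with $\imarked{\ell} \reach \m'$ proves the lemma. I expect the only points requiring care to be notational: keeping $\vec{x}_0 = \imarked{k}$ as the distinguished vector so that $\pi(0) = 0$ ensures $\rho'$ genuinely starts at $\imarked{k}$, and noting that the real coefficients $c_i$ from Steinitz are harmless because they enter only through $c_i \cdot \m \ge \vec{0}$.
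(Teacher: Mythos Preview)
Your proposal is correct and follows essentially the same approach as the paper: reorder the $\Z$-run via \Cref{lem:extended-steinitz} so that intermediate markings never dip below $-\max(\norm{T},k)(\abs{P}+2)$, then use \Cref{lem:placecover} to build a budget $\m'$ covering that deficit in every place, and combine by monotonicity. The only differences are cosmetic (you track a separate $k_p$ per place and spell out why the repeated runs remain enabled, whereas the paper absorbs all $k_p$ into a single bound and is terser), so there is nothing substantively new to compare.
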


\begin{proof}
    Let $\rho = t_1 t_2 \cdots t_n$ be such that $\imarked{k} \ztrans{\rho} \m$. Let us define $\vec{x}_0 \defeq \imarked{k}$ and $\vec{x}_j \defeq \effect{t_j}$ for all $j \in [1..n]$.
    By \cref{lem:extended-steinitz}, we can assume that the transitions $t_j$ are ordered so that there exist $c_0, \ldots, c_n \ge 0$ where
\begin{gather*}
    \norm{\imarked{k} + \sum_{j=1}^i \effect{t_j} - c_i \m} \le \max(\norm{T},k) \cdot (\abs{P} + 2),
\end{gather*}
for all $i \in [0..n]$.
Since $\m \ge 0$, we get for all $p \in P$:
\begin{gather}\label{eq:enough}
\left(\imarked{k} + \sum_{j=1}^i \effect{t_j} \right)\![p] \ge - \max(\norm{T},k) \cdot (\abs{P} + 2).
\end{gather}

    By \cref{lem:placecover}, there exists $\ell \leq (\norm{T} + 2)^{\abs{T}}$
    such that for every place $p$ there is a run $\imarked{\ell} \trans{\pi_p} \m_{p}$ with $\m_{p}[p] > 0$.
    Thus, to put $\max(\norm{T},k) \cdot (\abs{P}+2)$ tokens in all places, it suffices to repeat $\max(\norm{T},k) \cdot (\abs{P}+2)$ times the run $\pi_p$ for every $p \in P$.
    This requires $\ell \le (\norm{T} + 2)^{\abs{T}} \cdot \max(\norm{T},k) \cdot \abs{P}(\abs{P}+2)$ tokens. Let $\m'$ be the marking obtained afterwards.
    By~\eqref{eq:enough}, $\vec{m}'$ allows to fire $\rho$. Therefore, we obtain
    $\imarked{\ell} \reach \m'$
    and $\imarked{\ell+k} \reach \m + \m'$ as required.
\end{proof}

This lemma allows us to focus on ${\zreach}$ instead
of ${\reach}$.

\begin{lemma}\label{lem:unssound-unsound}
Let $\pn = (P, T, F)$ be a nonredundant workflow net.
It is the case that $\pn$ is generalised sound iff it is strongly $k$-sound for all $k \ge 0$.
    Moreover, if $\pn$ is not strongly $k$-sound, then there exists $k' \leq k + (\norm{T} + 2)^{\abs{T}} \cdot \max(\norm{T}, k)\cdot \abs{P}(\abs{P} +2)$ such that $\pn$ is not $k'$-sound.
\end{lemma}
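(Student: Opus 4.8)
The plan is to prove the two directions of the equivalence and then extract the quantitative bound. For the equivalence, observe first that strong $k$-soundness implies $k$-soundness (this is noted right after the definition), so if $\pn$ is strongly $k$-sound for all $k$ then it is $k$-sound for all $k$, i.e.\ generalised sound. For the converse, suppose $\pn$ is generalised sound but not strongly $k$-sound for some $k$; I would then produce a witness of $k'$-unsoundness for some $k'$, which contradicts generalised soundness and simultaneously establishes the quantitative part of the statement. So in fact both the equivalence and the bound follow once we prove the \emph{Moreover} clause, and that is where the real work is.

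For the \emph{Moreover} clause, assume $\pn$ is not strongly $k$-sound. By definition there is a marking $\m \in \N^P$ with $\imarked{k} \zreach \m$ but $\m \not\reach \omarked{k}$. Apply \Cref{lem:small-ell} to this $\m$: it yields $\ell \le (\norm{T} + 2)^{\abs{T}} \cdot \max(\norm{T}, k) \cdot \abs{P}(\abs{P}+2)$ and a marking $\m' \in \N^P$ with $\imarked{\ell} \reach \m'$ and $\imarked{\ell + k} \reach \m + \m'$. Set $k' \defeq \ell + k$, which satisfies the claimed bound. The idea is that $\m + \m'$ is a reachable marking from $\imarked{k'}$ from which we cannot reach $\omarked{k'}$, because doing so would in particular let us extract a run emptying the $\m$-part into $k$ tokens of $\output$, contradicting $\m \not\reach \omarked{k}$.

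The technical heart — and the step I expect to be the main obstacle — is arguing rigorously that $\m + \m' \not\reach \omarked{k'}$ in $\pn$. The tokens coming from $\m'$ cannot simply be ``subtracted out'' because runs in Petri nets are not linear: a run enabled from $\m + \m'$ need not decompose into a run from $\m$ plus a run from $\m'$. The right tool here is monotonicity together with the $\Z$-reachability relation: if $\m + \m' \reach \omarked{k'} = \omarked{\ell} + \omarked{k}$ via some run $\sigma$, then the same run is $\Z$-enabled from $\m$ (dropping the extra $\m'$ budget only makes intermediate markings smaller, but under ${\zreach}$ that is allowed), giving $\m \zreach \omarked{\ell} + \omarked{k} - \m'$. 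I would need to relate this back to an honest $\reach$-statement of the form $\m \reach \omarked{k}$, likely by invoking \Cref{lem:zreachnreach} (or \Cref{lem:small-ell} again) to re-add enough budget in all places and using nonredundancy, together with the workflow-net property that no transition produces into $\initial$ or consumes from $\output$ to control where the surplus tokens end up. Care is needed to ensure the surplus is exactly $\m'$-shaped and ends up consumed cleanly; once that bookkeeping is done, the contradiction with $\m \not\reach \omarked{k}$ closes the proof, and hence so does the equivalence.
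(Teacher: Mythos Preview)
Your setup is right up to the point where you invoke \Cref{lem:small-ell} and obtain $\ell$, $\m'$ with $\imarked{\ell}\reach\m'$ and $\imarked{\ell+k}\reach\m+\m'$. The gap is in the step you yourself flag as the obstacle: you try to prove directly that $\m+\m'\not\reach\omarked{\ell+k}$ by dropping the $\m'$-budget under ${\zreach}$ and then re-inflating via \Cref{lem:zreachnreach}. This does not close. From $\m+\m'\trans{\sigma}\omarked{\ell+k}$ you only get $\m\ztrans{\sigma}\omarked{\ell+k}-\m'$, and in general $\omarked{\ell+k}-\m'$ is not a marking in $\N^P$ (nothing forces $\m'\le\omarked{\ell+k}$), so \Cref{lem:zreachnreach}/\Cref{lem:small-ell} do not even apply. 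Even if they did, they would hand you a statement of the form $\imarked{k+\ell'}\reach(\omarked{\ell+k}-\m')+\omarked{\ell'}$ for some fresh $\ell'$, which is not the contradiction $\m\reach\omarked{k}$ you need; you would be chasing an ever-growing budget without ever isolating a run from $\m$ itself.

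The paper avoids this entirely with a one-line case split that you are missing: either $\pn$ is already $\ell$-unsound (then take $k'=\ell$ and you are done), or $\pn$ is $\ell$-sound. In the second case $\imarked{\ell}\reach\m'$ forces $\m'\reach\fmarked{\ell}$, so by monotonicity $\imarked{\ell+k}\reach\m+\m'\reach\m+\fmarked{\ell}$. Now the extra tokens sit exclusively in $\output$, which has no outgoing arcs in a workflow net; hence any run witnessing $\m+\fmarked{\ell}\reach\fmarked{\ell+k}$ is literally enabled from $\m$ and yields $\m\reach\fmarked{k}$, contradicting the choice of $\m$. So $\pn$ is $(\ell+k)$-unsound and $k'=\ell+k$ works. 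The case split, together with the observation that tokens in $\output$ are inert, is the missing idea; once you have it, no $\Z$-reachability bookkeeping is needed at this stage.
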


\begin{proof}
    The ``if'' implication is trivial. Indeed, if $\pn$ is not $k$-sound then it cannot be strongly $k$-sound.
    
    To prove the ``only if'' implication, assume that $\pn$ is not strongly $k$-sound. We show that
    there exists $k'$ such that $\pn$ is not $k'$-sound. We will also prove the promised bound on $k'$.
    Since $\pn$ is not strongly $k$-sound, there must be some $\m \in \N^P$ and $\pi$ such that
    $\imarked{k} \ztrans{\pi} \m$ and $\m \not \reach \omarked{k}$.
    By \cref{lem:small-ell}, there exists $\ell \leq (\norm{T} + 2)^{\abs{T}} \cdot \max(\norm{T}, k)\cdot \abs{P} (\abs{P} + 2)$
    and $\m'$ such that
    $\imarked{\ell} \reach \m'$ and $\imarked{\ell + k} \reach \m + \m'$.
    If $\pn$ is not $\ell$-sound, then we are done.
    Otherwise, if $\pn$ is $\ell$-sound, then
    it must hold that $\m' \reach \fmarked{\ell}$.
    So, $\imarked{\ell + k} \reach \m + \m' \reach \m + \fmarked{\ell}$.
    Recall that $\m \not \reach \fmarked{k}$.
    Thus, $\m + \fmarked{\ell} \not \reach \fmarked{\ell + k}$. We are done since this means that $\pn$ is not $(\ell + k)$-sound.
\end{proof}

In the remainder of this section, we will show that if there exists some $k$ such that $\pn$ is not strongly $k$-sound,
then $k$ is at most exponential in $\abs{\pn}$.
We define an ILP which is closely related to the markings reachable from at least one initial number of tokens in $\pn$.
Essentially, the ILP will encode that there exists $k > 0$ and $\m \geq \0$ such that $\imarked{k} \zreach \m$. This can be done since only ``firing counts'' matter, \ie\ $\m \ztrans{\pi} \m'$ implies $\m \ztrans{\pi'} \m'$ for any permutation $\pi'$ of $\pi$.

Let $\pn = (P, T, F)$ be a workflow net. We define $\text{ILP$_{\pn}$} \defeq \mat{A} \cdot \vec{x} \geq \vec{0}$ as an ILP with $\abs{P} + \abs{T} + 1$ inequalities and $\abs{T} + 1$ variables. The variables of ILP $_{\pn}$ are $\vec{x} \defeq (\kappa,\tau_1,\dots,\tau_{\abs{T}})$. For ease of notation, we write $\ttau = (\tau_1,\dots,\tau_{\abs{T}})$. We assume an implicit bijection between $T$ and $[1..\abs{T}]$, \ie\ for every $t \in T$ there is a unique $i$ such that: $\ttau[t] = \tau_i$.
The matrix $\mat{A}$ is defined by the following inequalities:
\begin{enumerate}
\item\label{eq:A1} $\kappa + \sum_{t \in T} \ttau[t] \cdot \effect{t}[\initial] \geq 0$,

\item\label{eq:A2} $\kappa \ge 1$,

\item\label{eq:A3} $\sum_{t \in T} \ttau[t] \cdot \effect{t}[p] \geq 0$ for all $p \in P \setminus \set{\initial}$,

\item\label{eq:A4} $\tau_i \ge 0$ for all $i \in [1..\abs{T}]$.
\end{enumerate}

The first two inequalities concern the initial ``budget'' $k$ of tokens in $\initial$ which is represented by $\kappa$. Intuitively, $\kappa \ge 1$ has to be at least as much as $\ttau$ consumes from the initial place.
The last two inequalities guarantee that we obtain a marking over $\N^P$ and that the ``firing count'' is over $\N^T$.

Let $\vec{\mu} \colon \vec{x} \to \N$ be a solution to $\text{ILP}_{\pn}$.
We define
\[
\marking{\vec{\mu}} \defeq \imarked{\mu(\kappa)} + \sum_{t \in T}^{\abs{T}} \mu(\tau_j) \cdot \effect{t_j}.
\]
The following claim follows by definition of ILP$_{\pn}$ and ${\zreach}$.
\begin{claim}\label{lem:characterization}
    Let $\m \in \N^P$ and $k > 0$.
    It holds that $\imarked{k} \zreach \m$
    iff
    there exists a solution $\vec{\mu}$ to ILP$_{\pn}$
    such that $\marking{\vec{\mu}} = \m$ and $\vec{\mu}[\kappa] = k$.
\end{claim}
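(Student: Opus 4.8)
The plan is to recognise $\text{ILP}_\pn$ as nothing more than the state (Kirchhoff) equation for the $\Z$-semantics: the variable $\kappa$ records the initial number of tokens in $\initial$, the variables $\ttau$ record a firing-count vector over $T$, and inequalities \ref{eq:A1}--\ref{eq:A3} assert that applying this firing count to $\imarked{\kappa}$ yields a nonnegative marking. The one structural point to exploit is that $\ztrans{t}$ carries no enabledness condition, so the order in which transitions are fired under $\ztrans{}$ is irrelevant; I would record this commutativity once, citing the definition of $\ztrans{}$ from \Cref{sec:preliminaries}.

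First I would rephrase the right-hand side of the claim. Fix $k > 0$. Since $\imarked{k}[\initial] = k$ and $\imarked{k}[p] = 0$ for $p \neq \initial$, inequality \ref{eq:A1} says $\marking{\vec{\mu}}[\initial] \ge 0$ (using $\vec{\mu}[\kappa] = k$), inequality \ref{eq:A3} says $\marking{\vec{\mu}}[p] \ge 0$ for all $p \neq \initial$, and inequality \ref{eq:A2} says $k \ge 1$, which is consistent with the hypothesis $k > 0$; moreover \ref{eq:A2} gives $\kappa \ge 0$ and \ref{eq:A4} gives $\tau_i \ge 0$, so every solution is automatically natural. Hence ``there is a solution $\vec{\mu}$ with $\marking{\vec{\mu}} = \m$ and $\vec{\mu}[\kappa] = k$'' is equivalent to: there is a vector $\vec{\tau} \in \N^T$ with $\m = \imarked{k} + \sum_{t \in T} \vec{\tau}[t]\cdot\effect{t}$, the nonnegativity $\m \ge \vec{0}$ being guaranteed by $\m \in \N^P$ from the hypothesis.

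It then remains to match firing-count vectors with $\Z$-runs in both directions. For the forward implication, given a run $\pi$ with $\imarked{k} \ztrans{\pi} \m$, I would let $\vec{\tau}[t]$ be the number of occurrences of $t$ in $\pi$; regrouping the effects along $\pi$ gives $\m = \imarked{k} + \sum_t \vec{\tau}[t]\effect{t}$, so the corresponding $\vec{\mu}$ (with $\vec{\mu}[\kappa] = k$ and its $\ttau$-part equal to $\vec{\tau}$) is a natural solution of $\text{ILP}_\pn$ with $\marking{\vec{\mu}} = \m$. For the converse, given such a $\vec{\mu}$, I would set $\vec{\tau}[t] \defeq \vec{\mu}[\ttau[t]]$ and let $\pi$ be any sequence of transitions in which each $t$ occurs exactly $\vec{\tau}[t]$ times, in an arbitrary order; by the commutativity remark $\pi$ is a run under $\ztrans{}$ from $\imarked{k}$, and it leads to $\imarked{k} + \sum_t \vec{\tau}[t]\effect{t} = \marking{\vec{\mu}} = \m$, so $\imarked{k} \zreach \m$. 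I do not expect any real obstacle; the only care needed is the special treatment of the place $\initial$, which is split between inequalities \ref{eq:A1} and \ref{eq:A3} precisely because it is the unique place carrying initial tokens.
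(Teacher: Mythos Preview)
Your proposal is correct and follows exactly the approach the paper intends: the paper states that the claim ``follows by definition of ILP$_{\pn}$ and ${\zreach}$'' and, just before, records the same commutativity observation (that $\m \ztrans{\pi} \m'$ implies $\m \ztrans{\pi'} \m'$ for any permutation $\pi'$ of $\pi$) that you rely on. Your write-up simply spells out the details the paper leaves implicit.
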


We conclude this part with the following bound.

\begin{lemma}\label{thm:first-unsound-small}
Let $\pn$ be a nonredundant workflow net.
    If $\pn$ is strongly $i$-sound for all $1 \leq i < k$,
    and not strongly $k$-sound, then $k \leq c$, where $c$ is the bound from \Cref{lem:small-solution} for ILP$_{\pn}$.
\end{lemma}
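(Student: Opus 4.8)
The plan is to convert a witness of non-strong-$k$-soundness into a solution of $\text{ILP}_\pn$ via \Cref{lem:characterization}, apply \Cref{lem:small-solution} to this solution to obtain a ``small'' one $\vec{\nu}$, and then argue that the $\kappa$-component of $\vec{\nu}$ is already equal to $k$. We may assume $k \geq 1$, since otherwise the claimed bound is trivial. As $\pn$ is not strongly $k$-sound, there is $\m \in \N^P$ with $\imarked{k} \zreach \m$ and $\m \not\reach \omarked{k}$. By \Cref{lem:characterization}, there is $\vec{\mu} \in \nsolutions{\text{ILP}_\pn}$ with $\vec{\mu}[\kappa] = k$ and $\marking{\vec{\mu}} = \m$. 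Letting $\mat{A}$ be the matrix of $\text{ILP}_\pn$, I would apply \Cref{lem:small-solution} to obtain $\vec{\nu} \in \nsolutions{\text{ILP}_\pn}$ with $\vec{\nu} \le \vec{\mu}$, $\vec{\nu} \le \vec{c}$ and $\mat{A} \cdot \vec{\nu} \le \mat{A} \cdot \vec{\mu}$, where $c$ is the promised bound. Set $k' \defeq \vec{\nu}[\kappa]$: inequality~\eqref{eq:A2} gives $k' \geq 1$, and $\vec{\nu} \le \vec{c}$ gives $k' \le c$. It then suffices to prove $k' = k$, as then $k = k' \le c$.

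To prove $k' = k$, suppose for contradiction that $k' < k$ and consider $\vec{\rho} \defeq \vec{\mu} - \vec{\nu} \geq \vec{0}$. Then $\vec{\rho} \in \nsolutions{\text{ILP}_\pn}$: the homogeneous inequalities~\eqref{eq:A1},~\eqref{eq:A3},~\eqref{eq:A4} hold because $\mat{A} \cdot \vec{\nu} \le \mat{A} \cdot \vec{\mu}$ and $\vec{\rho} \geq \vec{0}$, while~\eqref{eq:A2} holds because $\vec{\rho}[\kappa] = k - k' \geq 1$. Write $\m' \defeq \marking{\vec{\nu}}$ and $\m'' \defeq \marking{\vec{\rho}}$; inequalities~\eqref{eq:A1} and~\eqref{eq:A3} force $\m', \m'' \in \N^P$, and \Cref{lem:characterization} yields $\imarked{k'} \zreach \m'$ and $\imarked{k-k'} \zreach \m''$. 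Since $\marking{\cdot}$ is linear in its argument and $\vec{\mu} = \vec{\nu} + \vec{\rho}$, we get $\m = \m' + \m''$. Now $1 \le k' < k$ and $1 \le k - k' < k$, so by hypothesis $\pn$ is strongly $k'$-sound and strongly $(k-k')$-sound, whence $\m' \reach \omarked{k'}$ and $\m'' \reach \omarked{k-k'}$. By monotonicity of Petri nets, $\m = \m' + \m'' \reach \omarked{k'} + \m'' \reach \omarked{k'} + \omarked{k-k'} = \omarked{k}$, contradicting $\m \not\reach \omarked{k}$. Hence $k' = k$, so $k \le c$.

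The hard part is realising that $\vec{\mu} - \vec{\nu}$ must again be a solution of $\text{ILP}_\pn$: this is exactly where the extra guarantee $\mat{A} \cdot \vec{\nu} \le \mat{A} \cdot \vec{\mu}$ provided by \Cref{lem:small-solution} is essential, since the bare bounds $\vec{\nu} \le \vec{\mu} \le \vec{c}$ alone would not keep the difference inside the feasible region. The only other point needing care is the bookkeeping that both $k'$ and $k - k'$ land in the interval $[1..k-1]$, so that the strong $i$-soundness hypothesis applies to each summand; this is immediate once $k \ge 1$ and $1 \le k' < k$ are established.
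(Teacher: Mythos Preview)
Your proof is correct and follows essentially the same approach as the paper: both obtain a small solution $\vec{\nu}$ via \Cref{lem:small-solution}, show that the difference $\vec{\mu}-\vec{\nu}$ is again a solution of $\text{ILP}_\pn$ (using $\mat{A}\vec{\nu}\le\mat{A}\vec{\mu}$ for the homogeneous rows and the strict gap in $\kappa$ for~\eqref{eq:A2}), split $\m$ linearly, and invoke strong $i$-soundness for $i<k$ on each summand to contradict $\m\not\reach\omarked{k}$. The only cosmetic difference is that the paper assumes $k>c$ directly, whereas you phrase the contradiction as ``$k'<k$ is impossible, hence $k=k'\le c$''; these are logically equivalent.
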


\begin{proof}
    For the sake of contradiction, assume that $k \gr c$ is as in the statement.
    Since $\pn$ is not strongly $k$-sound, there exists a marking $\m \in \N^P$ such that
    $\imarked{k} \zreach \m$ and $\m \not \reach \fmarked{k}$.
    By \cref{lem:characterization},
    there exists a solution $\vec{\mu}$ to ILP$_\pn$ such that
    $\marking{\vec{\mu}} = \m$ and $\vec{\mu}[\kappa] = k$.
    By \cref{lem:small-solution}, there exists a solution $\vec{\mu}' \leq \vec{\mu}$ to ILP$_{\pn}$ such that
    $\vec{\mu}'[\kappa] \le c < k = \vec{\mu}[\kappa]$ and $\mat{A} \vec{\mu}' \leq \mat{A} \vec{\mu}$, where $\mat{A}$ is the underlying matrix of ILP$_{\pn}$.
    The latter inequality implies $\marking{\vec{\mu}'} \leq \marking{\vec{\mu}}$.

    Consider the vector $\vec{\pi} \defeq \vec{\mu} - \vec{\mu}'$.
    We prove that $\vec{\pi}$ is a solution to ILP$_\pn$.  Since $\vec{\mu}' \leq \vec{\mu}$ we know that $\vec{\pi}$ is nonnegative.
    The inequalities of $\mat{A}$ are satisfied since $\mat{A} \vec{\pi} \geq \vec{0} \equiv \mat{A} \vec{\mu} \geq \mat{A} \vec{\mu}'$ and $\vec{\mu}'[\kappa] \le c < \vec{\mu}[\kappa]$.
    Thus, $\vec{\pi}$ is a solution to ILP$_\pn$.

    By \cref{lem:characterization}, $\imarked{\vec{\mu}'[\kappa]} \zreach  \marking{\vec{\mu}'}$ and $\imarked{\vec{\pi}[\kappa]} \zreach  \marking{\vec{\pi}}$.
    Recall that $\vec{\mu}'[\kappa], \vec{\pi}[\kappa] < \vec{\mu}[\kappa] = k$.
    By assumption, $\pn$ is strongly $\vec{\mu}'[\kappa]$-sound and strongly $\vec{\pi}[\kappa]$-sound.
    Therefore, $\marking{\vec{\mu}'} \reach \fmarked{\vec{\mu}'[\kappa]}$
    and $\marking{\vec{\pi}} \reach \fmarked{\vec{\pi}[\kappa]}$.
    Since the function $\marking{\cdot}$ is linear, we get
    \[
    \m = \marking{\vec{\mu}} = \marking{\vec{\mu}'} + \marking{\vec{\pi}}.
    \]
    This implies $\m \reach \fmarked{\vec{\mu}'[\kappa]} + \fmarked{\vec{\pi}[\kappa]} = \fmarked{k}$, which is a contradiction.
\end{proof}

\subsection{Reachability in $\Z$-bounded nets is in PSPACE}

Note that $\imarked{0} = \fmarked{0} = \vec{0}$. We will use these notations interchangeably depending on the emphasis.

\begin{lemma}\label{lem:zbounded}
  Let $\pn = (P, T, F)$ be a nonredundant workflow net and $k > 0$. If $\pn$
  is $\Z$-unbounded from $\imarked{k}$, then $\pn$ is not generalised
  sound.
\end{lemma}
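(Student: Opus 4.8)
I want to show that $\Z$-unboundedness from $\imarked{k}$ forces non-generalised-soundness. The natural idea is to find, from $\Z$-unboundedness, a concrete marking reachable (under $\reach$, for some initial budget) that witnesses $k'$-unsoundness for some $k'$.

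The plan is to argue by contradiction: suppose $\pn$ is $\Z$-unbounded from $\imarked{k}$ and, simultaneously, generalised sound. By \Cref{lem:unssound-unsound} (which uses nonredundancy), generalised soundness implies that $\pn$ is strongly $i$-sound for every $i\ge 0$; in particular strongly $k$-sound, so $\imarked{k}\zreach\m$ with $\m\in\N^P$ forces $\m\reach\omarked{k}$. The first step is to extract a pumpable direction from the $\Z$-unboundedness. Since $\zreach$-reachability only depends on firing counts, every $\m\ge\vec{0}$ with $\imarked{k}\zreach\m$ equals $\imarked{k}+\sum_t n_t\effect{t}$ for some $n_t\in\N$; taking an unbounded-norm sequence of such markings and applying Dickson's lemma to the pairs (firing-count vector, resulting marking) yields two of them, say with count vectors $\vec{n}\le\vec{n}'$, whose markings are distinct. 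Then $\vec{r}\defeq\vec{n}'-\vec{n}\in\N^T$ and $\vec{d}\defeq\sum_t\vec{r}[t]\,\effect{t}$ satisfy $\vec{d}\ge\vec{0}$, $\vec{d}\neq\vec{0}$, and $\imarked{k}\zreach\imarked{k}+n\vec{d}$ for all $n\ge0$ (fire $\vec{r}$ $n$ times under $\zreach$). Hence, by strong $k$-soundness, $\imarked{k}+n\vec{d}\reach\omarked{k}$ for all $n$.

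I then split on $\vec{d}[\output]$. If $\vec{d}[\output]>0$, pick $n$ with $n\,\vec{d}[\output]>k$; since no transition consumes from $\output$, the $\output$-count is nondecreasing along any run, so $\imarked{k}+n\vec{d}$, which already has more than $k$ tokens in $\output$, cannot reach $\omarked{k}$ --- contradicting strong $k$-soundness. The delicate case is $\vec{d}[\output]=0$, i.e.\ $\vec{d}$ supported on $P\setminus\{\output\}$. Here I would first ``route the pump onto the final marking'': from $\imarked{k}+\vec{d}\reach\omarked{k}$ (strong $k$-soundness), monotonicity gives $\imarked{k}+2\vec{d}\reach\omarked{k}+\vec{d}$; composing with $\imarked{k}\zreach\imarked{k}+2\vec{d}$ (and $\reach\subseteq\zreach$) gives $\imarked{k}\zreach\omarked{k}+\vec{d}$, so strong $k$-soundness yields a run $\gamma$ with $\omarked{k}+\vec{d}\reach\omarked{k}$.

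The observation that closes this case is that \emph{every transition of a workflow net produces at least one token}: a transition $t$ lies on a path $\initial\to\cdots\to\output$ in the underlying graph, and since $t$ is not the final vertex $\output$, that path leaves $t$ along an edge $(t,p')$, i.e.\ $\post{t}[p']>0$. Now apply this to $\gamma$: as $(\omarked{k}+\vec{d})[\output]=k=\omarked{k}[\output]$ and the $\output$-count is nondecreasing, it is constant along $\gamma$, so $\gamma$ fires no transition producing in $\output$; restricting $\gamma$ and all its intermediate markings to $P_0\defeq P\setminus\{\output\}$ gives a run from $\vec{d}$ (nonzero, supported on $P_0$) to $\vec{0}$. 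This run is nonempty, so it has a last transition $t^{*}$, which by the observation produces a token in some place, necessarily in $P_0$ (it produces nothing in $\output$); hence the marking immediately after $t^{*}$ is positive there, contradicting that $\gamma$ ends at $\vec{0}$. The main obstacle is exactly this $\vec{d}[\output]=0$ case: the trick is to exploit soundness to make the pumped surplus $\vec{d}$ reachable \emph{on top of} $\omarked{k}$, so that the draining run needs no new $\output$-tokens and is therefore confined to $P\setminus\{\output\}$, where ``no transition produces nothing'' gives the contradiction.
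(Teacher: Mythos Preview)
Your argument is correct, but it takes a longer route than the paper's. The key difference is that you work with strong $k$-soundness for the given $k>0$ throughout, which forces the case split on $\vec{d}[\output]$ and the detour through $\omarked{k}+\vec{d}\reach\omarked{k}$. The paper instead shifts the pump to the origin: from $\imarked{k}\zreach\m\ztrans{\pi}\m'$ with $\m<\m'$ one gets $\imarked{0}\ztrans{\pi}\m'-\m>\vec{0}$, and then invokes strong \emph{$0$}-soundness (which is also guaranteed by \Cref{lem:unssound-unsound}) to obtain $(\m'-\m)\reach\fmarked{0}=\vec{0}$. That is an immediate contradiction with $\post{t}\neq\vec{0}$ for all $t$, the very observation you use at the end of your second case. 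So the paper's proof is essentially your case $\vec{d}[\output]=0$, but applied at $k=0$ where there is no $\output$-budget to worry about and hence no case split is needed. What your approach buys is that it never appeals to the somewhat degenerate notion of strong $0$-soundness; what the paper's approach buys is a two-line proof.
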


\begin{proof}
  Since $\pn$ is $\Z$-unbounded from $\imarked{k}$, there exist
  $\m,\m'$ and $\pi$ such that $\m < \m'$ and
  $\imarked{k} \zreach \m \ztrans{\pi} \m'$. Thus,
  $\imarked{0} \ztrans{\pi} \m' - \m > \vec{0}$.
  For the sake of contradiction, assume that $\pn$ is generalised sound. It is strongly $k$-sound in particular for $k=0$ by \Cref{lem:unssound-unsound}, 
  so we have $\m'
  - \m \reach \fmarked{0}$, which contradicts the fact that
   $\post{t} \neq \vec{0}$ for all $t \in T$.
\end{proof}

\begin{lemma}\label{lem:z-unbounded}
  Let $\pn = (P, T, F)$ be a workflow net.
  Let $\m \in \N^P$ be a marking
  such that $\norm{\m} > \max(\norm{T},k)^2 \cdot (|P| + 2) \cdot |P|$.
  If $\imarked{k} \zreach \m$ then $\pn$ is $\Z$-unbounded.
\end{lemma}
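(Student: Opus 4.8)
The plan is to deduce $\Z$-unboundedness from the existence of a ``positive cycle'', i.e.\ a nonzero vector $\vec{v} \ge \vec{0}$ that is a sum of transition effects. This suffices: if $\vec{v} = \sum_{\ell} \effect{t_{\rho(\ell)}} \ge \vec 0$ is nonzero, then firing this multiset of transitions in any order gives $\n \ztrans{} \n + \vec{v}$ from every $\Z$-marking $\n$, hence $\imarked{k} \zreach \imarked{k} + m\vec{v}$ for every $m \in \N$; since $\imarked{k} \ge \vec{0}$ and $\vec{v} \ge \vec{0}$, each $\imarked{k} + m\vec{v}$ is a nonnegative $\Z$-reachable marking with $\norm{\imarked{k}+m\vec{v}} \ge m\norm{\vec{v}} \ge m$, so no bound $b$ witnesses $\Z$-boundedness. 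Concretely, $\vec v$ will be obtained as $\vec{y}_j - \vec{y}_i$ for two comparable prefixes $\vec{y}_i \le \vec{y}_j$ ($i < j$, $\vec{y}_i \ne \vec{y}_j$) of a suitable $\Z$-run $\imarked{k} = \vec{y}_0 \ztrans{} \vec{y}_1 \ztrans{} \cdots \ztrans{} \vec{y}_n = \m$.

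To make such a comparable pair appear, I would first tame the run with the (reformulated) Steinitz Lemma. Fix a run $\rho = t_1\cdots t_n$ with $\imarked{k} \ztrans{\rho} \m$ and apply \Cref{lem:extended-steinitz} with $\vec{x}_0 \defeq \imarked{k}$, $\vec{x}_\ell \defeq \effect{t_\ell}$, so that $\vec{z} = \sum_\ell \vec{x}_\ell = \m$ and $\max_\ell \norm{\vec{x}_\ell} \le \max(\norm{T},k)$. This reorders the run and produces reals $0 \le c_0 \le \cdots \le c_n$ with $\norm{\vec{y}_i - c_i\m} \le \max(\norm{T},k)(\abs{P}+2) =: B$ for all $i$. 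The crucial consequence is that, because $\m \ge \vec 0$ and the $c_i$ are nondecreasing, on \emph{every} coordinate $p$ the sequence $i \mapsto \vec{y}_i[p]$ stays within $B$ of the \emph{nondecreasing} sequence $i \mapsto c_i\m[p]$; hence $\vec{y}_j[p] \ge \vec{y}_i[p] - 2B$ whenever $i \le j$, i.e.\ prefixes may only ``drift downward'' by a bounded amount in any coordinate.

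Now let $p^*$ be a coordinate with $\m[p^*] = \norm{\m}$; note that since the initial place is never produced in a workflow net, $\m[\initial] \le k < \norm\m$, so $p^* \neq \initial$ and $\vec{y}_0[p^*] = 0$. As $\norm{\m} > \max(\norm{T},k)^2\abs{P}(\abs{P}+2) = \bigl(\max(\norm{T},k)\cdot\abs{P}\bigr)\cdot B$ and the $p^*$-coordinate moves by at most $\norm{T}$ per step, the prefixes whose $p^*$-coordinate strictly exceeds that of all earlier prefixes form a subsequence $\vec{y}_{i_0},\dots,\vec{y}_{i_L}$ with strictly increasing $p^*$-coordinate and $L$ of order $\norm\m/\norm{T} \gg \abs{P}\cdot B$. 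It then remains to find two entries of this subsequence comparable in all the remaining coordinates: coordinate $p^*$ is handled by construction, and for the others one feeds the ``almost-monotone'' bound $\vec{y}_j[p] \ge \vec{y}_i[p] - 2B$ into a pigeonhole argument sized so that it fits within the subsequence. Taking $\vec{v} \defeq \vec{y}_j - \vec{y}_i$ for the resulting pair and invoking the reduction of the first paragraph completes the argument.

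The main obstacle is exactly this last step: converting ``each coordinate is nondecreasing up to a window of size $2B$'' together with ``the $p^*$-coordinate climbs by more than $\abs{P}\cdot B$'' into an honest comparable pair, with the bookkeeping tight enough to land on the stated bound $\max(\norm{T},k)^2\abs{P}(\abs{P}+2)$. Treating the residual coordinates as a single box would cost a factor exponential in $\abs{P}$, so the $2B$-window must be exhausted one coordinate at a time (and the monotonicity of the $c_i$ used to keep the search confined to the chosen subsequence); everything else — the Steinitz application and the pumping argument — is routine.
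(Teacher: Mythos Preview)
Your Steinitz setup and the ``almost-monotone'' consequence $\vec{y}_j[p]\ge\vec{y}_i[p]-2B$ match the paper's opening exactly, and the reduction to a nonnegative nonzero effect-sum is fine. The divergence is at the extraction step, and that is where your proposal has a real gap.

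The ``one coordinate at a time'' idea does not reach the stated polynomial bound. From a sequence satisfying $\vec{y}_{i_b}[p]\ge\vec{y}_{i_a}[p]-2B$ for $a<b$, extracting a subsequence nondecreasing in a fixed coordinate $p$ (say by bucketing on the drop $\max_{a'\le a}\vec{y}_{i_{a'}}[p]-\vec{y}_{i_a}[p]\in[0,2B]$) costs a factor $2B{+}1$, and on that subsequence the \emph{remaining} coordinates still satisfy only the same $2B$-window --- nothing improves. Iterating over the $|P|-1$ non-$p^*$ coordinates therefore costs $(2B{+}1)^{|P|-1}$, and this is essentially tight: a Dilworth-style construction (cover by $\le 2B{+}1$ runs nondecreasing in $p_1$, recurse on each run for $p_2$, etc.) yields record sequences of length $(2B{+}1)^{|P|-1}$ with no comparable pair and the almost-monotone property intact. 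So $L>|P|\cdot B$ is far too small. The monotonicity of the $c_i$ does not help, since restricting to any subsequence preserves both that monotonicity and the $2B$-window unchanged. Your scheme thus proves a version of the lemma with a bound on $\norm{\m}$ exponential in $|P|$, not the stated $\max(\norm{T},k)^2(|P|+2)|P|$.

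The paper sidesteps the comparable-pair search entirely. It first assumes w.l.o.g.\ that no submultiset of $\multiset{t_1,\dots,t_n}$ has total effect $\vec{0}$ (else shorten the run), deduces $n>\max(\norm{T},k)(|P|+2)|P|$ from the hypothesis on $\norm{\m}$, and then applies the pigeonhole principle to the residues $\vec{r}_i:=\vec{y}_i-c_i\m$ themselves to obtain $i_1<i_2$ with $\vec{r}_{i_1}=\vec{r}_{i_2}$. This gives $\sum_{j=i_1+1}^{i_2}\effect{t_j}=(c_{i_2}-c_{i_1})\m\ge\vec{0}$ in one stroke, and the ``no zero submultiset'' assumption forces it to be strictly positive. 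The key difference is that aiming for \emph{equal residues} makes the increment automatically a nonnegative multiple of $\m$, so no per-coordinate comparison is ever needed --- this is what lets the paper stay polynomial where your route goes exponential.
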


\begin{proof}
  Let $\imarked{k} \ztrans{\sigma} \m$ for some $\sigma = t_1
  t_2 \cdots t_n$. We use the notation $\multiset{\cdot}$ for
  multisets, \eg\ $\multiset{a, a, b}$ contains two occurrences of $a$
  and one of $b$. Without loss of generality, assume
  that no submultiset of $\multiset{t_1, t_2, \ldots, t_n}$ sums to
  $\vec{0}$. Otherwise, we can shorten $\sigma$ by removing such a
  submultiset.

By \Cref{lem:extended-steinitz}, we can assume that $t_1, t_2, \ldots, t_n$ are ordered so that there exist $0 \le c_0 \le c_1 \le \ldots \le c_n$, where
\begin{gather*}
    \norm{\imarked{k} + \sum_{j=1}^i \effect{t_j} - c_i \m} \le \max(\norm{T},k) \cdot (|P| + 2),
\end{gather*}
for all $i \in [0..n]$.
Since $\norm{\m} > \max(\norm{T},k)^2 \cdot (|P| + 2) \cdot |P|$, we know that $n > \max(\norm{T},k) \cdot (|P| + 2) \cdot |P|$. By the pigeonhole principle, there must be $0 \le i_1 < i_2 \le n$ such that
\[
\imarked{k} + \sum_{j=1}^{i_1} \effect{t_j} - c_{i_1} \m = \imarked{k} + \sum_{j=1}^{i_2} \effect{t_j} - c_{i_2} \m.
\]
This is equivalent to
\[
\sum_{j=i_1 + 1}^{i_2} \effect{t_j} = (c_{i_2} - c_{i_1})\m.
\]
We have $(c_{i_2} - c_{i_1})\m \geq \vec{0}$ and, since no subset of $\multiset{t_1, t_2, \ldots, t_n}$ sums to $\vec{0}$, we have a strict inequality. Let $\vec{z} \defeq \sum_{j=i_1 + 1}^{i_2} \effect{t_j}$. We proved that $\imarked{0} \zreach \vec{z} > \vec{0}$, so $\pn$ is $\Z$-unbounded. 
\end{proof}

We are ready to prove the PSPACE membership of generalised soundness.

\begin{proof}[Proof of \cref{theorem:pspaceupper}]
Consider a workflow net $\pn = (P, T, F)$. By \Cref{claim:batch}, we can assume that $\pn$ is a nonredundant workflow net.
By \cref{lem:unssound-unsound} and \cref{thm:first-unsound-small}, to prove generalised soundness it suffices to prove that it is $k$-sound for all $k \le \norm{\pn}^{\poly(\abs{\pn})}$.

By \Cref{lem:zbounded} and \Cref{lem:z-unbounded}, if $\imarked{k} \reach \m$ and $\norm{\m} \ge C_k$ for some $C_k =  (\norm{\pn} + k)^{\poly(\abs{\pn})}$, then the net is unsound. Since we need to consider only $k \le \norm{\pn}^{\poly(\abs{\pn})}$, all constants $C_k$ are bounded exponentially and can be written in polynomial space.

Thus, to verify $k$-soundness we proceed as follows. First, we check if a configuration $\m$ such that $\norm{\m} \ge C_k$ can be reached. This can be easily performed in $\text{NPSPACE} = \text{PSPACE}$ as such a run would be witnessed by a sequence of configurations, such that each configuration can be stored in polynomial space. If such a configuration can be reached, then the algorithm outputs no. Otherwise, for every $\m \in \N^P$ such that $\norm{\m} < C_k$ one needs to verify whether $\imarked{k} \reach \m$ implies $\m \reach \fmarked{k}$. This can be done in $\text{coNPSPACE} = \text{coPSPACE} = \text{PSPACE}$.
\end{proof}

\subsection{PSPACE-hardness}

A \emph{conservative Petri net} is a 
Petri net $\pn = (P, T, F)$ such that transitions preserve the number of tokens. That is, for all $\m,\m' \in \N^P$, it is the case that
$\m \trans{} \m'$ implies $\sum_{p\in P} \m[p] =  \sum_{p \in P} \m'[p]$.
The \emph{reachability problem} for conservatrice Petri nets asks whether $\m \reach \m'$, given $\pn$, a source marking $\m$ and a target marking $\m'$.

\begin{restatable}{theorem}{thmPSpaceHard}\label{thm:pspace:hard}
  Generalised soundness is PSPACE-hard.
\end{restatable}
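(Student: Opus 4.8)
The plan is to reduce from the reachability problem for conservative Petri nets, which is known to be PSPACE-complete~\cite{ConservativePN14}. So fix a conservative Petri net $\pn = (P, T, F)$ together with a source marking $\m$ and a target marking $\m'$; since $\pn$ is conservative we may assume $\sum_{p \in P} \m[p] = \sum_{p \in P} \m'[p] =: s$, as otherwise the instance is trivially negative. We will build in polynomial time a workflow net $\pn''$ such that $\pn''$ is generalised sound if and only if $\m \reach \m'$ in $\pn$. The overall shape mirrors the EXPSPACE-hardness construction of \Cref{thm:class-hard}, but without the doubly-exponential counting gadget: here the ``budget'' needed in each place is only $\max(s, \norm{\pn})$, which is polynomial, so no $\pn_n$-style machinery is required.

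First I would take disjoint copies $P$ and $\overline{P} = \set{\overline{p} \mid p \in P}$ of the places, maintaining the invariant $\n[p] + \n[\overline{p}] = s$ for markings $\n$ reached during the ``simulation phase''; every transition $t \in T$ is lifted to a transition $t'$ that additionally moves the complementary amount in $\overline{P}$, exactly as in the $T_1$ transitions of \Cref{thm:class-hard}. Then I add: an initial place $\initial$ and a transition that, given one token in $\initial$, produces the marking $\m$ on $P$ together with $s - \m[p]$ on each $\overline{p}$ (the initialisation of the budget, which is sound since $\sum_p \m[p] = s$); a ``check'' transition $t_{\m'}$ enabled only when $P$ carries exactly $\m'$, which empties $P \cup \overline{P}$ and produces a token in $\output$; and a reverse transition $t_{\m'}^{-1}$. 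The key structural point, as in the earlier proof, is that $T$-copies are not reversible in general, but we do not need them to be: if $\m \reach \m'$ then from any reachable configuration in the simulation phase we can run the simulation forward to $\m'$, fire $t_{\m'}$, and reach $\output$; conversely, $1$-soundness forces a run reaching $\output$, which (because $t_{\m'}$ is the only gateway) exposes a genuine path $\m \reach \m'$ in $\pn$. To guarantee quasi-liveness and, crucially, $k$-soundness for \emph{all} $k$ rather than just $k = 1$, I add a separate ``trivial'' branch (analogous to $t_{\mathrm{simple}}, t_{\mathrm{simple}2}$) that, on input $\imarked{k}$, can always route all $k$ tokens straight to $\omarked{k}$, so that the only obstruction to $k$-soundness lives inside the simulation branch; and in the simulation branch the complement-places absorb exactly $s$ tokens per simulating token, so multiple tokens simply run independent simulations that all terminate precisely when $\m \reach \m'$.

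For the $(\Leftarrow)$ direction (soundness $\Rightarrow$ reachability) I would argue exactly as in \Cref{thm:class-hard}: take the run from $\imarked{1}$ witnessing $1$-soundness through the simulation branch, locate the (necessarily present) firing of $t_{\m'}$, and read off from the restriction to $P$ of the prefix a run $\m \reach \m'$ in $\pn$, using conservativity to rule out spurious token creation. For the $(\Rightarrow)$ direction (reachability $\Rightarrow$ generalised soundness) I would show: any configuration reachable from $\imarked{k}$ is either in the trivial branch, from which $\omarked{k}$ is directly reachable, or it is a superposition of $k$ simulation-phase configurations each of which, using $t_{\m'}^{-1}$ and the forward run $\m \reach \m'$, can be steered to fire $t_{\m'}$ and deposit a token in $\output$; summing over the $k$ tokens gives $\omarked{k}$. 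I expect the main obstacle to be the same one flagged in \Cref{thm:class-hard}: carefully showing that from an \emph{arbitrary} reachable configuration (not just a ``clean'' one) the simulation branch can be driven to completion. The reverse transitions $t_{\m'}^{-1}$ (and reversibility of the lifted $T$-copies whenever $\pn$ itself is reversible — but in general $\pn$ is not, so one instead argues that any partially-simulated state can be ``rewound'' to the clean post-initialisation state $\m$ by replaying the simulation forward to $\m'$ then backward via $t_{\m'}^{-1}$, or more carefully by a direct argument on firing counts) are what make this work, and getting this robustness argument right — together with verifying that the complement-place invariant is never violated and that $\pn''$ is syntactically a workflow net (every place on an $\initial$-to-$\output$ path, no production into $\initial$, no consumption from $\output$) — is the technical heart, which I would defer in full detail to the appendix.
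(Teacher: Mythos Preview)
Your proposal has a genuine gap in the $(\Rightarrow)$ direction. You correctly identify the obstacle yourself: from an \emph{arbitrary} configuration reached in the simulation branch, you must be able to drive the simulation to completion. But your suggested remedy---``rewinding'' via $t_{\m'}^{-1}$, or replaying forward to $\m'$---does not work when $\pn$ is not reversible. Concretely, suppose $\m \reach \m'$ in $\pn$, but there is also some dead-end marking $\n$ with $\m \reach \n$, $\n \neq \m'$, and no transition enabled in $\n$. (Such conservative instances certainly exist.) In your workflow net, firing the initialisation transition and then simulating the run $\m \reach \n$ leaves the token stranded: no lifted $t'$ is enabled, $t_{\m'}$ is not enabled since $\n \neq \m'$, and $t_{\m'}^{-1}$ is irrelevant because $t_{\m'}$ was never fired. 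Your trivial branch does not help either, since the token has already left $\initial$. Hence $\pn''$ is not $1$-sound even though $\m \reach \m'$, and the reduction fails.

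The paper sidesteps this entirely with a much lighter construction: no complement places, no reverse transitions, no trivial branch. Instead it adds a single ``reset'' place $r$ and, for every $p \in P$, a transition $t_p$ that moves one token from $p$ to $r$; the transition $t_{\m}$ then consumes $c \defeq \sum_p \m[p]$ tokens from $r$ and reproduces $\m$. Conservativity guarantees the invariant $c\cdot k = c\cdot\n[\initial] + \sum_{p \in P \cup \{r\}} \n[p] + c\cdot\n[\output]$, so from \emph{any} reachable $\n$ one can sweep all non-$\output$ tokens back into $r$ and restart a clean simulation. This is the key idea you are missing: rather than reversing a possibly irreversible computation, the paper simply discards it and starts over.
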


\begin{proof}
  We give a reduction from reachability in conservative Petri nets,
  which is known to be PSPACE-complete \cite{ConservativePN14}.

  Let $\pn = (P, T, F)$ be a conservative Petri net, and let $\m, \m'$ be the source and target markings.
  We define the constant $c \defeq \sum_{p\in P} \m[p] =  \sum_{p \in P} \m'[p]$.

  We construct a workflow net $\pn' = (P', T', F')$ such that $\pn'$ is generalised sound if and only if
  $\m \reach \m'$ in $\pn$. To do so, we extend $\pn$ with three new places
  $P' \defeq P \cup \set{\initial, \output, r}$. Places $\initial$ and $\output$ serve as 
decidated initial and final places, respectively.
  Place $r$ will be used to reset configurations.
  It could be merged with $\initial$, if not for the restriction that, in a workflow net, place $\initial$ cannot have any incoming arc.

  We define $T' \supseteq T$ by keeping the existing transitions and adding $3 + |P|$ new transitions.
  Namely:
  \begin{enumerate}
     \item transition $t_\initial$ defined by $\pre{t_\initial} \defeq \imarked{1}$, and $\post{t_\initial} \defeq \marked{r}{c}$,
     
     \item transition $t_{\m}$ defined by $\pre{t_{\m}} \defeq \marked{r}{c}$, and $\post{t_{\m}} \defeq \m$,
     
     \item transition $t_{\m'}$ defined by $\pre{t_{\m'} \defeq \m'}$, and $\post{t_{\m'}} \defeq \fmarked{1}$,
     
   \item transition $t_p$ defined by $\pre{t_p} \defeq \marked{p}{1}$, and $\post{t_p} \defeq \marked{r}{1}$.
  \end{enumerate}

  The first two transitions move a token from $\initial$ and create the marking $\m$. The third transition consumes $\m'$ and puts a token into $\output$. Transitions from the fourth group allow to move tokens from any place in the original Petri net $P$ to $r$. See \Cref{fig:pspace_hard} for a graphical presentation.

  \begin{figure}[!h]
    \centering
\begin{tikzpicture}[node distance=1.25cm, transform shape, scale=0.95]
      \tikzstyle{place}=[circle,thick,draw=blue!75,fill=blue!20,minimum size=6mm]
      \tikzstyle{transition}=[rectangle,thick,draw=black!75,
        fill=black!20,minimum size=4mm]
        \tikzstyle{red place}=[place,draw=red!75,fill=red!20]
        
    \node [red place,tokens=1, label=above:{$\initial$}] (i) {};
    \node [red place, label=above:{$r$}] (q0) [right = 1.5cm of i]  {};
    \node [red place, label=above:{$\output$}] (q1) [right = 5cm of q0] {};
    
    \node [place, label={[xshift=-1.5pt]right:{$p_2$}}] (p2) at ($(q0)!0.5!(q1) + (0,-1.5)$) {};
    \node [place, label={[xshift=-1.5pt]right:{$p_1$}}, left=0.5cm of p2] (p) {};
    \node [place, label={[xshift=-1.5pt]right:{$p_3$}}, right=0.5cm of p2] (p3) {};
    
    \node [transition, label=above:{$t_i$}] (t1) at ($(i)!0.5!(q0)$) {}
    edge[pre] (i)
    edge[post] node[above] {$c$} (q0)
    ;
   \node [transition, label=above:{$t_{\m}$}] (tm) [right of=q0] {}
   edge[pre] node[above] {$c$} (q0)
   edge[post] (p)
   edge[post,bend left] (p2)
   ;

   \node [transition, label=above:{$t_{\m'}$}] (tm') [left of=q1] {}
   edge[pre, bend right] (p2)
   edge[pre] (p3)
   edge[post] node[above] {$c$} (q1)
   ;

   \node [left of=p, transition, label=left:{$t_{p_1}$}] (tp) {}
   edge[pre] (p)
   edge[post] (q0)
   ;
   
   \node [left of=tp, transition, label=left:{$t_{p_2}$}] (tp2) {}
   edge[pre,bend right] (p2)
   edge[post] (q0)
   ;
   
   \node [left of=tp2, transition, label=left:{$t_{p_3}$}] (tp3) {}
   edge[pre, bend right] (p3)
   edge[post] (q0)
   ;
   
\end{tikzpicture}
\caption{A workflow net $\pn'$ which is generalised sound iff
$m \reach m'$ in the conservative Petri net $\pn = (P, T, F)$. Here, $P = \set{p_1, p_2, p_3}$, $\m = \{p_1 \colon 1, p_2 \colon 1\}$, $\m' = \{p_2 \colon 1, p_3 \colon 1\}$ and $c = 2$. The original places are blue and the new places are red. We omit the original transitions (from $T$) in the picture.}\label{fig:pspace_hard}
\end{figure}
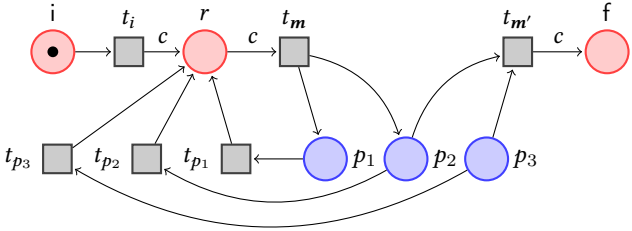

   It remains to show that $\pn'$ is correct. Suppose $\pn'$ is generalised sound. It must also be $1$-sound and in particular $\imarked{1} \reach \fmarked{1}$. Since $\pn$ is conservative, it is easy to see that $t_{\m}$ can be fired only if there are no tokens in $P$. Moreover, a token can be transferred to $\output$ only using $t_{\m'}$, which consumes $\m'$. Thus, we have $\m \reach \m'$ in $\pn$.

   The converse implication is shown in the appendix.
\end{proof}

\section{Structural soundness}
\label{sec:structural}
In this section, we establish the EXPSPACE-completeness of structural
soundness. Recall that the latter asks whether, given a workflow net,
$k$-soundness holds for some $k \geq 1$.

\subsection{EXPSPACE membership}

\begin{theorem}\label{theorem:structuralupper}
  Structural soundness is in EXPSPACE.
\end{theorem}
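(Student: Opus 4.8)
The plan is to combine the two ingredients developed in \Cref{sec:generalised}: a bound showing that the smallest unsound number $k$ (if it exists) is at most exponential, and the fact that $k$-soundness for such $k$ is checkable in \text{EXPSPACE}. Concretely, structural soundness asks for the \emph{existence} of some $k \ge 1$ with $\pn$ $k$-sound, so I would argue that if any such $k$ exists, then a ``small'' one exists, reducing the search to a bounded range.

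First I would apply \Cref{claim:batch} to assume without loss of generality that $\pn$ is nonredundant. By \Cref{lem:unssound-unsound}, $\pn$ is $k$-sound for all $k$ iff it is strongly $k$-sound for all $k$, and more usefully, the failure of strong $k$-soundness propagates to the failure of ordinary $k'$-soundness for an exponentially-related $k'$. The key observation I would extract from \Cref{thm:first-unsound-small} and its proof (via \Cref{lem:small-solution} applied to $\text{ILP}_{\pn}$) is a Carathéodory/decomposition phenomenon for the sound numbers: strong soundness is closed under addition (if $\pn$ is strongly $i$-sound and strongly $j$-sound then it is strongly $(i+j)$-sound, since $\marking{\cdot}$ is linear and the relevant ILP solutions add), and conversely any ``large'' solution $\vec\mu$ to $\text{ILP}_{\pn}$ decomposes as $\vec\mu' + \vec\pi$ with both parts having strictly smaller $\kappa$-component bounded by the constant $c$ of \Cref{lem:small-solution}. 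Consequently, the set of strongly-sound numbers is a numerical semigroup whose ``behaviour'' is determined by its restriction to $[1..c]$ with $c \le \norm{\text{ILP}_{\pn}}^{\bigO(\abs{\pn}\log\abs{\pn})} \le \norm{\pn}^{\poly(\abs{\pn})}$, an exponential bound.

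Then the algorithm is: for each $k$ in the range $1 \le k \le c$, test $k$-soundness of $\pn$. Each such test is in \text{EXPSPACE}: using \Cref{lem:zbounded} together with \Cref{lem:z-unbounded}, if $\imarked{k} \reach \m$ with $\norm{\m}$ exceeding the bound $(\norm{\pn}+k)^{\poly(\abs{\pn})}$ (which is exponential, since $k$ is at most exponential) then $\pn$ is $\Z$-unbounded and hence not generalised sound — but for $k$-soundness alone we instead fall back on \Cref{cor:k-sound}, which already gives \text{EXPSPACE} membership of $k$-soundness, noting that the scaling construction of \Cref{lem:scale-net} only adds $\log k$ to the norm, so the input to the $1$-soundness check stays of size polynomial in $\size{\pn} + \log k$, i.e. polynomial in $\size{\pn}$ for $k$ exponential. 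Accepting iff some $k$ in the range is sound yields an \text{EXPSPACE} procedure, since we iterate an exponential number of \text{EXPSPACE} subroutines (each reusing space). Formally one would phrase this as a single \text{EXPSPACE} machine that cycles through $k = 1, \ldots, c$, reusing its work tape.

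The main obstacle I anticipate is not the complexity bookkeeping but pinning down the precise statement that ``if $\pn$ is $k$-sound for some $k$, then it is $k$-sound for some $k \le c$.'' \Cref{thm:first-unsound-small} is phrased contrapositively (about the first $k$ at which strong soundness \emph{fails}), so I need the companion positive direction: I would show that if $\pn$ is strongly $k$-sound for some $k$, then the set of strongly-sound numbers, being an additive sub-semigroup of $\N$ that contains some element and whose generators can be taken $\le c$ (again via the decomposition of \Cref{lem:small-solution}: any strongly-sound $k$ with $k > c$ splits into two strictly smaller strongly-sound numbers), must contain an element $\le c$. Then passing from strong $k$-soundness back to plain $k$-soundness is immediate (strong $k$-soundness implies $k$-soundness). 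Care is needed because being $k$-sound does not obviously imply being strongly $k$-sound, so I would route the argument through strong soundness on the ``exists small witness'' side and only convert to plain soundness at the very end; this is exactly the asymmetry already handled in \Cref{lem:unssound-unsound}, so it should go through cleanly. The detailed semigroup argument and the exact characterisation of sound numbers is deferred to \Cref{sec:soundk}; here it suffices to invoke the exponential bound on the least sound number.
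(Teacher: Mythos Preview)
Your proposal has a genuine gap: you are trying to dualise the machinery of \Cref{sec:generalised}, which bounds the \emph{first unsound} number, into a bound on the \emph{first sound} number, and this dualisation does not go through with $\text{ILP}_\pn$.

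Concretely, two of your key claims are unjustified. First, ``strong soundness is closed under addition'' does not follow from linearity of $\marking{\cdot}$: linearity lets you \emph{add} two solutions of $\text{ILP}_\pn$ to obtain a solution with $\kappa=i+j$, but strong $(i+j)$-soundness is a universal statement over \emph{all} $\m\in\N^P$ with $\imarked{i+j}\zreach\m$, and such an $\m$ need not decompose as $\m_1+\m_2$ with $\m_1,\m_2\in\N^P$, $\imarked{i}\zreach\m_1$, $\imarked{j}\zreach\m_2$. Second, ``any strongly-sound $k>c$ splits into two strictly smaller strongly-sound numbers'' is not what \Cref{lem:small-solution} gives you: that lemma decomposes a \emph{single solution} $\vec\mu$, whereas strong $k$-soundness is a property quantifying over all solutions with $\kappa=k$; knowing that each such $\vec\mu$ splits as $\vec\mu'+\vec\pi$ says nothing about whether \emph{every} solution with $\kappa=\vec\mu'[\kappa]$ or $\kappa=\vec\pi[\kappa]$ leads to $\fmarked{\cdot}$. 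Finally, even the entry point is missing: structural soundness only gives you ordinary $k$-soundness for some $k$, and \Cref{lem:unssound-unsound} goes the wrong way (from failure of strong soundness to failure of soundness), so you never get into the strongly-sound regime.

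The paper's proof takes a genuinely different route: it introduces a \emph{new} ILP, $\text{ILP}_\pn^s$, encoding $\imarked{\kappa}\zreach\fmarked{\kappa}$ (an equality system with target $\fmarked{\kappa}$, not the inequality system with arbitrary nonnegative target used in $\text{ILP}_\pn$). From $k$-soundness one gets a solution with $\kappa=k$, hence by \Cref{lem:small-solution} a small solution with $\kappa=k'\le c$, hence $\imarked{k'}\zreach\fmarked{k'}$. Then \Cref{lem:small-ell} produces $\ell$ and $\m$ with $\imarked{\ell}\reach\m$ and $\imarked{\ell+k'}\reach\m+\fmarked{k'}$, and a short manipulation using the assumed $k$-soundness (with $k>\ell+k'$) forces $k'$-soundness. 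This argument never mentions strong soundness and works directly with plain $k$-soundness; the crucial new ingredient is $\text{ILP}_\pn^s$, which your proposal does not have.
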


Let $\pn = (P, T, F)$ be a workflow net.
We define an $(\abs{T}+2\abs{P}+1) \times (\abs{T}+1)$-ILP, called ILP$_{\pn}^s$.
The variables are the same as for ILP$_{\pn}$ in \Cref{subsec:smallk}: $(\kappa, \tau_1, \dots, \tau_n)$, with the intuition that $\kappa$ denotes the number of initial tokens and $\tau_i$ the number of times the transitions are used. We will keep the notation $\ttau = (\tau_1, \dots, \tau_n)$ and the notation $\ttau[t]$ for $t \in T$. The inequalities are defined as follows:
\begin{enumerate}
\item $\imarked{\kappa} + \sum_{t \in T}\ttau[t] \cdot \effect{t} = \fmarked{\kappa}$ (expressed with $2\abs{P}$ inequalities);

\item $\ttau \geq \vec{0}$ ($\abs{T}$ inequalities);

\item and $\kappa > 0$.
\end{enumerate}
The first set of inequalities expresses that the effect of the transitions yields the final marking. The second type ensures that each transition is fired a nonnegative number of times. Finally the last one ensures that the initial marking has at least one token. The following is immediate.

\begin{claim}\label{claim:ilps}
There exists $k > 0$ such that $\imarked{k} \zreach \fmarked{k}$ if and only if there exists a solution $\vec{\mu}$ to ILP$_{\pn}^s$ such that $\vec{\mu}[\kappa] = k$.
\end{claim}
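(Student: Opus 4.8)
The plan is to reduce the claim to the elementary fact that in the $\Z$-semantics no transition is ever disabled, so that $\Z$-reachability depends only on firing counts and not on their order. First I would record the observation that for any run $\rho$ and any permutation $\rho'$ of the multiset of transitions occurring in $\rho$, we have $\m \ztrans{\rho} \m'$ if and only if $\m \ztrans{\rho'} \m'$, since in both cases $\m' = \m + \sum_{t \in T} n_t \cdot \effect{t}$, where $n_t$ is the number of occurrences of $t$ in $\rho$. Consequently, $\m \zreach \m'$ holds exactly when there is a vector $\vec{f} \in \N^T$ with $\m' = \m + \sum_{t \in T} \vec{f}[t] \cdot \effect{t}$.

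Given this, I would prove the two implications. For the forward direction, assume $\imarked{k} \zreach \fmarked{k}$ for some $k > 0$ and take the corresponding $\vec{f} \in \N^T$. I would define $\vec{\mu}$ by $\vec{\mu}[\kappa] \defeq k$ and $\vec{\mu}[\tau_i] \defeq \vec{f}[t_i]$ for each $i \in [1..\abs{T}]$, and then check the three constraints: inequality~(1) holds because $\imarked{k} + \sum_{t \in T} \vec{f}[t] \cdot \effect{t} = \fmarked{k}$; inequality~(2) holds since $\vec{f} \ge \vec{0}$; and inequality~(3) holds since $k > 0$. Thus $\vec{\mu}$ is a solution of $\text{ILP}_{\pn}^s$ with $\vec{\mu}[\kappa] = k$.

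For the backward direction, let $\vec{\mu}$ be a solution of $\text{ILP}_{\pn}^s$ and set $k \defeq \vec{\mu}[\kappa]$, which is positive by~(3). I would define $\vec{f} \in \N^T$ by $\vec{f}[t_i] \defeq \vec{\mu}[\tau_i]$; this is a well-defined nonnegative integer vector by~(2). Firing the transitions counted by $\vec{f}$ in any order from $\imarked{k}$ in the $\Z$-semantics reaches $\imarked{k} + \sum_{t \in T} \vec{f}[t] \cdot \effect{t}$, which equals $\fmarked{k}$ by~(1). Hence $\imarked{k} \zreach \fmarked{k}$.

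I do not anticipate any genuine obstacle here, which is why the paper calls the claim immediate. The only points requiring minimal care are that the equality~(1), written out as $2\abs{P}$ inequalities, is equivalent to the underlying vector equation, and that an arbitrary $\vec{f} \in \N^T$ can always be realized as a $\Z$-run from any $\Z$-marking; both follow directly from the definitions of $\text{ILP}_{\pn}^s$ and of ${\zreach}$.
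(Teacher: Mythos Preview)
Your proposal is correct and is precisely the unpacking the paper intends when it calls the claim immediate: the paper gives no proof beyond noting that it follows directly from the definitions of $\text{ILP}_{\pn}^s$ and of ${\zreach}$, and your argument via firing counts in the $\Z$-semantics is exactly that observation spelled out.
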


\begin{lemma}\label{lem:first-sound-small}
  Let $\pn = (P, T, F)$ be a nonredundant workflow net that is $k$-sound,
  and $i$-unsound for all $1 \leq i < k$. It is the case that $k \leq
  c + (\norm{T} + 2)^{\abs{T}} \cdot \max(\norm{T}, c) \cdot \abs{P}
  (\abs{P} + 2)$, where $c$ is the bound
  given by \Cref{lem:small-solution} for ILP$_{\pn}^s$.
\end{lemma}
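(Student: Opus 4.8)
The plan is to mimic the proofs of \Cref{lem:unssound-unsound} and \Cref{thm:first-unsound-small}, but working with $\mathrm{ILP}_{\pn}^{s}$ in place of $\mathrm{ILP}_{\pn}$ and exhibiting a small \emph{sound} number rather than a small unsound one. First, since $\pn$ is $k$-sound we have $\imarked{k} \reach \fmarked{k}$, hence $\imarked{k} \zreach \fmarked{k}$, so by \Cref{claim:ilps} there is a solution $\vec{\mu}$ of $\mathrm{ILP}_{\pn}^{s}$ with $\vec{\mu}[\kappa] = k$. The right-hand side of $\mathrm{ILP}_{\pn}^{s}$ is nonnegative, so \Cref{lem:small-solution} applied to $\vec{\mu}$ yields a solution $\vec{\nu} \le \vec{\mu}$ with $k' \defeq \vec{\nu}[\kappa] \le c$ and $\mat{A}\vec{\nu} \le \mat{A}\vec{\mu}$, where $\mat{A}$ is the matrix of $\mathrm{ILP}_{\pn}^{s}$. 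The ``reachability'' rows of $\mathrm{ILP}_{\pn}^{s}$ encode an equality with right-hand side $\vec{0}$ which $\vec{\mu}$ meets with equality; since on those rows $\mat{A}\vec{\nu} \ge \vec{0}$ (because $\vec{\nu}$ is a solution) and $\mat{A}\vec{\nu} \le \mat{A}\vec{\mu} = \vec{0}$, the vector $\vec{\nu}$ meets them with equality as well, so $\imarked{k'} \zreach \fmarked{k'}$; moreover $k' \ge 1$ thanks to the constraint $\kappa > 0$. (Note also that $\vec{\mu} - \vec{\nu}$ is a solution, so $\imarked{k - k'} \zreach \fmarked{k - k'}$ whenever $k > k'$.)

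Next, I would feed the $\zreach$-run witnessing $\imarked{k'} \zreach \fmarked{k'}$ into \Cref{lem:small-ell} with $\m \defeq \fmarked{k'}$: this gives $\ell$ and a marking $\m'$ such that $\imarked{\ell} \reach \m'$ and $\imarked{\ell + k'} \reach \fmarked{k'} + \m'$, with $\ell \le (\norm{T} + 2)^{\abs{T}} \cdot \max(\norm{T}, k') \cdot \abs{P}(\abs{P} + 2) \le (\norm{T} + 2)^{\abs{T}} \cdot \max(\norm{T}, c) \cdot \abs{P}(\abs{P} + 2)$, using $k' \le c$. Put $j \defeq \ell + k'$; then $1 \le j \le c + (\norm{T} + 2)^{\abs{T}} \cdot \max(\norm{T}, c) \cdot \abs{P}(\abs{P} + 2)$, which is exactly the bound in the statement.

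It therefore suffices to show that $\pn$ is $j$-sound: minimality of $k$ then forces $k \le j$, proving the lemma. We may assume $k > j$ (otherwise $k \le j$ already holds). Let $\n$ be such that $\imarked{j} \reach \n$; by monotonicity $\imarked{k} = \imarked{j} + \imarked{k - j} \reach \n + \imarked{k - j}$, so $k$-soundness yields $\n + \imarked{k - j} \reach \fmarked{k} = \fmarked{j} + \fmarked{k - j}$. The heart of the proof is then to \emph{cancel the surplus}: to pass from $\n + \imarked{k - j} \reach \fmarked{j} + \fmarked{k - j}$ to $\n \reach \fmarked{j}$. For this I would use the auxiliary runs $\imarked{\ell + k'} \reach \fmarked{k'} + \m'$ and $\imarked{\ell} \reach \m'$, the identities $\imarked{k'} \zreach \fmarked{k'}$ and $\imarked{k - k'} \zreach \fmarked{k - k'}$, and the structural facts that $\output$ is a sink and $\initial$ is a source, so as to re-route the run issued from $\n + \imarked{k - j}$ into one issued from $\n$ alone. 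I expect this cancellation, rather than the ILP bookkeeping, to be the main obstacle: in \Cref{lem:unssound-unsound} the analogous step was painless only because the tokens being discarded lay in the sink $\output$ and were therefore frozen, whereas here the surplus starts in $\initial$ and takes part in the computation.
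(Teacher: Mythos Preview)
Your setup through the definition of $j = \ell + k'$ is sound and matches the paper. The genuine gap is exactly where you locate it: the cancellation from $\n + \imarked{k-j} \reach \fmarked{k}$ to $\n \reach \fmarked{j}$. You list the ingredients you ``would use'' but give no argument, and indeed I do not see how to remove an $\initial$-surplus from an \emph{arbitrary} $\n$ reachable from $\imarked{j}$: tokens in $\initial$ are not frozen, so the run witnessing $\n + \imarked{k-j} \reach \fmarked{k}$ may entangle the two bundles arbitrarily.

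The paper sidesteps this by aiming at $k'$-soundness rather than $j$-soundness (which is enough, since $k' \le c$), and the key trick is to first push the whole surplus into $\output$. Concretely, let $g \defeq k - j > 0$. From $\imarked{\ell + k'} \reach \m' + \fmarked{k'}$ one gets $\imarked{k} \reach \imarked{g} + \m' + \fmarked{k'}$; $k$-soundness then yields $\imarked{g} + \m' + \fmarked{k'} \reach \fmarked{k}$, and since $\output$ is a sink this strips to $\imarked{g} + \m' \reach \fmarked{\ell + g}$. Now reuse $\imarked{\ell} \reach \m'$:
\[
  \imarked{k} \;=\; \imarked{k'} + \imarked{g} + \imarked{\ell}
  \;\reach\; \imarked{k'} + \imarked{g} + \m'
  \;\reach\; \imarked{k'} + \fmarked{\ell + g}.
\]
So $\imarked{k}$ reaches a state where exactly $k'$ tokens sit in $\initial$ and the remaining $k - k'$ sit in $\output$. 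From here $k'$-soundness is immediate: for any $\imarked{k'} \reach \n$ we get $\imarked{k} \reach \n + \fmarked{k - k'}$, $k$-soundness gives $\n + \fmarked{k - k'} \reach \fmarked{k}$, and stripping the sink tokens yields $\n \reach \fmarked{k'}$. The missing idea in your proposal is precisely this two-step use of the auxiliary runs --- first $\imarked{\ell + k'} \reach \m' + \fmarked{k'}$ combined with $k$-soundness to turn $\m'$ plus the $\initial$-surplus into pure $\output$-tokens, then $\imarked{\ell} \reach \m'$ to reinstate $\m'$ --- which converts the surplus from $\initial$ to $\output$ before you ever look at an arbitrary reachable $\n$.
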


\begin{proof}
  Towards a contradiction, suppose that $k > c + (\norm{T} + 2)^{\abs{T}} \cdot \max(\norm{T}, c)\cdot \abs{P} (\abs{P} +2)$.
  Consider ILP$_{\pn}^s$.
  Since $\pn$ is $k$-sound, there is a run $\imarked{k} \zreach \fmarked{k}$ and thus ILP$_s$ has a solution $\vec{\mu}$.
  By \Cref{lem:small-solution},
  we can assume that $\vec{\mu} \le \vec{c}$.
 
  By \Cref{claim:ilps}, $\imarked{\vec{\mu}[\kappa]} \zreach \fmarked{\vec{\mu}[\kappa]}$.
  By \Cref{lem:small-ell}, there exist $\ell \leq (\norm{T} + 2)^{\abs{T}} \cdot \max(\norm{T}, \vec{\mu}[\kappa]) \cdot \abs{P}(\abs{P} + 2)$ and $\m \in \N^P$
  such that $\imarked{\ell} \reach \m$ and
  $\imarked{\ell + \vec{\mu}[\kappa]} \reach \m + \fmarked{\vec{\mu}[\kappa]}$.
  Note that $\ell + \vec{\mu}[\kappa] < k$.
  Let $g \defeq k - (\ell + \vec{\mu}[\kappa]) > 0$.
  We have $\imarked{k} = \imarked{\ell + \vec{\mu}[\kappa] + g} \reach \imarked{g} + \m + \fmarked{\vec{\mu}[\kappa]}$. Since $\pn$ is $k$-sound, we have \[\imarked{g} + \m + \fmarked{\vec{\mu}[\kappa]} \reach \fmarked{\ell + \vec{\mu}[\kappa] + g}.\] Thus, $\imarked{g} + \m \reach \fmarked{\ell + g}$.
  We obtain
  \begin{multline*}
    \imarked{\ell + \vec{\mu}[\kappa] + g} \reach \imarked{\vec{\mu}[\kappa] + g} + \m \\
    {} \reach \imarked{\vec{\mu}[\kappa]} + \fmarked{\ell + g}.
  \end{multline*}
  Therefore, since $\pn$ is $k$-sound, it must be $\vec{\mu}[\kappa]$-sound (recall that tokens in $\output$ are never consumed).
   This contradicts the fact that $\pn$ is $i$-unsound for all $1 \leq i < k$.
\end{proof}

We may now prove \Cref{theorem:structuralupper}.

\begin{proof}[Proof of \Cref{theorem:structuralupper}]
  By \Cref{claim:batch}, we can assume that the input $\pn$ is a
  nonredundant workflow net. By \Cref{lem:first-sound-small}, it
  suffices to check if $\pn$ is $k$-sound for some value $k$ bounded
  exponentially in $\norm{\pn}$. First, we guess $k$, which can be
  written with polynomially many bits. Then, we test $k$-soundness in
  EXPSPACE via \Cref{cor:k-sound}.
\end{proof}

\subsection{EXPSPACE-hardness}

\begin{theorem}
  Structural soundness is EXPSPACE-hard.
\end{theorem}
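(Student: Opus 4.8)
The plan is to reduce from the $1$-soundness problem, which is EXPSPACE-hard by \Cref{thm:class-hard}. Given a workflow net $\pn = (P, T, F)$ with initial place $\initial_{\pn}$ and final place $\output_{\pn}$, I would construct in polynomial time a workflow net $\pn'$ that is $k$-sound for \emph{some} $k \geq 1$ if and only if $\pn$ is $1$-sound. The design goal is to make $\pn'$ be $k$-unsound for \emph{every} $k \geq 2$ regardless of $\pn$, while its $1$-soundness faithfully mirrors that of $\pn$, so that structural soundness of $\pn'$ collapses to $1$-soundness of $\pn'$, hence of $\pn$. Concretely, let $\pn'$ extend $\pn$ with a fresh initial place $\initial$, a fresh final place $\output$, a fresh ``escrow'' place $a$, and three new transitions defined by
\begin{align*}
  \pre{t_{\mathrm{in}}}   &\defeq \set{\initial \colon 1}, & \post{t_{\mathrm{in}}}   &\defeq \set{\initial_{\pn} \colon 1,\ a \colon 1}, \\
  \pre{t_{\mathrm{out}}}  &\defeq \set{\output_{\pn} \colon 1,\ a \colon 1}, & \post{t_{\mathrm{out}}}  &\defeq \set{\output \colon 1}, \\
  \pre{t_{\mathrm{bomb}}} &\defeq \set{a \colon 2}, & \post{t_{\mathrm{bomb}}} &\defeq \set{\output \colon 1}.
\end{align*}
One first checks that $\pn'$ is indeed a workflow net, constructible in polynomial time: no transition produces in $\initial$ or consumes from $\output$, and every node lies on a path from $\initial$ to $\output$ --- the nodes inherited from $\pn$ via $\initial \trans{t_{\mathrm{in}}} \initial_{\pn} \reach \output_{\pn} \trans{t_{\mathrm{out}}} \output$ (using that $\pn$ is a workflow net), and $a, t_{\mathrm{in}}, t_{\mathrm{out}}, t_{\mathrm{bomb}}$ via $\initial \trans{t_{\mathrm{in}}} a \trans{t_{\mathrm{bomb}}} \output$.

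The first claim is that $\pn'$ is $k$-unsound for every $k \geq 2$, irrespective of $\pn$. Since $t_{\mathrm{in}}$ is the only transition consuming from $\initial$ and the only one producing in $a$, along any run from $\imarked{k}$ it fires at most $k$ times, so at most $k$ tokens are ever deposited in $a$; as each firing of $t_{\mathrm{out}}$ (resp.\ $t_{\mathrm{bomb}}$) consumes one (resp.\ two) tokens of $a$ and produces a single token in $\output$, which is never consumed, the total number of tokens ever in $\output$ is at most $k$ minus the number of firings of $t_{\mathrm{bomb}}$. Hence from $\imarked{k}$ one fires $t_{\mathrm{in}}$ twice and then $t_{\mathrm{bomb}}$, reaching a marking from which the $\output$-count can never reach $k$, i.e.\ which cannot reach $\omarked{k}$.

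The second claim is that $\pn'$ is $1$-sound iff $\pn$ is $1$-sound. From $\imarked{1}$ the only enabled transition is $t_{\mathrm{in}}$, after which $\initial$ stays empty forever, so $a$ never holds more than one token, $t_{\mathrm{bomb}}$ is permanently dead, and $t_{\mathrm{out}}$ fires at most once; thus, besides $\imarked{1}$ itself, the reachable markings of $\pn'$ are exactly $\m + \set{a\colon 1}$ with $\set{\initial_{\pn}\colon 1} \reach \m$ in $\pn$, together with $(\m - \set{\output_{\pn}\colon 1}) + \set{\output\colon 1}$ when moreover $\m[\output_{\pn}] \geq 1$. If $\pn$ is $1$-sound, each such $\m$ reaches $\set{\output_{\pn}\colon 1}$ in $\pn$, and by the \emph{proper completion} property of $1$-sound workflow nets --- $\output_{\pn}$ is never emptied and every transition has a nonempty postset, so $\m[\output_{\pn}] \geq 1$ together with $\m \reach \set{\output_{\pn}\colon 1}$ forces $\m = \set{\output_{\pn}\colon 1}$ --- every reachable marking of $\pn'$ reaches $\omarked{1}$. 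Conversely, if $\set{\initial_{\pn}\colon 1} \reach \m \not\reach \set{\output_{\pn}\colon 1}$ in $\pn$, then $\set{\initial_{\pn}\colon 1, a\colon 1} \reach \m + \set{a\colon 1}$ in $\pn'$; any token in $\output$ must come from the unique admissible firing of $t_{\mathrm{out}}$, after which only transitions of $\pn$ are enabled, and since these have nonempty postsets and never consume from $\output_{\pn}$, reaching $\omarked{1}$ would force the embedded copy of $\pn$ to have passed through $\set{\output_{\pn}\colon 1}$, contradicting $\m \not\reach \set{\output_{\pn}\colon 1}$. Hence $\pn'$ is not $1$-sound.

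Combining the two claims, $\pn'$ is $k$-sound for some $k \geq 1$ iff it is $1$-sound iff $\pn$ is $1$-sound, which yields EXPSPACE-hardness. I expect the main obstacle to be the bookkeeping in the second claim: one must argue precisely that with a single initial token the escrow place $a$ forces the embedded copy of $\pn$ to be in its proper final marking before the unique $\output$-producing step, so that $t_{\mathrm{bomb}}$ is genuinely inert and the behaviour of $\pn'$ reflects $1$-soundness of $\pn$ --- here the proper-completion argument, together with the fact that workflow transitions have nonempty pre- and postsets, does the real work.
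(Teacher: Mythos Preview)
Your reduction is correct and follows the same high-level strategy as the paper --- reduce from $1$-soundness, force $k$-unsoundness for every $k \geq 2$, and make $1$-soundness of $\pn'$ coincide with that of $\pn$ --- but the concrete gadget differs. The paper simply adds one transition $t$ with $\pre{t}=\imarked{2}$ and $\post{t}=\fmarked{1}$ to $\pn$ itself; for even $k$ the run $t^{k/2}$ lands in $\fmarked{k/2}$, and for odd $k\geq 3$ a short counting argument shows $k$-soundness would force $\imarked{1}\reach\fmarked{\lceil k/2\rceil}$ and hence $\imarked{k}\reach\fmarked{k\lceil k/2\rceil}$, a contradiction. For $k=1$ the new transition is dead, so $\pn'$ behaves like $\pn$. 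Your construction instead wraps $\pn$ with fresh $\initial,\output$ and an escrow place $a$; this costs three extra places and three extra transitions, but the $k\geq 2$ argument becomes a clean token-budget invariant on $a$ (no parity case split), and the $k=1$ direction rests on the proper-completion property of $1$-sound nets together with the fact that workflow transitions have nonempty pre- and postsets. Both approaches are valid; the paper's gadget is minimal, while yours trades construction size for a more uniform unsoundness argument.
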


\begin{proof}
  Let $\pn$ be a workflow net. We construct a workflow net $\pn'$ which is structurally sound iff $\pn$ is $1$-sound.
  We simply add a single new transition $t$ to $\pn$ with $\pre{t} \defeq \imarked{2}$ and $\post{t} \defeq \fmarked{1}$.
  We show that $\pn'$ is $k$-unsound for every $k \geq 2$.
  Towards a contradiction, suppose it is $k$-sound for some $k \geq 2$.

  Notice that $k$ cannot be even because $\imarked{k} \trans{t^{k/2}} \fmarked{k/2}$
  and $\output$ has no outgoing arcs, and hence $\fmarked{k/2} \not \reach \fmarked{k}$.
  Thus, it is the case that $k \ge 3$ is odd and $\imarked{k} \trans{t^*} \imarked{1} + \fmarked{\floor{k/2}}$.
  Since $\pn$ is $k$-sound, $\imarked{1} \reach \fmarked{\ceil{k/2}}$.
  But that implies $\imarked{k} \reach \fmarked{k \cdot \ceil{k/2}}$.
  Note that $k \cdot \ceil{k/2} > k$ as $k \ge 3$, which yields a contradiction since $\output$
  has no outgoing arcs to get rid of the extra tokens.

  To conclude, we observe that if the initial configuration in $\pn'$ is $\imarked{1}$, then
  it behaves like $\pn$ would, since $t$ will never be enabled, \ie\ it is not quasi-live.
  Thus, $\pn'$ is structurally sound if and only if $\pn$ is $1$-sound, and EXPSPACE-hardness follows from \Cref{thm:class-hard}.
\end{proof}


\section{Characterizing the set of sound numbers}
\label{sec:soundk}
Given a workflow net $\pn$, we define the set
$\sNums{\pn} \defeq \{k \geq 1 \mid \pn \text{ is $k$-sound}\}$. That
is, $\sNums{\pn}$ contains all the numbers for which $\pn$ is sound
(except $0$ which is trivial as any workflow net is $0$-sound). This
section is dedicated to providing and computing a representation of
$\sns$.

First, let us state a simple fact about $\sns$.\footnote{A similar
observation was made, but not explicitly stated, in~\cite[Lemma 2.2
and 2.3]{TM05}.}

\begin{lemma}
  The set $\sns$ is closed under subtraction with positive results.
\end{lemma}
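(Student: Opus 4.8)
The plan is to show that if $j, k \in \sns$ with $j > k$, then $j - k \in \sns$. So fix such $j$ and $k$, and let $\ell \defeq j - k \geq 1$. We must verify that $\pn$ is $\ell$-sound: given any run $\imarked{\ell} \reach \m$, we need $\m \reach \omarked{\ell}$.

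First I would lift the run to the level of $j$ tokens. Since no transition consumes from $\initial$, starting from $\imarked{j} = \imarked{\ell} + \imarked{k}$ we can fire the same run on the $\ell$-part while leaving $k$ tokens idle in $\initial$, obtaining $\imarked{j} \reach \m + \imarked{k}$. By $j$-soundness this configuration reaches $\omarked{j} = \omarked{\ell} + \omarked{k}$. The key idea is now to use $k$-soundness to "peel off" the $\imarked{k}$ part: the portion of the run acting on those $k$ tokens, read in isolation, is a run of $\pn$ from $\imarked{k}$, hence by $k$-soundness those $k$ tokens can be routed to $\omarked{k}$ independently. The remaining $\ell$ tokens, together with the transitions not touching the $k$-part, then form a run witnessing $\m \reach \omarked{\ell}$.

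Concretely, from $\m + \imarked{k} \reach \omarked{\ell} + \omarked{k}$ we want to conclude $\m \reach \omarked{\ell}$. The clean way to do this is to argue: $\imarked{k}$ is a marking reachable from $\imarked{k}$ (trivially), so by $k$-soundness $\imarked{k} \reach \omarked{k}$; apply this run with $\m$ added as inert context (tokens in $\output$ are never consumed, and the run from $\imarked{k}$ need not interact with $\m$ — we run it on the $k$ designated tokens only), giving $\m + \imarked{k} \reach \m + \omarked{k}$. But that goes the wrong direction. Instead, the right move is monotonicity combined with $k$-soundness applied to the already-established reachable configuration: since $\imarked{k} \reach \m'$ is witnessed for $\m' \defeq \m + \imarked{k} - \imarked{\ell}$... this is getting delicate, so the actual argument I would write is the following. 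Take the run $\imarked{j} \trans{\sigma} \m + \imarked{k}$ obtained above; then $\m + \imarked{k} \reach \omarked{j}$ by $j$-soundness. Separately, $\imarked{k} \reach \omarked{k}$ by $k$-soundness (since $\imarked{k} \reach \imarked{k}$). Adding $\m$ as context to the latter (legitimate since it only adds tokens that are never required to be absent), $\m + \imarked{k} \reach \m + \omarked{k}$. Now $\m + \omarked{k}$ is reachable from $\imarked{j}$, and by $j$-soundness $\m + \omarked{k} \reach \omarked{j} = \omarked{\ell} + \omarked{k}$; since the $k$ tokens in $\output$ are never touched, the underlying run witnesses $\m \reach \omarked{\ell}$.

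The main obstacle is making the "run in context" steps rigorous: one must be careful that firing a run designed for $\imarked{k}$ in the presence of extra tokens $\m$ stays enabled (true, because enabledness is monotone in the marking) and that stripping the untouched $\omarked{k}$ tokens at the end yields a valid run from $\m$ to $\omarked{\ell}$ (true, because $\output$ has no outgoing arcs, so those $k$ tokens are genuinely passive throughout). Once these monotonicity observations are in place, the chain $\imarked{\ell} \reach \m \implies \imarked{j} \reach \m + \imarked{k} \reach \m + \omarked{k} \reach \omarked{\ell} + \omarked{k}$, read off on the $\ell$ coordinates, gives $\m \reach \omarked{\ell}$, completing the proof.
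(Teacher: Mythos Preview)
Your final argument is correct and matches the paper's approach: combine monotonicity with $k$-soundness to reach $\m + \omarked{k}$ from $\imarked{j}$, invoke $j$-soundness, and strip the $k$ inert tokens in $\output$ to conclude $\m \reach \omarked{\ell}$. (One minor aside: transitions \emph{do} consume from $\initial$ in a workflow net---what is forbidden is \emph{producing} there---but your monotonicity step does not rely on that remark anyway.)
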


\begin{proof}
  Let $g, k \in \sns$ be such that $g \gr k$.
 We show that $g - k \in \sns$.
  Since $k \in \sns$, we have
  $\imarked{g} = \imarked{k + (g - k)} \reach \fmarked{k} + \imarked{g -k}$.
  Since $N$ is $g$-sound, it must also be $(g-k)$-sound. So, $g-k \in \sns$.
\end{proof}

\begin{corollary}\label{cor:pk}
  There exist $p > 0$ and $k \in \N \cup \set{+\infty}$ such that 
  $\sns = \set{i\cdot p \mid 1 \le i < k}$.
\end{corollary}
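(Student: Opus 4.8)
The plan is to combine the previous lemma (closure under subtraction with positive results) with a minimality argument. If $\sns = \emptyset$, we may take $p = 1$ and $k = 1$, so $\set{i \cdot p \mid 1 \le i < k}$ is empty and the statement holds vacuously. Otherwise, let $p \defeq \min \sns$, which exists since $\sns$ is a nonempty subset of $\N$. I would first show that every element of $\sns$ is a multiple of $p$: given $m \in \sns$, write $m = qp + r$ with $0 \le r < p$; repeatedly applying closure under subtraction (subtracting $p$ from $m$ a total of $q$ times, each intermediate value $m - ip$ for $1 \le i \le q$ being positive and hence in $\sns$) yields that $r \in \sns$ whenever $r > 0$, contradicting minimality of $p$; hence $r = 0$ and $p \mid m$.

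Next I would show the converse inclusion in the appropriate range, i.e.\ that the set of valid multiples of $p$ forms an initial segment $\{p, 2p, \ldots\}$. Define $k \in \N \cup \set{+\infty}$ as the supremum (in $\N \cup \set{+\infty}$) of $\set{i \ge 1 \mid jp \in \sns \text{ for all } 1 \le j \le i}$; equivalently, $k$ is the least $i \ge 1$ such that $ip \notin \sns$, or $+\infty$ if no such $i$ exists. By construction $jp \in \sns$ for all $1 \le j < k$, giving $\set{i \cdot p \mid 1 \le i < k} \subseteq \sns$. For the reverse inclusion, suppose $m \in \sns$; by the previous paragraph $m = ip$ for some $i \ge 1$, and it remains to check $i < k$. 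If instead $i \ge k$, then in particular $kp \le ip = m$; but then closure under subtraction applied to $m$ and $(i-k)p$-fold iteration shows $kp \in \sns$ (subtracting $p$ repeatedly from $m$ stays positive and in $\sns$ until we reach $kp$), contradicting $kp \notin \sns$. Hence $i < k$ and $m \in \set{i \cdot p \mid 1 \le i < k}$.

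I do not anticipate a genuine obstacle here: the only subtlety is getting the edge cases right (the empty $\sns$, and the definition of $k$ as either a finite threshold or $+\infty$), and making sure the repeated-subtraction arguments only ever pass through positive values so that the hypothesis of the preceding lemma applies. The argument is essentially the standard fact that a subset of $\N$ closed under positive subtraction is a (possibly truncated) arithmetic progression $p, 2p, 3p, \ldots$ generated by its least element.
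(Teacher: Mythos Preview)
Your argument is correct; the paper states \Cref{cor:pk} as an immediate corollary of the preceding lemma without spelling out a proof, and the standard reasoning you supply (take $p = \min \sns$, use Euclidean division and repeated positive subtraction to show $p \mid m$ for every $m \in \sns$, then define $k$ as the first gap) is exactly the intended justification. There is nothing to add beyond noting that the edge cases you flag (empty $\sns$ and $k = +\infty$) are handled as you describe.
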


By the above, $\sns$ is characterized by $p$ and $k$.
We thus say that a net is \emph{$(k, p)$-sound} if and only if
$\sns = \set{i\cdot p \mid 1 \le i < k}$.
Note that $k=0$ implies $\sns = \emptyset$.
Further, $k = +\infty$ if and only if $\sns$ is infinite.
Finally, a workflow net is generalised sound iff it is $(1,+\infty)$-sound; and it is
structurally sound iff there exist $p,k \geq 1$ such that it is $(k,p)$-sound.
We show that $k$ and $p$ can be computed.
This will rely on insights from the prior sections about the smallest 
numbers for which a net is unsound or sound.

\begin{theorem}\label{thm:sound-numbers}
Given a workflow net $\pn$, the numbers $p$ and $k$ that characterize $\sns$ are bounded by $\norm{\pn}^{\poly \bigO(\abs{\pn})}$, and hence can be represented with polynomially many bits. Given $\pn$, $p'$ and $k'$, the problem of deciding whether $\pn$ is $(k', p')$-sound is in EXPSPACE. Moreover, the algorithm computes $p$ and $k$ such that $\pn$ is $(k, p)$-sound.
\end{theorem}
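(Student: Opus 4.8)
The statement bundles three things: (i) a bound on $p$ and $k$; (ii) an EXPSPACE algorithm for deciding $(k',p')$-soundness; and (iii) an EXPSPACE algorithm computing $p$ and $k$. I will prove (i) first, since the other two reduce to it together with the already-established $k$-soundness test. Throughout I may assume the input is nonredundant by \Cref{claim:batch}.

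\textbf{Bounding $k$ and $p$.} Recall from \Cref{cor:pk} that $\sns = \{i \cdot p \mid 1 \le i < k\}$ for some $p > 0$ and $k \in \N \cup \{+\infty\}$. The number $p$ is exactly the smallest element of $\sns$ if $\sns \neq \emptyset$, and the smallest element of $\sns$ is precisely the smallest $k$-sound number; by \Cref{lem:first-sound-small} this is bounded by $c + (\norm{T}+2)^{\abs{T}} \cdot \max(\norm{T},c) \cdot \abs{P}(\abs{P}+2)$, where $c$ is the \Cref{lem:small-solution} bound for $\mathrm{ILP}_{\pn}^s$, which is $\norm{\pn}^{\poly(\abs{\pn})}$. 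So $p \le \norm{\pn}^{\poly(\abs{\pn})}$ whenever $\pn$ is structurally sound (and if it is not, $\sns = \emptyset$ and there is nothing to bound). For $k$: if $k = +\infty$ we are done; otherwise $k \cdot p$ is (essentially) the smallest \emph{multiple of $p$} that fails to be sound, i.e.\ $p \cdot k \notin \sns$ while every smaller positive multiple of $p$ is in $\sns$. To bound it, I would run the rescaling trick of \Cref{lem:scale-net}: let $\pn'$ be the workflow net with $\pn$ is $ck$-sound iff $\pn'$ is $c$-sound, instantiated with the already-bounded value $p$ in place of $k$. Then $\sns[\pn'] = \{i \mid 1 \le i < k\}$, so $\pn'$ is either generalised sound ($k = +\infty$) or it has a smallest unsound number, namely $k$ itself, with $\pn'$ being strongly $i$-sound for all $i < k$ only if $\pn'$ is $i$-sound for all those $i$. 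Applying \Cref{lem:unssound-unsound} together with \Cref{thm:first-unsound-small} to $\pn'$ (which is nonredundant, or can be made so) bounds the first non-strongly-sound number, hence $k$, by $\norm{\pn'}^{\poly(\abs{\pn'})}$; since $\norm{\pn'} = \norm{\pn} + \log p = \norm{\pn}^{O(1)}$ and $\abs{\pn'} = \abs{\pn} + O(1)$, this is again $\norm{\pn}^{\poly O(\abs{\pn})}$. Hence both $p$ and $k$ fit in polynomially many bits.

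\textbf{Deciding $(k',p')$-soundness and computing $(k,p)$.} By the above, it suffices to consider $p', k'$ that are at most exponential (if the given $p'$ or $k'$ exceeds the bound, answer ``no'' immediately, or verify directly that $\sns$ is a prefix of $p\N$ with the correct period and length). To check $(k',p')$-soundness, I would verify the equality $\sns = \{i p' \mid 1 \le i < k'\}$ directly: test $j$-soundness (via \Cref{cor:k-sound}, in EXPSPACE) for each $j$ in the relevant exponential range — but naively this is exponentially many tests, each in EXPSPACE, which is fine for an EXPSPACE \emph{algorithm} only if we are careful about the outer loop not blowing the space. Better: by \Cref{cor:pk} the shape is forced, so it is enough to (a) check $p'$-soundness; (b) check that $\pn$ is $jp'$-sound for $1 \le j < k'$ and not $k'p'$-sound (when $k' < \infty$) — and crucially, by the subtraction-closure of $\sns$, checking $j$-soundness for the \emph{single} smallest multiples suffices to pin down $p$, while checking one failure pins down $k$. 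Concretely: compute the smallest sound number $p$ by iterating $j = 1, 2, \ldots$ up to the exponential bound, testing $j$-soundness in EXPSPACE, stopping at the first success (if none, $\sns = \emptyset$, output $k=0$); then via the rescaled net $\pn'$ of \Cref{lem:scale-net} with parameter $p$, find the smallest $i$ with $\pn'$ $i$-unsound by the same iteration (if none up to the bound, $k = +\infty$), giving $k$. Finally, $(k',p')$-soundness holds iff $(p',k') = (p,k)$. Each individual $j$-soundness test is in EXPSPACE; the iteration variable $j$ needs only polynomially many bits; so the whole procedure runs in EXPSPACE.

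\textbf{Main obstacle.} The delicate point is bounding $k$ (the \emph{length} of the sound prefix) rather than $p$. The bound on $p$ is a direct appeal to \Cref{lem:first-sound-small}, but $k$ measures when soundness \emph{stops}, and the only handle on ``stops'' is \Cref{thm:first-unsound-small}, which is phrased for \emph{strong} soundness and for nonredundant nets. So the real work is: (1) passing through the rescaled net $\pn'$ so that the sound prefix becomes $\{1, \ldots, k-1\}$; (2) re-establishing nonredundancy of $\pn'$ (or arguing the rescaling construction of \Cref{lem:scale-net} preserves it, which it essentially does since the only new places $\initial', \output'$ are trivially on a path and the old places inherit reachability up to the factor $p$ — one must check the saturation argument still goes through, or simply apply \Cref{claim:batch} again); and (3) noting that ``not $k$-sound'' for $\pn'$ means ``not strongly $k$-sound'' is not automatic, so one should instead invoke \Cref{lem:unssound-unsound}: if $\pn'$ is $i$-sound for all $i < k$ but not $k$-sound, then in particular it is not strongly $k$-sound (since unsound $\Rightarrow$ not strongly sound), and moreover if it were strongly $i$-sound for all $i < k$ we could apply \Cref{thm:first-unsound-small} — but it need not be strongly sound. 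The clean fix is: the net stops being \emph{strongly} sound at some $k_0 \le \norm{\pn'}^{\poly}$ by \Cref{thm:first-unsound-small}, and by \Cref{lem:unssound-unsound} this forces a genuine unsound number $k' \le k_0 + (\norm{T}+2)^{\abs{T}}\max(\norm{T},k_0)\abs{P}(\abs{P}+2)$, which is still exponential; hence $k \le k'$, exponential. This chaining of the strong-soundness bounds into ordinary-soundness bounds is the part that needs the most care, but every ingredient is already available.
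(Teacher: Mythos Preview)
Your proposal is correct and follows essentially the same route as the paper: bound $p$ via \Cref{lem:first-sound-small}, then rescale by $p$ using \Cref{lem:scale-net} so that the sound numbers of $\pn'$ become $\{1,\dots,k-1\}$, and bound $k$ via the machinery of \Cref{sec:generalised}; finally iterate $j$-soundness checks (each in EXPSPACE by \Cref{cor:k-sound}) over the exponential range. You are in fact more explicit than the paper about two points it glosses over: that \Cref{thm:first-unsound-small} speaks of \emph{strong} soundness and must be chained with \Cref{lem:unssound-unsound} to yield an ordinary unsound number (the paper tacitly reuses the argument of \Cref{theorem:pspaceupper}), and that nonredundancy of the rescaled net $\pn'$ must be re-established (which is immediate, or one reapplies \Cref{claim:batch}).
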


\begin{proof}
Consider a workflow net $\pn$. By \cref{claim:batch}, we can assume that $\pn$ is nonredundant.
We will compute for which $p$ and $k$ the net $\pn$ is $(k,p)$-sound.
  By \cref{lem:first-sound-small},
  if $\sns \neq \emptyset$, then there exists
  $G \leq \norm{\pn}^{\poly \bigO(\abs{\pn})}$
  such that $\pn$ is $\ell$-sound for some $\ell \leq G$.
  By \cref{cor:k-sound}, it is possible to check $1$-soundness,
  $2$-soundness, \ldots, $G$-soundness in EXPSPACE.
  Thus, in EXPSPACE, we can identify the smallest
  $p$ such that $\pn$ is $p$-sound.

  It remains to compute $k$.
  Using \cref{lem:scale-net},
  we construct a net $\pn'$ which is
  $c$-sound if and only if $\pn$ is $cp$-sound for all $c > 0$.
  Thus, the smallest number $c$ for 
  which $\pn'$ is not $c$-sound is the smallest
  $c$ such that $\pn$ is not $cp$-sound.
  By \cref{thm:first-unsound-small},
  if $\sNums{\pn'} \neq \N \setminus \{0\}$ then there exists $G' \leq \norm{\pn}^{\poly \bigO(\abs{\pn})}$
  such that $\pn'$ is $c$-unsound for some $c \leq G'$.
  Thus, it suffices to check $1$-soundness, $2$-soundness, \ldots,
  $G'$-soundness to identify whether $k = +\infty$, or to compute the largest
  $k \in \N$ such that $\pn$ is $pk$-sound.
  By \cref{cor:k-sound}, $k$ can be computed in EXPSPACE.
\end{proof}

\section{Conclusion}
\label{sec:conlusion}
In this work, we settled, after around two decades, the complexity of
the main decision problems concerning workflow nets: $k$-soundness,
classical soundness, generalised soundness and structural
soundness. The first three are EXPSPACE-complete, while the latter is
PSPACE-complete and hence surprisingly simpler. We have further
characterised the set of sound numbers of workflow nets: they have a
specific shape that can be computed with exponential space.

As further work, we intend to study extensions of these problems in
the context of Petri nets. For example, a natural extension of
generalised soundness asks, given markings $\m$ and $\m'$, whether for
every $k \in \N$, every marking reachable from $k \cdot \m$ may lead
to $k \cdot \m'$. Contrary to workflow nets, a Petri net that
satisfies this property needs not to be bounded.

\balance
\bibliography{references}

\begin{thebibliography}{10}

\bibitem{barkaoui1998structural}
Kamel Barkaoui and Laure Petrucci.
\newblock Structural analysis of workflow nets with shared resources.
\newblock In {\em Proc.\ Workflow Management: Net-based Concepts, Models,
  Techniques and Tools ({WFM})}, volume 98/7, pages 82--95, 1998.

\bibitem{Blo20}
Michael Blondin.
\newblock The {ABC}s of {P}etri net reachability relaxations.
\newblock {\em ACM SIGLOG News}, 7(3), 2020.
\newblock \href {http://dx.doi.org/10.1145/3436980.3436984}
  {\path{doi:10.1145/3436980.3436984}}.

\bibitem{bouziane1997cyclic}
Zakaria Bouziane and Alain Finkel.
\newblock Cyclic petri net reachability sets are semi-linear effectively
  constructible.
\newblock In {\em Second International Workshop on Verification of Infinite
  State Systems ({INFINITY})}, volume~9 of {\em Electronic Notes in Theoretical
  Computer Science}, pages 15--24, 1997.
\newblock \href {http://dx.doi.org/10.1016/S1571-0661(05)80423-2}
  {\path{doi:10.1016/S1571-0661(05)80423-2}}.

\bibitem{CLM76}
E.~Cardoza, Richard~J. Lipton, and Albert~R. Meyer.
\newblock Exponential space complete problems for {P}etri nets and commutative
  semigroups: Preliminary report.
\newblock In {\em Proc.\ $\text{8}^\text{th}$ Annual {ACM} Symposium on Theory
  of Computing ({STOC})}, pages 50--54, 1976.
\newblock \href {http://dx.doi.org/10.1145/800113.803630}
  {\path{doi:10.1145/800113.803630}}.

\bibitem{CEP93}
Allan Cheng, Javier Esparza, and Jens Palsberg.
\newblock Complexity results for 1-safe nets.
\newblock In {\em Proc.\ $\text{13}^\text{th}$ Conference on Foundations of
  Software Technology and Theoretical Computer Science ({FSTTCS})}, volume 761,
  pages 326--337, 1993.
\newblock \href {http://dx.doi.org/10.1007/3-540-57529-4_66}
  {\path{doi:10.1007/3-540-57529-4_66}}.

\bibitem{CH16}
Dmitry Chistikov and Christoph Haase.
\newblock {The Taming of the Semi-Linear Set}.
\newblock In {\em Proc.\ $\text{43}^\text{rd}$ International Colloquium on
  Automata, Languages, and Programming ({ICALP})}, volume~55, pages
  128:1--128:13, 2016.
\newblock \href {http://dx.doi.org/10.4230/LIPIcs.ICALP.2016.128}
  {\path{doi:10.4230/LIPIcs.ICALP.2016.128}}.

\bibitem{TM05}
Ferucio~Lauren\c{t}iu \c{T}iplea and Dan~Cristian Marinescu.
\newblock Structural soundness of workflow nets is decidable.
\newblock {\em Information Processing Letters}, 96(2):54--58, 2005.
\newblock \href {http://dx.doi.org/10.1016/j.ipl.2005.06.002}
  {\path{doi:10.1016/j.ipl.2005.06.002}}.

\bibitem{CO21}
Wojciech Czerwinski and Lukasz Orlikowski.
\newblock Reachability in vector addition systems is {A}ckermann-complete.
\newblock In {\em Proc.\ $\text{62}^{nd}$ Annual IEEE Symposium on Foundations
  of Computer Science ({FOCS})}, 2021.
\newblock to appear.

\bibitem{V99}
Wil~Van der Aalst.
\newblock Interorganizational workflows: An approach based on message sequence
  charts and {P}etri nets.
\newblock {\em Systems Analysis, Modelling, Simulation}, 34(3):335--367, 1999.

\bibitem{DE95}
J\"org Desel and Javier Esparza.
\newblock {\em Free Choice {P}etri Nets}.
\newblock Cambridge University Press, 1995.
\newblock \href {http://dx.doi.org/10.1017/CBO9780511526558}
  {\path{doi:10.1017/CBO9780511526558}}.

\bibitem{eisenbrand2019proximity}
Friedrich Eisenbrand and Robert Weismantel.
\newblock Proximity results and faster algorithms for integer programming using
  the {S}teinitz lemma.
\newblock {\em {ACM} Transactions on Algorithms ({TALG})}, 16(5):5:1--5:14,
  2019.
\newblock \href {http://dx.doi.org/10.1145/3340322}
  {\path{doi:10.1145/3340322}}.

\bibitem{FH15}
Est{\'{\i}}baliz Fraca and Serge Haddad.
\newblock Complexity analysis of continuous {P}etri nets.
\newblock {\em Fundamenta Informaticae}, 137(1):1--28, 2015.
\newblock \href {http://dx.doi.org/10.3233/FI-2015-1168}
  {\path{doi:10.3233/FI-2015-1168}}.

\bibitem{HaaseH14}
Christoph Haase and Simon Halfon.
\newblock Integer vector addition systems with states.
\newblock In {\em Proc.\ $\text{8}^\text{th}$ International Workshop on
  Reachability Problems ({RP}) 2014}, volume 8762, pages 112--124, 2014.
\newblock \href {http://dx.doi.org/10.1007/978-3-319-11439-2_9}
  {\path{doi:10.1007/978-3-319-11439-2_9}}.

\bibitem{Hack76}
Michel Henri~Th\'{e}odore Hack.
\newblock {\em Decidability questions for {P}etri Nets}.
\newblock PhD thesis, Massachusetts Institute of Technology, 1976.

\bibitem{Ler21}
J{\'{e}}r{\^{o}}me Leroux.
\newblock The reachability problem for {P}etri nets is not primitive recursive.
\newblock In {\em Proc.\ $\text{62}^{nd}$ Annual IEEE Symposium on Foundations
  of Computer Science ({FOCS})}, 2021.
\newblock to appear.

\bibitem{LS19}
J{\'{e}}r{\^{o}}me Leroux and Sylvain Schmitz.
\newblock Reachability in vector addition systems is primitive-recursive in
  fixed dimension.
\newblock In {\em Proc.\ $\text{34}^\text{th}$ Symposium on Logic in Computer
  Science ({LICS})}, 2019.

\bibitem{May81}
Ernst~W. Mayr.
\newblock An algorithm for the general {P}etri net reachability problem.
\newblock In {\em Proc.\ $\text{13}^\text{th}$ Symposium on Theory of Computing
  ({STOC})}, pages 238--246, 1981.
\newblock \href {http://dx.doi.org/10.1145/800076.802477}
  {\path{doi:10.1145/800076.802477}}.

\bibitem{MM82}
Ernst~W Mayr and Albert~R Meyer.
\newblock The complexity of the word problems for commutative semigroups and
  polynomial ideals.
\newblock {\em Advances in Mathematics}, 46(3):305--329, 1982.
\newblock \href {http://dx.doi.org/10.1016/0001-8708(82)90048-2}
  {\path{doi:10.1016/0001-8708(82)90048-2}}.

\bibitem{ConservativePN14}
Ernst~W. Mayr and Jeremias Weihmann.
\newblock A framework for classical {P}etri net problems: Conservative {P}etri
  nets as an application.
\newblock In {\em Proc.\ $\text{35}^\text{th}$ International Conference on
  Application and Theory of Petri Nets and Concurrency ({PETRI NETS})}, pages
  314--333, 2014.
\newblock \href {http://dx.doi.org/10.1007/978-3-319-07734-5_17}
  {\path{doi:10.1007/978-3-319-07734-5_17}}.

\bibitem{Rac78}
Charles Rackoff.
\newblock The covering and boundedness problems for vector addition systems.
\newblock {\em Theoretical Computer Science}, 6:223--231, 1978.
\newblock \href {http://dx.doi.org/10.1016/0304-3975(78)90036-1}
  {\path{doi:10.1016/0304-3975(78)90036-1}}.

\bibitem{steinitz1913bedingt}
Ernst Steinitz.
\newblock Bedingt konvergente {R}eihen und konvexe {S}ysteme.
\newblock 1913.

\bibitem{AL97}
Wil M.~P. van~der Aalst.
\newblock Verification of workflow nets.
\newblock In {\em Proc.\ $\text{18}^\text{th}$ International Conference on
  Application and Theory of Petri Nets ({ICATPN})}, volume 1248, pages
  407--426, 1997.
\newblock \href {http://dx.doi.org/10.1007/3-540-63139-9_48}
  {\path{doi:10.1007/3-540-63139-9_48}}.

\bibitem{van1998application}
Wil M.~P. van~der Aalst.
\newblock The application of {P}etri nets to workflow management.
\newblock {\em Journal of Circuits, Systems, and Computers}, 8(1):21--66, 1998.
\newblock \href {http://dx.doi.org/10.1142/S0218126698000043}
  {\path{doi:10.1142/S0218126698000043}}.

\bibitem{vdAS11}
Wil M.~P. van~der Aalst and Christian Stahl.
\newblock {\em Modeling Business Processes - {A} {P}etri Net-Oriented
  Approach}.
\newblock Cooperative Information Systems series. {MIT} Press, 2011.

\bibitem{AalstHHSVVW11}
Wil M.~P. van~der Aalst, Kees~M. van Hee, Arthur H.~M. ter Hofstede, Natalia
  Sidorova, H.~M.~W. Verbeek, Marc Voorhoeve, and Moe~Thandar Wynn.
\newblock Soundness of workflow nets: classification, decidability, and
  analysis.
\newblock {\em Formal Aspects of Computing}, 23(3):333--363, 2011.
\newblock \href {http://dx.doi.org/10.1007/s00165-010-0161-4}
  {\path{doi:10.1007/s00165-010-0161-4}}.

\bibitem{V96}
Wil~MP Van~der Aalst.
\newblock Structural characterizations of sound workflow nets.
\newblock {\em Computing science reports}, 96(23):18--22, 1996.

\bibitem{DMVWA05}
Boudewijn~F. van Dongen, Ana Karla~A. de~Medeiros, H.~M.~W. Verbeek, A.~J.
  M.~M. Weijters, and Wil M.~P. van~der Aalst.
\newblock The prom framework: {A} new era in process mining tool support.
\newblock In {\em Proc.\ $\text{26}^\text{th}$ International Conference on
  Applications and Theory of Petri Nets ({ICATPN})}, volume 3536, pages
  444--454, 2005.
\newblock \href {http://dx.doi.org/10.1007/11494744_25}
  {\path{doi:10.1007/11494744_25}}.

\bibitem{van2007verifying}
Kees~M. van Hee, Olivia Oanea, Natalia Sidorova, and Marc Voorhoeve.
\newblock Verifying generalized soundness of workflow nets.
\newblock In {\em Proc.\ $\text{6}^\text{th}$ International Andrei Ershov
  Memorial Conference on Perspectives of Systems Informatics ({PSI})}, pages
  235--247, 2006.
\newblock \href {http://dx.doi.org/10.1007/978-3-540-70881-0_21}
  {\path{doi:10.1007/978-3-540-70881-0_21}}.

\bibitem{van2003soundness}
Kees~M. van Hee, Natalia Sidorova, and Marc Voorhoeve.
\newblock Soundness and separability of workflow nets in the stepwise
  refinement approach.
\newblock In {\em Proc.\ $\text{24}^\text{th}$ International Conference on
  Applications and Theory of Petri Nets 2003 ({ICATPN})}, volume 2679, pages
  337--356, 2003.
\newblock \href {http://dx.doi.org/10.1007/3-540-44919-1_22}
  {\path{doi:10.1007/3-540-44919-1_22}}.

\bibitem{HSV04}
Kees~M. van Hee, Natalia Sidorova, and Marc Voorhoeve.
\newblock Generalised soundness of workflow nets is decidable.
\newblock In {\em Proc.\ $\text{25}^\text{th}$ International Conference on
  Applications and Theory of Petri Nets ({ICATPN})}, pages 197--215, 2004.
\newblock \href {http://dx.doi.org/10.1007/978-3-540-27793-4_12}
  {\path{doi:10.1007/978-3-540-27793-4_12}}.

\bibitem{GS78}
Joachim von~zur Gathen and Malte Sieveking.
\newblock A bound on solutions of linear integer equalities and inequalities.
\newblock {\em Proceedings of the American Mathematical Society},
  72(1):155--158, 1978.

\end{thebibliography}

\clearpage
\appendix
\section*{Appendix}
\label{sec:appendix}
\subsection*{Missing proofs of \Cref{sec:classical}}

\propCharacOneSound*

\begin{proof}
  Let $\pn = (P, T, F)$.

  $\Rightarrow$) For the sake of contradiction, suppose that
  $\pn_{sc}$ is unbounded. There exist markings $\m < \m'$ such that
  $\imarked{i} \trans{\pi} \m \trans{\pi'} \m'$ in $\pn_{sc}$. Let us
  assume, without loss of generality, that no marking repeats along
  the run. There are two cases to consider: either $\pi \pi'$ contains
  $t_{sc}$, or not.

  Let us argue that the first case cannot hold. For the sake of
  contradiction, assume it does. Let $\sigma t_{sc}$ be the shortest
  prefix of $\pi \pi'$ such that $\imarked{1} \trans{\sigma}
  \omarked{1} + \n \trans{t_{sc}} \imarked{1} + \n$ in $\pn$. If $\n =
  \vec{0}$, then we obtain a contradiction as no marking
  repeats. Otherwise, by $1$-soundness, we must have $\n \reach
  \vec{0}$, which is a contradiction as $\post{t} \neq \vec{0}$ for
  every $t \in T$.

  Thus, $\pi \pi'$ only contains transitions from $T$, which means we
  can reason about $\pn$ (rather than $\pn_{sc}$). By $1$-soundness,
  we have $\imarked{1} \reach \m \reach \omarked{1}$ in $\pn$. Since
  $\imarked{1} \reach \m'$ in $\pn$, altogether this yields
  \[
  \imarked{1} \reach \m'
  = \m + (\m' - \m) \reach \omarked{1} + (\m' - \m).
  \]
  By $1$-soundness, this means that $(\m' - \m) \reach \vec{0}$, which
  is impossible. Consequently, $\pn_{sc}$ is bounded from
  $\imarked{1}$.

  It remains to argue that $t_{sc}$ is live from $\imarked{1}$. Let
  $\imarked{1} \trans{\rho} \m$ in $\pn_{sc}$, where no marking
  repeats. We can assume that $t_{sc}$ does not appear in $\rho$ as it
  would mean that $\imarked{1}$ is repeated. Hence, $\imarked{1}
  \trans{\rho} \m$ in $\pn$. By $1$-soundness, we have $\m \reach
  \omarked{1}$, from which $t_{sc}$ is enabled as desired.

  $\Leftarrow$) Let $\imarked{1} \reach \m$ in $\pn$ (and so in
  $\pn_{sc}$). Since $t_{sc}$ is live from $\imarked{1}$, we have $\m
  \reach \omarked{1} + \n$ for some $\n \in \N^P$. If $\n > \vec{0}$,
  then we obtain $\imarked{1} \reach \omarked{1} + \n \trans{t_{sc}}
  \imarked{1} + \n$ which violates boundedness. Thus, $\n = \vec{0}$,
  and hence $\m \reach \omarked{1}$ as desired.
\end{proof}

\thmClassHard*

\begin{proof}
  Recall that in the proof within the main text, we have shown the
  implication from right to left of \Cref{eq:main}. It remains to
  prove the other direction.
    
  $\Rightarrow$) Suppose that $\m \reach \m'$. First, we prove that $\pn'$ is $1$-sound. Consider a configuration $\set{\initial \colon 1} \trans{\rho} \n$ for some run $\rho$. We consider cases depending on $\rho$ and $\n$.
    
    \emph{Case 0}: suppose that $t_{\mathrm{hard}}$ does not occur in $\rho$. If $\n = \set{\initial \colon 1}$ then $\n \trans{t_{\mathrm{simple}}t_{\mathrm{simple}2}} \set{\output \colon 1}$. Otherwise, either $\n = \set{\output \colon 1}$ or
    $$
    \n' = \set{p_{\mathrm{simple}} \colon 1} + \sum_{p \in P} \set{p \colon \norm{\pn}, \overline{p} \colon \norm{\pn}} \trans{\rho'} \n,
    $$
    where $\rho' \in T_1^*$. Since $T_1$ is reversible, we also have $\n \reach \n'$. This concludes this case as $\n' \trans{t_{\mathrm{simple}2}} \set{\output \colon 1}$. Notice that in the remaining cases the transitions $t_{\mathrm{simple}}$ and $t_{\mathrm{simple}2}$ can never be fired.
    
    \emph{Case 1}: suppose that $t_{\mathrm{hard}}$ occurs in $\rho$ but $t_{\mathrm{start}}$ does not. It is easy to see that $\rho \in t_{\mathrm{hard}}T_2^*$. Since $T_2$ is reversible and by \Cref{lem:count:gadget}~(1) , it is the case that
    \[
    \n \reach \set{s \colon 1, c \colon 1} \reach \set{f \colon 1, c \colon 1, b \colon c_n} + \sum_{\overline{p} \in \overline{P}} \set{\overline{p} \colon c_n}.
    \]
    Note that $t_{\mathrm{start}}$ is fireable from the latter configuration, and thus we can extend $\rho$ with that transition.
    We will discuss how to proceed from there in the next case.
    
    \emph{Case 2}: suppose that $t_{\mathrm{start}}$ occurs in $\rho$, but $t_{\mathrm{end}}$ does not.
    We have $\rho \in t_{\mathrm{hard}}T_2^*t_{\mathrm{start}}(T_1 \cup \set{t_{\m}, t_{\m'}, t_{\m'}^{-1}, t_{\mathrm{reach}}, t_{\mathrm{reach}}^{-1}})^*$. By \Cref{lem:count:gadget}~(2), the transition $t_{\mathrm{start}}$ in $\rho$ is fired between the configurations
\begin{multline*}
     \set{f \colon 1, c \colon 1, b \colon c_n} + \sum_{\overline{p} \in \overline{P}} \set{\overline{p} \colon c_n} \trans{t_{\mathrm{start}}} {} \\[-12pt] \set{p_{\mathrm{start}} \colon 1, b \colon c_n} + \sum_{\overline{p} \in \overline{P}} \set{\overline{p} \colon c_n}.
\end{multline*}
    If $\n$ is the latter configuration, then only $t_{\m}$ can be fired. Otherwise, the transition $t_{\m}$ is never available again. In any case, since the transitions $T_1 \cup \set{t_{\m'}, t_{\m'}^{-1}, t_{\mathrm{reach}}, t_{\mathrm{reach}}^{-1}}$ are reversible,
    $\n \reach \set{p_{\mathrm{inProgress}} \colon 1, b \colon c_n} + \m + \sum_{\overline{p} \in \overline{P}} \set{\overline{p} \colon c_n - \m[p]}$.
    
    Since $\m \reach \m'$, and by \Cref{lem:rev:en}, we know that
 \begin{multline*}
   \set{p_{\mathrm{inProgress}} \colon 1, b \colon c_n} + \m + \sum_{\overline{p} \in \overline{P}} \set{\overline{p} \colon c_n - \m[p]} \reach {} \\
   \set{p_{\mathrm{inProgress}} \colon 1, b \colon c_n} + \m' + \sum_{\overline{p} \in \overline{P}} \set{\overline{p} \colon c_n - \m'[p]} \trans{t_{\m'}} {} \\
   \set{p_{\mathrm{cover}} \colon 1, b \colon c_n} + \sum_{\overline{p} \in \overline{P}} \set{\overline{p} \colon c_n} \trans{t_{\mathrm{reach}}} \\
   \set{\heart{f} \colon 1, \heart{c} \colon 1, b \colon c_n} + \sum_{\overline{p} \in \overline{P}} \set{\overline{p} \colon c_n}.
 \end{multline*}
 By \Cref{lem:count:gadget}~(1), we get
 \begin{align*}
  \set{\heart{f} \colon 1, \heart{c} \colon 1, b \colon c_n} + \sum_{\overline{p} \in \overline{P}} \set{\overline{p} \colon c_n} \reach \set{s^\heartsuit \colon 1, c^\heartsuit \colon 1}.
 \end{align*}
 Then, by firing $t_{\mathrm{end}}$, we reach $\set{\output \colon 1}$ as required.
    
 \emph{Case 3}: suppose that $t_{\mathrm{end}}$ occurs in $\rho$. We divide the run into the fragments where $t_{\mathrm{start}}$ and $t_{\mathrm{end}}$ were used for the first time. It is the case that $\rho = \rho_1 t_{\mathrm{start}}t_{\m} \rho_2 t_{\mathrm{end}} \rho_3$, where 
\begin{multline*}
    \set{\initial \colon 1} \trans{\rho_1t_{\mathrm{start}}t_{\m}} \\ \set{p_{\mathrm{inProgress}} \colon 1, b \colon c_n} + \m + \sum_{\overline{p} \in \overline{P}} \set{\overline{p} \colon c_n - \m[p]} = \n_1,
\end{multline*}
    and $\rho_2$ consists only of transitions in 
\begin{align*}
    T_1  \cup T_3 \cup \set{t_{\m'}, t_{\m'}^{-1},t_{\mathrm{reach}}, t_{\mathrm{reach}}^{-1}}.
\end{align*}
    
    From \Cref{lem:count:gadget} (2) and definition of transitions in $T'$ for every reachable configuration $\set{\initial \colon 1} \reach \n'$ in $\pn'$
\begin{align}\label{eq:state}
 \n'[f] + \n'[p_{\mathrm{inProgress}}] + \n'[p_{\mathrm{cover}}] = 1.
 \end{align}
    In other words, there can be a token only in one of the three places.
    Recall that
    transitions in $T_1 \cup \set{t_{\m}}$ can be fired only if $\n'[p_{\mathrm{inProgress}}] = 1$ (as $\n'[p_{\mathrm{inProgress}}] = \n'[p_{\mathrm{canFire}}]$); and transitions in $\set{t_{\m'}^{-1},t_{\mathrm{reach}}}$ only if $\n'[p_{\mathrm{cover}}] = 1$.
    
    Consider the Petri net $\pn'' = (P'',T'',F'')$, which is as $(P',T', F')$ but with places reduced to $P'' \defeq P^\heartsuit$ (recall that $b = b^\heartsuit$) and transitions $T''$ the same as $T'$ projected onto $P''$. Notice that among transitions used in $\rho_2$ only transitions in $T_3 \cup \set{t_{\mathrm{reach}}, t_{\mathrm{reach}}^{-1}}$ have impact on $P''$. 
    Slightly abusing the notation, we keep the names of transitions in $\pn''$ (from $\pn'$); and similarly for configurations. Since $\n_1 \trans{\rho_2} \vec{v}$ such that $\vec{v} \ge \set{s^\heartsuit \colon 1, c^\heartsuit \colon 1}$ (as $t_{\mathrm{end}}$ can be fired afterwards), we get in $\pn''$:
    \[
    \set{b \colon c_n} \trans{\rho_2} \vec{v}.
    \]
    Thus by \cref{lem:count:gadget} (4) and \cref{eq:state} we get that
    \begin{align*}
     \rho_2 \in T_{1+}^* t_{\mathrm{reach}}T_3^*\left(t_{\mathrm{reach}}^{-1} T_{1+}^* t_{\mathrm{reach}}T_3^* \right)^*.
    \end{align*}
    where $T_{1+} =  T_1 \cup \set{t_{\m} \cup t_{\m}^{-1}}$.
    Consider the configurations in $\pn''$ after firing transitions in $\rho_2$ starting from $\n_1$. We claim that every time after $t_{\mathrm{reach}}$ was fired the configuration is $\vec{v}' = \set{b \colon c_n, f^\heartsuit \colon 1, c^\heartsuit \colon 1}$.
    Indeed, after the first time this is because $\set{b \colon c_n} \trans{t_{\mathrm{reach}}} \vec{v}'$ (recall that transitions in $T_{1+}$ have no impact on $\pn'')$.
    For the remaining cases, notice that between $t_{\mathrm{reach}}$ and $t_{\mathrm{reach}}^{-1}$ only transitions from $T_3$ are fired.
    By \Cref{lem:count:gadget}~(1 and 2) after firing $t_{\mathrm{reach}}^{-1}$ the configuration has to be $\set{b \colon c_n}$. Since transitions in $T_{1+}$ have no impact on $\pn''$ we are ready to conclude the proof.
    
    Let $\rho_{2} = \rho_{pre}t_{\mathrm{reach}}\rho_{suf}$ such that $\rho_{suf}$ does not contain $t_{\mathrm{reach}}$. We know that
    $$
    \n_1 \trans{\rho_{pre}} \set{b \colon c_n} \trans{t_{\mathrm{reach}}} \vec{v}' \trans{\rho_{suf}} \vec{v}
    $$
    in $\pn''$, and recall that $\vec{v} \ge \set{s^\heartsuit \colon 1, c^\heartsuit \colon 1}$. By \Cref{lem:count:gadget}~(3), we get $\vec{v} = \set{s^\heartsuit \colon 1, c^\heartsuit \colon 1}$ in $\pn''$.
    
    Let us analyse the vector $\vec{v}$ in $\pn'$. By \cref{eq:state} we know that $\vec{v}[p_{\mathrm{inProgress}}] = \vec{v}[p_{\mathrm{cover}}] = 0$.
    Note that
    \begin{align}
    \vec{v}[b] = \vec{v}[p] + \vec{v}[\overline{p}],
    \end{align}
    for all $p \in P$. This is easy to see since every transition preserves this equality for all reachable configuration.
    Thus $\vec{v}[p] = \vec{v}[\overline{p}] = 0$ for all $p \in P$. Notice that this concludes the proof as $\vec{v} \trans{t_{\mathrm{end}}} \set{\output \colon 1}$ and thus $\rho_3$ is an empty run and $\n = \set{\output \colon 1}$.
\end{proof}

\subsection*{Missing proofs of \Cref{sec:stuff}}

\lemSmallSolution*

\begin{proof}
Let $x_1, \ldots, x_n$ be the variables of $G$.
We define a $(3m \times (m+n))$-ILP $G'$ by slightly modifying $G$. For every inequality in the original ILP $G$, we add one fresh variable. We denote them $y_1,\ldots, y_m$. Now, recall that the inequalities in $G$ are of the form:
$\sum_{i = 1}^{n} \mat{A}[j,i] \cdot x_i \ge \vec{b}[j]$
for $j \in [1..m]$.
The ILP $G'$ is defined with the same inequalities, plus $m$ new equalities (recall that this requires $2m$ inequalities):
$\sum_{i = 1}^{n} \mat{A}[j,i] \cdot x_i - y_j = 0$ for $j \in [1..m]$.

Notice that, in solutions for $G'$, the variables $y_j$ are uniquely
determined by the valuation of $x_1, \ldots, x_n$. For convenience, we
will write $\vec{\mu}[x_i], \vec{\mu}'[y_j]$ when referring to the
components of solutions. For every $\vec{\mu} \in \nsolutions{G}$,
there is a unique $\vec{\mu}' \in \solutions{G'}$ such that
$\vec{\mu}'[x_i] = \vec{\mu}[x_i]$ for all
$i \in [1..n]$. Thus, since $\vec{b} \geq \vec{0}$, we have
$\nsolutions{G'} = \set{\vec{\mu}' \mid \vec{\mu} \in \solutions{G}}$.

We define $c$ as the constant from \Cref{lem:small-ilp} for $G'$. Now, let $\vec{\mu} \in \nsolutions{G}$ and let $\vec{\mu}' \in \nsolutions{G'}$ be its corresponding solution. By \Cref{lem:small-ilp}, there exists $\vec{\nu}' \in \nsolutions{G'}$ such that $\vec{\nu}' \le \vec{\mu}'$ and $\vec{\nu}' \le \vec{c}$. We define $\vec{\nu} \in \nsolutions{G}$ as the solution corresponding to $\vec{\nu}'$. It is clear that $\vec{\nu} \le \vec{\mu}$ and $\vec{\nu} \le \vec{c}$. For the remaining part, fix $j \in [1..m]$. Recall that $\vec{\nu}'[y_j] = \sum_{i = 1}^{n} \mat{A}[j,i] \cdot \vec{\nu}'[x_i]$ and $\vec{\mu}'[y_j] = \sum_{i = 1}^{n} \mat{A}[j,i] \cdot \vec{\mu}'[x_i]$. Thus, \[\sum_{i = 1}^{n} \mat{A}[j,i] \cdot \vec{\nu}[x_i] \le \sum_{i = 1}^{n} \mat{A}[j,i] \cdot \vec{\mu}[x_i],\] which concludes the proof.
\end{proof}

\lemExtendedSteinitz*

\begin{proof}
Let $c \defeq \norm{\vec{z}}$. We can assume that $b, c > 0$, as otherwise the proof follows immediately from \cref{lemma:steinitz}. We use the notation $\multiset{\cdot}$ for multisets, \eg\ $\multiset{a, a, b}$ contains two occurrences of $a$ and one occurrence of $b$. Let $V \defeq \multiset{\vec{x}_0,\ldots, \vec{x}_n}$ and let $W \defeq \multiset{-\vec{z}/c,\ldots,-\vec{z}/c}$ be $c$ copies of the same vector.
    Clearly, $\sum_{\vec{x} \in V \cup W} \vec{x} = \vec{0}$ and $\norm{\vec{x}} \le b$ for all $\vec{x} \in V'$.
    
    Consider the multiset of vectors $X \defeq \multiset{\vec{x}/b \mid \vec{x} \in V \cup W}$, \ie rescaled vectors from $V \cup W$. 
    By \cref{lemma:steinitz} we can order the vectors in $X$: $\vec{a}_0',\ldots, \vec{a}_{n+c}'$, so that
    \begin{gather*}
        \norm{\sum_{j=0}^i \vec{a}_{j}'} \leq d \text{ for all $i \in [0..n+c]$}.
    \end{gather*}
By scaling the vectors back, we get an order $\vec{a}_0,\ldots, \vec{a}_{n+c}$ on vectors in $V \cup W$ such that
    \begin{gather}\label{eq:order}
        \norm{\sum_{j=0}^i \vec{a}_{j}} \leq bd \text{ for all $i \in [0..n+c]$}.
    \end{gather}
Let $0 \le s_0 < s_1 < \ldots < s_n \le n+c$ be indices such that $V = \multiset{\vec{a}_{s_0},\ldots, \vec{a}_{s_n}}$. Fix some $i \in [0..n]$. Note that the number of remaining vectors is
\[
\abs{\multiset{\vec{a}_j \mid j \not \in \set{s_0,\ldots,s_n}, j \le i}} = s_i - i
\]
for all $i \in [0..n]$. By \eqref{eq:order},
\begin{gather}\label{eq:almost}
\norm{\sum_{j=0}^i \vec{a}_{s_j} - \frac{s_i-i}{c}\vec{z}} = \norm{\sum_{j=0}^{s_i} \vec{a}_j} \le bd.
\end{gather}
Thus, by defining $c_i \defeq \frac{s_i - i}{c}$, we get $0 \le c_0 \le c_1 \le \ldots \le c_n$. To conclude the lemma, it remains to show that we can reorder $\vec{a}_{s_0}, \ldots, \vec{a}_{s_n}$ so that $\vec{a}_{s_0} = \vec{x}_0$. Indeed, suppose that $\ell \in [0..n]$ is an index such that $\vec{a}_{s_\ell} = \vec{x}_0$. By~\eqref{eq:almost} and the triangle inequality
\begin{gather*}
\norm{\sum_{j=0}^i \vec{a}_{s_j} - \frac{s_i-i}{c}\vec{z} + \vec{a}_{s_\ell} - \vec{a}_{s_0}} \le b(d+2).
\end{gather*}
Therefore, by swapping $\vec{a}_{s_\ell}$ and $\vec{a}_{s_0}$ we obtain the desired permutation.
\end{proof}

\subsection*{Missing proofs of \Cref{sec:generalised}}

\propBatch*

\begin{proof}
  Redundancy can be checked with a saturation algorithm. We give a
  short proof that relies on the decidability of continuous Petri nets
  (which internally uses such a procedure).
  
  Let $\pn = (P, T, F)$. The continuous semantics of $\pn$ allows to
  scale transitions by nonnegative coefficients, and for markings to
  hold nonnegative rational values. More formally, in this context, a
  \emph{marking} is a vector $\m \in \Qpos^P$. Given $\lambda \in
  \Qpos$ and $t \in T$, we say that $\lambda t$ is \emph{enabled} in
  $\m$ if $\m - \lambda \pre{t} \geq \vec{0}$. If $\lambda t$ is
  enabled, then firing it leads to $\m' \defeq \m - \lambda \pre{t} +
  \lambda \post{t} = \m + \lambda \effect{t}$, which is denoted by $\m
  \ctrans{\lambda t} \m'$, or simply $\m \ctrans{} \m'$.

  We write $\m \creach \m'$ if $\m'$ can be reached in zero, one or
  several such steps from $\m$.

  Deciding continuous coverability (and, in fact, continuous
  reachability) can be done in polynomial time~\cite{FH15}. Thus, one
  can test the following in polynomial time, where $p \in P$:
  \begin{align}
    \exists \m' \in \Qpos^P : \m \creach \m' \land \m'[p] >
    0.\label{eq:cont:cov}
  \end{align}
  As $\m \creach \m'$ holds iff $k \m \reach k \m'$ holds for some $k
  > 0$, \eqref{eq:cont:cov} is equivalent to
  \begin{align}
    \exists k > 0, \m' \in \N^P : k \m \reach \m' \land \m'[p] >
    0.\label{eq:cont:cov2}
  \end{align}
  Hence, to test whether a place $p$ is nonredundant, it suffices to
  check \eqref{eq:cont:cov2} in polynomial time with $\m \defeq
  \imarked{1}$.
\end{proof}

\thmPSpaceHard*

\begin{proof}
  Recall the workflow net $\pn'$ constructed in the proof within the
  main text. We must show that $\pn'$ is generalised sound if and only
  if $\m \reach \m'$ in $\pn$. The implication from left to right has
  already been proven in the main text.
  
  For the converse implication, suppose that $\m \reach \m'$. Fix some $k$ and suppose $\imarked{k} \reach \vec{v}$. Notice that the transitions are defined in such a way that for every reachable configuration $\vec{v}$, the invariant $ck = \vec{v}[\initial] \cdot c + \sum_{p \in P \cup \set{r}} \vec{v}[p] + \vec{v}[\output] \cdot c$ holds. Thus, by repeatedly firing transitions $t_\initial$ and $t_p$, all tokens but those in $\output$ can be moved to $r$, i.e.\ \[\vec{v} \reach \marked{r}{(k-\vec{v}[\output]) \cdot c} + \fmarked{\vec{v}[\output]}.\] From there, to reach $\fmarked{k}$, it suffices to repeat $(k-\vec{v}[\output])$ times the following: fire $t_{\m}$; fire the run that witnesses $\m \reach \m'$; and fire $t_{\m'}$.
%
%
%
\end{proof}

\end{document}